\author{\authorblockN{ Derrick Wing Kwan Ng,~\IEEEmembership{Member,~IEEE,} Ernest S. Lo,~\IEEEmembership{Member,~IEEE,} \\
and Robert Schober,~\IEEEmembership{Fellow,~IEEE}}\thanks{This paper has been presented  in part at IEEE PIMRC 2013 \cite{CN:Kwan_PIMRC2013}. Derrick Wing Kwan Ng and Robert Schober are with the Institute for Digital Communications (IDC),
Friedrich-Alexander-University Erlangen-N\"urnberg (FAU), Germany (email:\{kwan, schober\}@lnt.de). The authors are also with the University of British Columbia, Vancouver, Canada. \newline
Ernest S. Lo is with the  Centre Tecnol\`{o}gic de Telecomunicacions de Catalunya - Hong Kong (CTTC-HK) (email: ernest.lo@cttc.hk). Derrick Wing Kwan Ng was supported by the Qatar National Research Fund (QNRF), under project NPRP 5-401-2-161. Robert Schober was supported by the AvH Professorship Program of the Alexander von Humboldt Foundation.}
    \\
}
\title{Multi-Objective Resource Allocation for Secure  Communication in Cognitive Radio Networks with Wireless Information and Power Transfer}
\newtheorem{Thm}{Theorem}
\newtheorem{Lem}{Lemma}
\newtheorem{Cor}{Corollary}
\newtheorem{Prob}{Problem}
\newtheorem{T-Prob}{Transformed Problem}
\newtheorem{proposition}{Proposition}
\newtheorem{Remark}{Remark}
\DeclareMathOperator{\diag}{\mathrm{diag}}
\DeclareMathOperator{\bigo}{\cal O}
\DeclareMathOperator{\Tr}{\mathrm{Tr}}
\DeclareMathOperator{\Rank}{\mathrm{Rank}}
\DeclareMathOperator{\maxo}{\mathrm{maximize}}
\DeclareMathOperator{\mino}{\mathrm{minimize}}
\newcommand{\abs}[1]{\lvert#1\rvert}
\newcommand{\norm}[1]{\lVert#1\rVert}
\newcolumntype{L}{>{\arraybackslash\raggedright}m{3.5cm}}
\newcommand{\textoverline}[1]{$\overline{\mbox{#1}}$}
\definecolor{kwan}{rgb}{0,0,1}
\newcommand{\Rmnum}[1]{\expandafter\@slowromancap\romannumeral #1@}
\begin{document}
\maketitle

\begin{abstract}
 In this paper, we study resource allocation for multiuser multiple-input single-output (MISO) secondary communication systems with multiple system design objectives. We consider  cognitive radio (CR) networks where the secondary receivers  are able to harvest energy from the radio frequency when they are idle. The secondary system provides
simultaneous wireless power and secure information transfer to the secondary receivers.  { We propose a multi-objective optimization framework for the design of a Pareto optimal resource allocation algorithm  based on the weighted Tchebycheff approach. In particular, the  algorithm design incorporates three important system design objectives: total transmit power minimization, energy harvesting efficiency maximization, and interference power leakage-to-transmit power ratio minimization.  The proposed framework
 takes into account a quality of service (QoS) requirement regarding communication secrecy in the secondary system  and the imperfection of the  channel state information (CSI) of potential eavesdroppers (idle secondary receivers and primary receivers) at the secondary transmitter. The proposed framework  includes total harvested power  maximization and interference power leakage minimization as special cases.}   The adopted multi-objective optimization problem is non-convex and is recast as a convex optimization problem via  semidefinite programming (SDP) relaxation. It is shown that the global optimal solution of the original problem can be constructed by exploiting both the primal and the dual optimal solutions of the SDP relaxed problem. Besides, two suboptimal resource allocation schemes for the case when the solution of the dual problem  is unavailable for constructing the optimal solution are proposed.  Numerical results not only demonstrate the close-to-optimal performance of the proposed suboptimal schemes, but also unveil
an interesting trade-off between the considered conflicting system design  objectives.
\end{abstract}
\begin{keywords} Physical (PHY) layer security, cognitive radio (CR),  wireless information and power transfer,   robust beamforming.
\end{keywords}

\section{Introduction}
\label{sect1}
 \IEEEPARstart{T}{he} explosive growth of the demand for  ubiquitous, secure, and high data rate
wireless communication services  has led to a tremendous solicitation of limited radio resources such as bandwidth and energy. In practice, fixed spectrum allocation has been implemented for resource sharing in traditional wireless communication systems. Although interference can be avoided by assigning different wireless services to different licensed frequency bands, such a fixed spectrum allocation strategy may result in spectrum under utilization. In fact, the Federal Communications Commission (FCC) has reported that $70$ percent of the allocated spectrum in the United States is not fully utilized, cf. \cite{Report:CR}.  As a result, cognitive radio (CR) has emerged as one of the most
promising solutions to improve  spectrum efficiency \cite{JR:CR_overview}. In particular,  CR enables a secondary system to access the spectrum of a primary  system  as long as the
interference from the secondary system does not  severely degrade the quality of service
(QoS) of the primary system  \cite{Report:CR}\nocite{JR:CR_tutorials,JR:CR_overview,JR:CR_WSN,JR:CR_WSN2,JR:CR_sensing1,JR:CR_sensing2,CN:CR_beamforming_1,JR:CR_beamforming_1}--\cite{JR:CR_beamforming_2}.
CR is not only applicable to traditional cellular networks, but also has the potential to improve the performance of wireless sensor networks \cite{JR:CR_WSN,JR:CR_WSN2}.
 In \cite{JR:CR_sensing1} and \cite{JR:CR_sensing2}, cooperative spectrum sensing  and the sensing-throughput trade-off were studied  for single antenna systems, respectively. In \cite{CN:CR_beamforming_1}, joint beamforming and power control was studied for transmit power minimization in  multiple-transmit-antenna CR downlink systems.  In \cite{JR:CR_beamforming_1} and \cite{JR:CR_beamforming_2}, by taking into account the imperfectness of channel state information, robust beamforming designs were proposed for  CR networks with single and multiple secondary users, respectively.  Furthermore, a detailed  performance analysis of transmit antenna selection  in multiple-antenna networks was presented in  \cite{JR:antenna_selection} for multi-relay networks and then extended to CR relay networks in \cite{JR:antenna_selection2}. However, since the transmit precoding strategies in  \cite{JR:antenna_selection}  and \cite{JR:antenna_selection2} are not optimized, they do not fully exploit the available degrees of freedom in the network for maximizing the system performance.

Although the current spectrum scarcity  may be partially overcome by CR technology, wireless communication devices, such as wireless sensors, are often powered by batteries  with limited
energy storage capacity. This constitutes another major bottleneck for providing communication services and extending
the lifetime of networks. On the other hand, energy harvesting is envisioned to provide a perpetual energy source to facilitate  self-sustainability  of power-constrained communication devices \cite{Powercast}\nocite{JR:harvesting_single_user,JR:harvesting_single_user,Krikidis2014,Ding2014,JR:SWIPT_mag,CN:WIPT_fundamental,
CN:Shannon_meets_tesla,JR:WIPT_bruno,JR:WIP_receiver,JR:Kai_bin}--\cite{JR:WIPT_fullpaper}.  In addition to conventional renewable energy sources such as biomass, wind, and solar,     wireless  power transfer  has emerged as a new option for prolonging the lifetime of  battery-powered wireless devices.
Specifically, the transmitter can transfer energy to the receivers via  electromagnetic waves in radio frequency
(RF).  Nowadays, energy harvesting circuits are able
to harvest microwatt to milliwatt of power over the range
of several meters for a transmit power of $1$ Watt and a
carrier frequency of less than $1$ GHz \cite{Powercast}. Thus, RF energy
can be a viable energy source for devices with low-power
consumption, e.g. wireless sensors \cite{Krikidis2014,Ding2014}.  The integration of RF energy harvesting capabilities into communication systems  provides the possibility of simultaneous wireless information and power transfer (SWIPT) \cite{Krikidis2014}--\cite{JR:WIPT_fullpaper}. As a result, in addition to the traditional QoS constraints such as communication reliability, efficient energy transfer is  expected to play an important role as a new QoS requirement. This new requirement introduces a paradigm shift in the design of both  resource allocation algorithms and  transceiver signal processing.  In \cite{CN:WIPT_fundamental} and \cite{CN:Shannon_meets_tesla}, the fundamental trade-off between the maximum achievable data rate and energy transfer was studied for a noisy single-user communication channel and a pair of  noisy coupled-inductor circuits, respectively.
  Then, in \cite{JR:WIPT_bruno}, the authors extended the trade-off study to a two-user multiple-antenna transceiver system. In \cite{JR:WIP_receiver}, the authors proposed  \emph{separated receivers} for SWIPT to facilitate low-complexity receiver design;  these receivers can be built by using  off-the-shelf components. In \cite{JR:Kai_bin}, different resource allocation algorithms were designed for broadband far field wireless systems with  SWIPT. In  \cite{JR:WIPT_fullpaper},  the authors showed that the energy efficiency of a communication system can be improved by  RF energy harvesting at the receivers. Nevertheless,  resource allocation algorithms maximizing the energy harvesting efficiency of  SWIPT CR systems have not been reported in the literature yet. Besides,
  two conflicting system design objectives arise naturally for a CR network providing SWIPT service to the secondary receivers  in practice. On the one hand, the secondary  transmitter should transmit with high power to facilitate energy transfer to the energy harvesting receivers. On the other hand, the secondary transmitter should transmit with low power to cause minimal interference  at the primary receivers. Thus,  considering these  conflicting system design objectives,  the single objective resource allocation algorithms proposed in \cite{CN:WIPT_fundamental}--\cite{JR:WIPT_fullpaper} may not be applicable in SWIPT CR networks.  Furthermore, transmitting with high signal power may also cause substantial  information leakage and high vulnerability to eavesdropping.

Recently, physical (PHY) layer security
 has attracted much attention in the research community  for preventing eavesdropping
\cite{JR:EE-sec}--\nocite{JR:EE_secrecy,JR:ken_artifical_noise,JR:CR_phy,JR:Kwan_secure_imperfect,
JR:rui_zhang,JR:Kwan_SEC_DAS,CN:kwan_vicky,CN:massive_MIMO_security_SWIPT}\cite{JR:PHY_CR}.  In \cite{JR:EE-sec}, the authors
proposed a beamforming scheme for maximization of the energy efficiency of secure communication systems. In \cite{JR:EE_secrecy} and \cite{JR:ken_artifical_noise},  the spatial degrees of freedom
offered by multiple antennas were used to degrade the channel of the eavesdroppers deliberately   via artificial noise transmission.  Thereby,
 communication secrecy was guaranteed at the expense of allocating  a large portion of the transmit power to artificial noise generation.  In \cite{JR:CR_phy}, the authors addressed the power allocation problem in CR secondary systems with PHY layer security provisioning.  However, the resource allocation algorithm designs in \cite{JR:EE-sec}--\cite{JR:CR_phy} cannot be directly extended to the case of
 of RF energy harvesting due to the differences in the underlying system models. On the other hand, \cite{CN:Kwan_PIMRC2013} and  \cite{JR:rui_zhang} studied different resource allocation algorithms for providing secure communication in systems with
separated information and energy harvesting receivers. Yet, the assumption of having perfect
channel state information (CSI) of the energy harvesting receivers in \cite{CN:Kwan_PIMRC2013} and  \cite{JR:rui_zhang} may  be too optimistic if the energy harvesting receivers do not interact with the transmitter periodically. In \cite{JR:Kwan_secure_imperfect}, the case
where the transmitter has only imperfect CSI of the energy harvesting receivers was considered
and a robust beamforming design was proposed to minimize the total transmit power of a
system with simultaneous energy and secure information transfer. In \cite{JR:Kwan_SEC_DAS}, the authors studied resource allocation algorithm design for secure information and renewable green
energy transfer to mobile receivers in distributed antenna
communication systems. In \cite{CN:kwan_vicky} and  \cite{CN:massive_MIMO_security_SWIPT}, beamforming algorithm  design and secrecy outage capacity was studied  for multiple-antenna potential eavesdropper  and passive eavesdroppers, respectively.
 However, the beamforming algorithms developed in  \cite{JR:Kwan_secure_imperfect}--\cite{CN:kwan_vicky} may not be applicable in CR networks. Furthermore, in \cite{JR:PHY_CR}, the secrecy
outage probability of CR networks was investigated in the presence of a passive eavesdropper.

Form the above discussions, we conclude that for
CR communication systems providing simultaneous wireless energy transfer and secure communication
services, conflicting system design objectives such as total transmit power minimization,
energy harvesting efficiency maximization, and interference power leakage-to-transmit power
ratio minimization play an important role for resource allocation. However, the problem formulations
in \cite{CN:WIPT_fundamental}--\cite{JR:PHY_CR} focus on a single system design objective and cannot be used to study the trade-off between the aforementioned conflicting design goals.  In this paper, we address the above issues and the contributions of the paper are summarized as follows:
\begin{itemize}
\item Different from our previous work in \cite{JR:Kwan_secure_imperfect}, in this paper, we propose a new non-convex multi-objective optimization problem with the aim to jointly minimize the total transmit power, maximize the energy harvesting efficiency, and minimize the interference power leakage-to-transmit power ratio for CR networks with SWIPT. The problem formulation takes into account the  imperfectness of the CSI of potential eavesdroppers (idle secondary receivers) and primary receivers in secondary multiuser multiple-input single-output (MISO) systems with RF energy harvesting receivers.  The solution of the optimization problem leads to a set of Pareto optimal resource allocation policies.
    \item The considered non-convex optimization problem  is recast as a convex optimization problem via  semidefinite programming (SDP) relaxation. We show that the global optimal solution of the original problem can be constructed by exploiting both the primal and the dual optimal solutions of the SDP relaxed problem.
        \item The obtained solution structure  is also applicable to the multi-objective optimization of the  total harvested power, the interference power leakage, and the total transmit power.
    \item Two suboptimal resource allocation schemes are proposed for the case when the solution of the dual problem of the SDP relaxed problem is unavailable for construction of the optimal solution.
\end{itemize}
 Our results unveil a non-trivial trade-off between the considered system design objectives which can be summarized as follows: (1) A resource allocation policy  minimizing the total transmit power also leads to a low total interference power leakage in general; (2) energy harvesting efficiency maximization and transmit power minimization  are conflicting system design objectives; (3) the maximum energy harvesting efficiency is achieved at the expense of high interference power leakage and high transmit power.

\section{System Model}
\label{sect:OFDMA_AF_network_model}
In this section,  we first introduce the notation used in this paper. Then, we present the adopted CR downlink channel model for  secure communication with SWIPT.
\subsection{Notation}
 We use boldface capital and lower case letters to denote matrices and vectors, respectively. For a square-matrix $\mathbf{S}$,
$\Tr(\mathbf{S})$ denotes the  trace of matrix
$\mathbf{S}$.  $\mathbf{S}\succ \mathbf{0}$ and $\mathbf{S}\succeq \mathbf{0}$ indicate that
$\mathbf{S}$ is a positive definite and a positive  semidefinite matrix, respectively. $(\mathbf{S})^H$
and $\Rank(\mathbf{S})$ denote the conjugate transpose and the
rank of matrix $\mathbf{S}$, respectively. $\mathbf{I}_{N}$
denotes an $N\times N$ identity matrix.  $\mathbb{C}^{N\times M}$ and $\mathbb{R}^{N\times M}$ denote the space of $N\times M$ matrices with complex and real entries, respectively. $\mathbb{H}^N$ represents the set of all $N$-by-$N$ complex Hermitian matrices.  $\abs{\cdot}$ and $\norm{\cdot}$ denote the absolute value of a complex scalar and the
Euclidean norm of a matrix/vector, respectively. $\diag(x_1, \cdots, x_K)$ denotes a diagonal matrix with the diagonal elements given by $\{x_1, \cdots, x_K\}$.    $\mathrm{Re}(\cdot)$ extracts the real part of a complex-valued input. The distribution of a circularly symmetric complex Gaussian (CSCG)
vector with mean vector $\mathbf{x}$ and covariance matrix
$\mathbf{\Sigma}$  is denoted by ${\cal
CN}(\mathbf{x},\mathbf{\Sigma})$, and $\sim$ means ``distributed
as".   ${\cal E}\{\cdot\}$ represents  statistical expectation. For a real valued  continuous function $f(\cdot)$,
 $\nabla_{\mathbf{X}} f(\mathbf{X})$ denotes the gradient of $f(\cdot)$ with respect to matrix $\mathbf{X}$. $[x]^+$ stands for $\max\{0,x\}$.

\subsection{Downlink Channel Model}
{We consider a CR secondary network for short distance downlink communication.  There are one secondary transmitter equipped with $N_{\mathrm{T}}>1$ antennas,  $K$ secondary receivers, one primary transmitter\footnote{We note that the considered system model can be extended to include multiple primary transmitters at the expense of a more involved notation.}, and $J$ primary receivers. The primary transmitter, primary receivers, and secondary receivers are single-antenna devices that share the same spectrum, cf.  Figure  \ref{fig:system_model}. We assume $N_{\mathrm{T}}>J$ to enable efficient communication in the CR secondary network. }  The secondary transmitter  provides  SWIPT services to the secondary receivers while the primary transmitter provides broadcast services to the primary receivers. In practice, the CR secondary operator may rent spectrum from the primary operator under the condition that the interference leakage from the secondary system to the primary system is properly controlled.   We assume that the secondary receivers are ultra-low power devices, such as wireless sensors\footnote{\label{label:fn}  The power consumption of typical  sensor micro-controllers, such as the Texas Instruments micro-controller: MSP430F2274 \cite{MCU}, is in the order of microwatt in the idle mode.
As a result,  wireless power transfer is a viable option for the energy supply of wireless sensors.}, which either harvest energy or decode information from the received radio signals in each time instant, but are not able to perform both concurrently due to hardware limitations \cite{JR:WIP_receiver,JR:WIPT_fullpaper}. In each scheduling slot, the secondary transmitter not only conveys
information to a given secondary receiver, but also  transfers energy\footnote{We adopt the normalized energy unit  Joule-per-second in this paper.  Therefore,
the terms ``power" and ``energy" are used interchangeably.} to the remaining $K-1$  idle secondary receivers to extend their lifetimes. \begin{figure}[t]
 \centering
\includegraphics[width=3.5in]{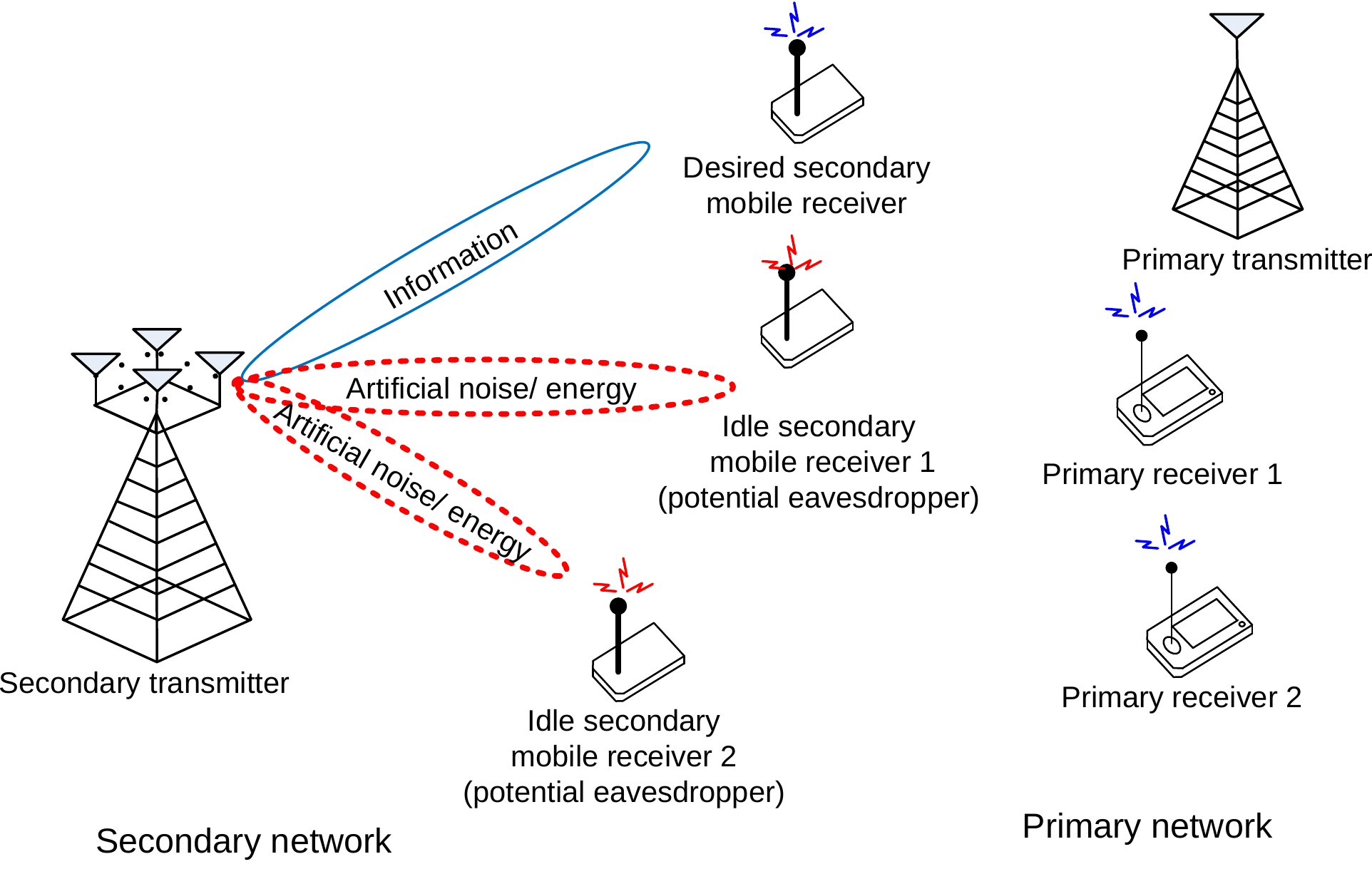}
 \caption{A CR network where $K=3$ secondary receivers (1 active and 2 idle receivers) share the same spectrum with $J=2$ primary receivers. The secondary transmitter conveys  information and  transfers power(/energy) to the $K$ secondary receivers  simultaneously. The red dotted ellipsoids  illustrate the dual use of artificial noise for providing security and facilitating efficient energy transfer to the secondary receivers.} \label{fig:system_model}\vspace*{-4mm}
\end{figure}
We note that only one secondary receiver is selected for information transfer to reduce the multiple access interference leakage to the primary receivers \cite{JR:CR_phy}. On the other hand,
the information signal of the desired secondary receiver is overheard by both the $K-1$  idle secondary receivers and the $J$ primary receivers. Hence, if the  idle secondary receivers and the primary receivers are malicious, they may  eavesdrop the signal of the selected secondary receiver, which has to be taken into account for resource allocation design to provide communication secrecy in the secondary network.  Thus, for guaranteeing communication security,  the secondary transmitter has to employ a resource allocation algorithm that accounts for this unfavourable  scenario and treat both idle secondary receivers and primary receivers as potential eavesdroppers. We assume a frequency flat slow fading channel. The received signals at the desired secondary receiver,  idle secondary receiver $k\in\{1,\ldots,K-1\}$, and   primary receiver $j\in\{1,\ldots,J\}$ are given by, respectively,
\begin{eqnarray}
\hspace*{-4mm}y\hspace*{-3mm}&=&\hspace*{-3mm}\mathbf{h}^{H} \mathbf{x}+q \sqrt{P^{\mathrm{PU}}} d+z, \\
\hspace*{-4mm}y_{k}^\mathrm{Idle}\hspace*{-3mm}&=&\hspace*{-3mm}\mathbf{g}_{k}^{H} \mathbf{x}+ f_k\sqrt{P^{\mathrm{PU}}}d+ z_k ,\,  \forall k\in\{1,\ldots,K-1\},\,\mbox{and}\\
\hspace*{-4mm}y_{j}^{\mathrm{PU}}\hspace*{-3mm}&=&\hspace*{-3mm}\mathbf{l}_{j}^{H} \mathbf{x}+t_j\sqrt{P^{\mathrm{PU}}} d + z_{\mathrm{PU}_j} ,\,\,  \forall j\in\{1,\ldots,J\}.
\end{eqnarray}
Here, $\mathbf{x}\in\mathbb{C}^{ N_T \times 1}$  denotes the symbol vector  transmitted by the secondary transmitter.
$\mathbf{h}^{H}\in\mathbb{C}^{1\times N_T}$, $\mathbf{g}_{k}^{H} \in\mathbb{C}^{1\times N_T}$, and  $\mathbf{l}^{H}_j\in\mathbb{C}^{1\times N_T}$   are the channel
vectors between the  secondary transmitter and the desired secondary receiver, idle receiver (potential eavesdropper) $k$, and primary receiver (potential eavesdropper) $j$, respectively.  $P^{\mathrm{PU}}$ and $d\in\mathbb{C}^{1\times 1}$ are the transmit power of the primary transmitter and the information signal intended for the primary receivers, respectively. $q \in\mathbb{C}^{1\times 1}$, $f_k \in\mathbb{C}^{1\times 1}$, and $t_j \in\mathbb{C}^{1\times 1}$ are the communication channels between the primary transmitter and desired secondary receiver, idle secondary receiver $k$, and primary receiver $j$, respectively.    $z_{\mathrm{PU}_j}$  includes the joint effects of  the thermal noise and the signal processing noise at primary receiver $j$ and is modelled as additive white Gaussian
noise (AWGN) with zero mean and variance\footnote{We assume that the noise characteristics are
identical for all primary receivers due to similar hardware architectures. } $\sigma_{\mathrm{PU}}^2$. $z$ and  $z_k$ include the joint effects  of thermal noise and signal processing
noise at the desired secondary receiver and idle secondary receiver $k$, respectively, and are modeled as AWGN.  Besides, the equivalent noises at the desired and idle secondary receivers, which capture the joint effect of  the received interference from the primary transmitter, i.e., $q \sqrt{P^{\mathrm{PU}}} d$ and $f_k\sqrt{P^{\mathrm{PU}}}d$,   thermal noise, and signal processing noise, are also modeled as AWGN  with zero mean and variances $\sigma_{\mathrm{z}}^2$ and $\sigma_{\mathrm{z}_k}^2$, respectively.
  \begin{Remark}
In this paper, we assume that  the primary
network is a legacy system  and the primary transmitter does not actively participate in
transmit power control. Furthermore, we assume that the primary transmitter transmits a Gaussian signal and we focus on quasi-static fading channels such that all channel gains remain constant within the coherence time of the secondary system.  These assumptions justify modelling the interference from the primary transmitter to the secondary receivers as  additive white Gaussian noise with different powers for different secondary receivers. This model has been commonly adopted in the literature for resource allocation algorithm design \cite{JR:CR_beamforming_1,JR:CR_phy,CN:karama}.
\end{Remark}

To guarantee secure communication and to facilitate an efficient power transfer in the secondary system, artificial noise
is generated at the secondary transmitter and is transmitted concurrently with the information signal. In particular, the transmit signal vector
\begin{eqnarray}
\mathbf{x}=\underbrace{\mathbf{w}s}_{\mbox{desired signal}}+\underbrace{\mathbf{v}}_{\mbox{artificial noise}}
\end{eqnarray}
is adopted at the secondary transmitter, where $s\in\mathbb{C}^{1\times 1}$ and $\mathbf{w}\in\mathbb{C}^{N_{\mathrm{T}}\times 1}$  are the information bearing signal for the desired receiver and the corresponding  beamforming vector, respectively. We assume without loss of generality that  ${\cal E}\{\abs{s}^2\}=1$. $\mathbf{v}\in\mathbb{C}^{N_{\mathrm{T}}\times 1}$ is the artificial noise vector generated by the secondary transmitter to combat the potential eavesdroppers. Specifically, $\mathbf{v}$ is modeled as a complex Gaussian random vector with mean $\mathbf{0}$ and covariance matrix
$\mathbf{V}\in \mathbb{H}^{N_{\mathrm{T}}}, \mathbf{V}\succeq \mathbf{0}$. We note that $\mathbf{w}$ and $\mathbf{V}$ have to be optimized such that the transmit signal of the secondary transmitter does not interfere severely with the primary users.

\section{Resource Allocation Problem Formulation}\label{sect:forumlation}
In this section, we define different quality of service (QoS) measures for  the secondary CR network for providing wireless power transfer and secure communication to the secondary receivers while protecting the primary receivers. Then, we
formulate three resource allocation problems reflecting three different system design objectives. For convenience, we define the following matrices:
$\mathbf{H}=\mathbf{h}\mathbf{h}^H$, $\mathbf{G}_k=\mathbf{g}_k\mathbf{g}_k^H, k\in\{1,\ldots,K-1\}$, and $\mathbf{L}_j=\mathbf{l}_j\mathbf{l}_j^H, j\in\{1,\ldots,J\}$.
\subsection{System Achievable Rate and Secrecy Rate}
\label{subsect:Instaneous_Mutual_information}
Given perfect CSI at the
receiver, the achievable rate (bit/s/Hz) between the secondary transmitter and the desired secondary receiver
is given by
\begin{eqnarray}\label{eqn:cap}
C=\log_2\Big(1+\Gamma\Big)\,\,\,\,
\mbox{and}\,\,\,\,\Gamma=\frac{\mathbf{w}^H\mathbf{H}\mathbf{w}}
{\Tr(\mathbf{H}\mathbf{V})+\sigma_{\mathrm{z}}^2} ,
\end{eqnarray}
where $\Gamma$ is the received signal-to-interference-plus-noise ratio (SINR) at the desired secondary receiver. On the other hand, the achievable rates between the secondary transmitter and idle secondary receiver $k\in \{1,\ldots,K-1\}$ and primary receiver $j\in \{1,\ldots,J\}$ are given  by
\begin{eqnarray}\label{eqn:cap-eavesdropper}
\hspace*{-3mm}C^{\mathrm{Idle}}_k\hspace*{-3mm}&=&\hspace*{-3mm}\log_2\Big(1+\Gamma^{\mathrm{Idle}}_k\Big),\,
\,\,\,\,\Gamma^{\mathrm{Idle}}_k= \frac{\mathbf{w}^H\mathbf{G}_k\mathbf{w}}{\Tr(\mathbf{G}_k\mathbf{V})+\sigma_{\mathrm{z}_k}^2},\,\mbox{and} \\
\hspace*{-3mm}C^{\mathrm{PU}}_j\hspace*{-3mm}&=&\hspace*{-3mm}\log_2\Big(1+\Gamma^{\mathrm{PU}}_j\Big),\,
\,\,\,\,\Gamma^{\mathrm{PU}}_j= \frac{\mathbf{w}^H\mathbf{L}_j\mathbf{w}}{\Tr(\mathbf{L}_k\mathbf{V})+\sigma_{\mathrm{PU}}^2},   \end{eqnarray} respectively,
where  $\Gamma^{\mathrm{Idle}}_k$ and  $\Gamma^{\mathrm{PU}}_j$ are the received SINRs at idle secondary receiver $k$ and primary receiver $j$, respectively. Since both the idle secondary receivers and the primary receivers are potential eavesdroppers,   the maximum achievable secrecy rate between the secondary transmitter
and the desired receiver is given by
\begin{eqnarray}\label{eqn:secrecy_cap}
C_\mathrm{sec}=\Big[C - \underset{\underset{j\in\{1,\ldots,J\}}{k\in\{1,\ldots,K-1\}}}{\max} \{C^{\mathrm{Idle}}_k,C^{\mathrm{PU}}_j\}\Big]^+.
\end{eqnarray}
In the literature, the secrecy rate, i.e.,  (\ref{eqn:secrecy_cap}),  is commonly adopted as a QoS requirement for system design to ensure secure communication \cite{JR:EE_secrecy,JR:ken_artifical_noise}. In particular, $C_{\mathrm{sec}}$  quantifies
the maximum achievable data rate at which a transmitter can reliably send  secret
information to the intended receiver such that the eavesdroppers are unable to decode the received signal \cite{Report:Wire_tap} even if the  eavesdroppers have unbounded computational capability\footnote{
We note that, in practice,  the  malicious secondary idle  receivers and primary receivers do not have to decode the eavesdropped information in real time. They can act as information collectors to sample the received signals and store them  for future decoding by other energy unlimited and powerful computational devices.}.

\subsection{Energy Harvesting Efficiency}
In the considered CR system, the  secondary receivers harvest energy from the RF when they are idle to extend their lifetimes\footnote{ In fact, nowadays many sensors  are equipped with hybrid energy harvesters for harvesting energy from different energy sources such as solar and thermal-energy \cite{CN:hybrid_energy_source,JR:hybrid_energy_source}. Thus, the harvested energy from the radio frequency may be used
as a supplement for supporting the energy consumption of the secondary receivers.}.  The energy harvesting efficiency plays an important role in the system design of such secondary networks and has to be considered in the
problem formulation. To this end, we define the energy harvesting efficiency in the secondary system as the ratio of the total power harvested  at the idle secondary receivers and the total power  radiated by the secondary transmitter. The total amount of energy harvested by the $K-1$ idle secondary receivers is modeled as
\begin{eqnarray}\label{eqn:harvested_power}
\mathrm{HP}(\mathbf{w},\mathbf{V})=\sum_{k=1}^{K-1}\eta_k\Big(\mathbf{w}^H\mathbf{G}_k\mathbf{w}
+\Tr(\mathbf{G}_k\mathbf{V})\Big),
\end{eqnarray}
  where $\eta_k$ is a constant, $0\le\eta_k\le 1,\forall k$, which represents the RF energy conversion efficiency of idle secondary receiver $k$ in converting
the received radio signal to electrical energy. { We note that the power received at the secondary receivers  from the primary transmitter and the AWGN power are neglected in (\ref{eqn:harvested_power}) as we focus on the worst-case scenario for  robust energy harvesting system design.}

On the other hand, the power radiated by the transmitter can be expressed
as
\begin{eqnarray}
 \label{eqn:power_consumption}\mathrm{TP}(\mathbf{w},\mathbf{V})=\norm{\mathbf{w}}^2+\Tr(\mathbf{V}).
\end{eqnarray}
Thus, the energy harvesting efficiency of the considered secondary CR system is
given by
\begin{eqnarray}\label{eqn:ehe}
\eta_{\mathrm{eff}}(\mathbf{w},\mathbf{V})=\frac{\mathrm{HP}(\mathbf{w},\mathbf{V})}{\mathrm{TP}(\mathbf{w},\mathbf{V})}.
\end{eqnarray}

\subsection{Interference Power Leakage-to-Transmit Power Ratio}
 In the considered CR network, the secondary receivers and the primary receivers share the same spectrum resource. However, the primary receivers are licensed users and thus the secondary transmitter is required to ensure the QoS of the primary receivers via a careful resource allocation design.  Strong interference may impair the primary network when the secondary transmitter increases its transmit power for providing SWIPT services to the secondary receivers.   As a result,  the interference power leakage-to-transmit power ratio (IPTR) is an important performance measure for designing the secondary CR network and should be captured in the resource allocation algorithm design. To this end, we first define the total interference power received by the $J$ primary receivers as
\begin{eqnarray}
\mathrm{IP}(\mathbf{w},\mathbf{V})=\sum_{j=1}^{J}\Big(\mathbf{w}^H\mathbf{L}_j\mathbf{w}
+\Tr(\mathbf{L}_j\mathbf{V})\Big).
\end{eqnarray}
Thus, the IPTR of the considered secondary CR network is
defined as
\begin{eqnarray}\label{eqn:Interference Power Leakage-to-Transmit Power Ratio}
\mathrm{IP}_{\mathrm{ratio}}(\mathbf{w},\mathbf{V})=\frac{\mathrm{IP}(\mathbf{w},\mathbf{V})}{\mathrm{TP}(\mathbf{w},\mathbf{V})}.
\end{eqnarray}

\subsection{Channel State Information}
 In this paper, we focus on a Time Division
Duplex (TDD)  communication system with slowly time-varying channels.  In practice,  handshaking\footnote{The legitimate receivers can either take turns to send the handshaking signals or  transmit simultaneously with orthogonal pilot sequences. } is performed between the secondary transmitter and the secondary receivers at the beginning of each scheduling slot. This allows the secondary transmitter to  obtain the statuses and  the QoS  requirements  of the secondary receivers. As a result, by exploiting the channel reciprocity, the downlink CSI of the secondary transmitter to the secondary receivers can be obtained by measuring the uplink training sequences embedded in the handshaking signals. Thus, we assume that the secondary-transmitter-to-secondary-receiver fading gains, $\mathbf{h}$ and $\mathbf{g}_{k},\forall k\in\{1,\ldots,K-1\}$, can be reliably estimated at the secondary transmitter at the beginning of each scheduling slot with negligible estimation error.  Then, during the transmission, the desired  secondary receiver is required to send positive acknowledgement (ACK) packets to inform the secondary transmitter of successful reception of the information packets. Hence, the transmitter is able to update  the CSI estimate of the desired receiver  frequently via the training sequences in each ACK packet. Therefore,  perfect CSI for the secondary-transmitter-to-desired-secondary-receiver  link, i.e.,  $\mathbf{h}$, is assumed  over the entire transmission period. However,   the remaining $K-1$ secondary receivers are idle and there is no interaction between them and the secondary transmitter after handshaking. As a result, the CSI of the idle secondary receivers becomes outdated during transmission.  To capture the impact of the CSI imperfection and to isolate specific channel estimation methods from  the resource allocation algorithm design,  we adopt a deterministic model \cite{JR:Robust_error_models1}--\nocite{JR:Robust_error_models2,JR:CSI-determinisitic-model}\cite{JR:CSI-determinisitic-model2} for  the resulting CSI uncertainty.  In particular, the  CSI of the link between the secondary transmitter
and idle secondary receiver $k$ is modeled as
\begin{eqnarray}\label{eqn:outdated_CSI}
\mathbf{g}_k&=&\mathbf{\hat g}_k + \Delta\mathbf{g}_k,\,   \forall k\in\{1,\ldots,K-1\}, \mbox{   and}\\
{\Omega }_k&\triangleq& \Big\{\Delta\mathbf{g}_k\in \mathbb{C}^{N_{\mathrm{T}}\times 1}  :\Delta\mathbf{g}_k^H \Delta\mathbf{g}_k \le \varepsilon_k^2\Big\},\forall k,\label{eqn:outdated_CSI-set}
\end{eqnarray}
where $\mathbf{\hat g}_k\in\mathbb{C}^{N_{\mathrm{T}}\times 1}$ is the CSI estimate available at the secondary transmitter at the beginning of a scheduling slot and $ \Delta\mathbf{g}_k$ represents the unknown channel uncertainty due to the time varying nature of the channel during transmission. The continuous set ${\Omega }_k$ in (\ref{eqn:outdated_CSI-set})  defines a Euclidean sphere and contains  all possible channel uncertainties. Specifically, the radius $\varepsilon_k$ represents the size of the sphere and defines the uncertainty region of the CSI of idle secondary receiver (potential eavesdropper) $k$. In practice, the value of $\varepsilon_k^2$ depends on the coherence time of the associated channel and the duration of transmission.

 Furthermore, to capture  the imperfectness of the CSI of the primary receiver channels at the secondary transmitter, we adopt the same CSI error model as for the idle secondary receivers.  In fact,   the primary receivers are not directly interacting with the secondary transmitter. Besides, the primary receivers may be silent for non-negligible periods of time due to bursty data communication. As a result,  the CSI of the primary receivers  can  be obtained only occasionally at the secondary transmitter when the primary receivers communicate with a primary transmitter.
 Hence, we model the CSI of the link between the secondary transmitter
and primary receiver $j$ as
 \begin{eqnarray}\label{eqn:outdated_CSI-primary}
\mathbf{l}_j&=&\mathbf{\hat l}_j + \Delta\mathbf{l}_j,\,   \forall j\in\{1,\ldots,J\}, \mbox{   and}\\
{\Psi}_j&\triangleq& \Big\{\Delta\mathbf{l}_j\in \mathbb{C}^{N_{\mathrm{T}}\times 1}  :\Delta\mathbf{l}_j^H \Delta\mathbf{l}_j \le \upsilon_j^2\Big\},\forall j,\label{eqn:outdated_CSI-set-primary}
\end{eqnarray}
where $\mathbf{\hat l}_j$ is the  estimate of the channel of primary receiver $j$  at the secondary transmitter and $ \Delta\mathbf{l}_j$ denotes the associated  channel uncertainty. ${\Psi}_j$  and  $\upsilon_j^2$  in (\ref{eqn:outdated_CSI-set-primary})  define the continuous set  of  all possible channel uncertainties and  the size of the uncertainty region of the estimated CSI of primary receiver $j$, respectively. We note that, in practice, the channel estimation qualities for primary receivers and secondary receivers  at the secondary transmitter may be different which leads to different values for $\varepsilon_k$ and $\upsilon_j$.
\subsection{Optimization Problem Formulations}
\label{sect:cross-Layer_formulation}
We first propose three problem formulations for single-objective system design for  secure communication in the secondary CR network. In particular, each single-objective problem formulation considers  one aspect of the system design.   Then, we consider the three system design objectives jointly under the framework of multi-objective optimization.  In particular, the adopted multi-objective optimization enables the design of  a set of Pareto optimal resource allocation policies.
  The first problem formulation aims at maximizing the energy harvesting efficiency while providing  secure communication in the secondary CR network. The problem formulation is as follows:

\begin{Prob}{Energy Harvesting Efficiency Maximization:}\end{Prob}\vspace*{-1mm}
\begin{eqnarray}
\label{eqn:cross-layer}&&\hspace*{-0mm} \underset{\mathbf{V}\in \mathbb{H}^{N_{\mathrm{T}}},\mathbf{w}}{\maxo}\,\, \,\, \min_{\Delta\mathbf{g}_k\in {\Omega}_k}\,\, \eta_{\mathrm{eff}}(\mathbf{w},\mathbf{V})\nonumber\\
\notag \mbox{s.t.} &&\hspace*{-6mm}\mbox{C1: }\notag\frac{\mathbf{w}^H\mathbf{H}\mathbf{w}}{\Tr(\mathbf{H}
\mathbf{V})+\sigma_\mathrm{z}^2} \ge \Gamma_{\mathrm{req}}, \\
&&\hspace*{-6mm}\mbox{C2: }\hspace*{-2mm}\max_{\Delta\mathbf{g}_k\in {\Omega}_k}\frac{\mathbf{w}^H\mathbf{G}_k\mathbf{w}}
{\Tr(\mathbf{G}_k\mathbf{V})\hspace*{-0.5mm}+\hspace*{-0.5mm}\sigma_{\mathrm{z}_k}^2}\hspace*{-0.5mm} \le \hspace*{-0.5mm} \Gamma_{\mathrm{tol}_k},\forall k\hspace*{-0.5mm}\in\hspace*{-0.5mm}\{1,\ldots,K-1\},\notag\\
&&\hspace*{-6mm}\mbox{C3: }\max_{\Delta\mathbf{l}_j\in {\Psi}_j}\,\,\frac{\mathbf{w}^H\mathbf{L}_j\mathbf{w}}
{\Tr(\mathbf{L}_j\mathbf{V})+\sigma_{\mathrm{PU}}^2} \hspace*{-0.5mm}\le\hspace*{-0.5mm} \Gamma_{\mathrm{tol}_j}^{\mathrm{PU}},\forall j\in\{1,\ldots,J\},\notag\\
&&\hspace*{-6mm}\mbox{C4: } \norm{\mathbf{w}}^2  +\Tr(\mathbf{V})\le P_{\max}, \quad\quad \mbox{C5:}\,\, \mathbf{V}\succeq \mathbf{0}.
\end{eqnarray}
The system objective in (\ref{eqn:cross-layer}) is to maximize the worst case energy harvesting efficiency of the system for  channel estimation errors $\Delta \mathbf{g}_k$ belonging to set $\Omega_k$. Constant $\Gamma_{\mathrm{req}}$ in C1 specifies the minimum required  received SINR of the desired secondary receiver for information decoding.    $\Gamma_{\mathrm{tol}_k},\forall k\in\{1,\ldots,K-1\}$, and $\Gamma_{\mathrm{tol}_j}^{\mathrm{PU}},\forall j\in\{1,\ldots,J\}$, in C2 and C3, respectively,  are given system parameters which denote the maximum tolerable received SINRs at the potential eavesdroppers in the secondary network and the primary network, respectively. In practice, depending on the considered application,  the system operator chooses the  values of  $\Gamma_{\mathrm{req}}$,  $\Gamma_{\mathrm{tol}_k},\forall k\in\{1,\ldots,K-1\}$, and $\Gamma_{\mathrm{tol}_j}^{\mathrm{PU}},\forall j\in\{1,\ldots,J\}$,  such that   $\Gamma_{\mathrm{req}}\gg \Gamma_{\mathrm{tol}_k}>0$ and $\Gamma_{\mathrm{req}}\gg \Gamma_{\mathrm{tol}_j}^{\mathrm{PU}}>0$. In other words, the secrecy rate of the system is bounded below by $C_\mathrm{sec}\ge \log_2(1+\Gamma_{\mathrm{req}})-\log_2(1+\underset{k,j}{\max}\{\Gamma_{\mathrm{tol}_k},\Gamma_{\mathrm{tol}_j}^{\mathrm{PU}}\})> 0$. We note that although $\Gamma_{\mathrm{req}}$,   $\Gamma_{\mathrm{tol}_k}$, and $\Gamma_{\mathrm{tol}_j}^{\mathrm{PU}}$  in C1, C2, and C3, respectively,  are
not optimization variables in this paper, a balance between
secrecy rate and system achievable rate can be struck
by varying their values.  $P_{\max}$ in C4 specifies the maximum  transmit power in the power amplifier of the analog front-end of the secondary transmitter. C5 and $\mathbf{V}\in \mathbb{H}^{N_\mathrm{T}}$  are imposed since covariance matrix $\mathbf{V}$ has to be a  positive semidefinite Hermitian matrix.

To facilitate the presentation and without loss of generality, we rewrite Problem 1 in (\ref{eqn:cross-layer}) in the equivalent form \cite{book:convex}:
\begin{eqnarray}\label{eqn:cross-layer1-flip}
&&\hspace*{-20mm} \underset{\mathbf{V}\in \mathbb{H}^{N_{\mathrm{T}}},\mathbf{w}}{\mino}\,\,\,\, \max_{\Delta\mathbf{g}_k\in {\Omega}_k}\,\, -\eta_{\mathrm{eff}}(\mathbf{w},\mathbf{V})\nonumber\\
\hspace*{6mm}\mbox{s.t.} &&\hspace*{-3mm}\mbox{C1 -- C5}.
\end{eqnarray}

 The second system design objective is the minimization of the total transmit power of the secondary transmitter and   can be
mathematically formulated as:

\begin{Prob}{Total Transmit Power Minimization:} \label{Prob:min_tx}\end{Prob}\vspace*{-1mm}
\begin{eqnarray}\label{eqn:cross-layer2}
&&\hspace*{-20mm} \underset{\mathbf{V}\in \mathbb{H}^{N_{\mathrm{T}}},\mathbf{w}
}{\mino}\,\,\,\, \mathrm{TP}(\mathbf{w},\mathbf{V})\nonumber\\
\hspace*{3mm}\mbox{s.t.} &&\hspace*{-3mm}\mbox{C1 -- C5}.
\end{eqnarray}
Problem \ref{Prob:min_tx} yields the minimum total transmit power of the secondary transmitter while ensuring that the QoS requirement on secure communication is satisfied. We note that Problem \ref{Prob:min_tx} does not take into account the energy harvesting capability  of the idle secondary receivers and focuses only on the requirement of secure communication  via constraints C1, C2, and C3. Besides, although transmit power minimization has been studied in the literature in different contexts \cite{CN:Kwan_PIMRC2013,JR:power_minimziation_beamforming,JR:downlink_beamforming_CR}, combining Problem 2 with the new Problems 1 and 3 (see below) offers new insights for the design of  CR networks providing secure wireless information and power transfer to secondary receivers.

The third system design objective concerns the  minimization  of the worst case IPTR while providing secure communication in the secondary CR network. The problem formulation is given as:
\begin{Prob}{Interference Power Leakage-to-Transmit Power Ratio Minimization:} \label{Prob:min_leak}\end{Prob}\vspace*{-1mm}
\begin{eqnarray}\label{eqn:cross-layer3}
&&\hspace*{-20mm}\underset{\mathbf{V}\in \mathbb{H}^{N_{\mathrm{T}}},\mathbf{w}
}{\mino}\,\,\,\,\max_{\Delta\mathbf{l}_j\in {\Psi}_j}\,\,\notag \mathrm{IP}_{\mathrm{ratio}}(\mathbf{w},\mathbf{V})\nonumber\\
\hspace*{3mm}\mbox{s.t.} &&\hspace*{-3mm}\mbox{C1 -- C5}.
\end{eqnarray}

\begin{Remark}\label{remark1}
In   (\ref{eqn:cross-layer}) and (\ref{eqn:cross-layer3}), the maximization of the energy harvesting efficiency and the minimization of the IPTR are chosen as design objectives, respectively.  Alternative design objectives are the maximization of the total harvested power,  $\underset{\mathbf{V}\in \mathbb{H}^{N_{\mathrm{T}}},\mathbf{w}}{\maxo}\,\,  \underset{{\Delta\mathbf{g}_k\in {\Omega}_k}}{\min} \mathrm{HP}(\mathbf{w},\mathbf{V})$,  and the minimization of the total interference power leakage, $\underset{\mathbf{V}\in \mathbb{H}^{N_{\mathrm{T}}},\mathbf{w}
}{\mino}\,\,\,\,\underset{\Delta\mathbf{l}_j\in {\Psi}_j}{\max}\mathrm{IP}(\mathbf{w},\mathbf{V})$. We will show later that  the maximization  of the energy harvesting efficiency in (\ref{eqn:cross-layer}) and the minimization of the IPTR  in (\ref{eqn:cross-layer3}) subsume the total harvested power maximization  and the total interference power leakage minimization  as special cases, respectively.
Please refer to Remark \ref{eqn:from_fracitonal_to_non_fractional} for the solution of the total interference power leakage minimization and total harvested power maximization problems.
\end{Remark}

\begin{Remark}\label{remark12}
In fact, the optimization problem in   (\ref{eqn:cross-layer3})  can be extended to the  minimization of  the maximum received interference leakage per primary receiver. However, such problem formulation does not facilitate the study of the trade-off between interference leakage, energy harvesting, and total transmit power as the system performance is always limited by those primary users which have strong channels with respect to the secondary transmitter.
\end{Remark}

  In practice, the system design objectives in Problems $1$--$3$ are all desirable for the system operators of secondary CR networks in providing simultaneous power and secure information transfer.  Yet,  theses objectives  are usually conflicting with each other and each objective  focuses on only one aspect of the system. In the literature, multi-objective optimization has been proposed for studying the trade-off between conflicting system design objectives via the concept of Pareto optimality.   For facilitating the following exposition, we denote  the objective function and the optimal objective value for problem formulation $p\in\{1,2,3\}$ as $F_{p}(\mathbf{w},\mathbf{V})$ and $F_p^*$, respectively.  We define a resource allocation policy which is Pareto optimal as:

\emph{Definition  \cite{JR:MOOP}:}  A resource allocation policy, $\{\mathbf{w,V}\}$, is Pareto optimal if and only if there does not
exist another policy,  $\{\mathbf{w',V'}\}$,  such that $ F_i(\mathbf{w}',\mathbf{V}')\le F_i(\mathbf{w},\mathbf{V}), \forall i\in\{1,2,3\}$, and  $ F_j(\mathbf{w}',\mathbf{V}')< F_j(\mathbf{w},\mathbf{V})$ for at least one index $j\in\{1,2,3\}$.

The set of all Pareto optimal resource allocation polices is called the Pareto frontier or the Pareto optimal set.  In this paper, we adopt the
weighted Tchebycheff method \cite{JR:MOOP} for investigating the trade-off between objective functions 1, 2, and 3.  In particular, the weighted Tchebycheff method can provide the complete Pareto optimal set despite the non-convexity (if any) of the considered problems\footnote{In the literature, different scalarization methods have been proposed  for achieving the  points of the complete Pareto set for multi-objective optimization   \cite{JR:MOOP,book:MOOP2}. However, the weighted Tchebycheff method requires a lower computational complexity compared to other methods such as the weighted product method and the exponentially weighted criterion.}; it provides a necessary condition for Pareto optimality. The complete Pareto optimal set can be achieved by solving the following multi-objective problem:

\setcounter{Prob}{3}
\begin{Prob}{Multi-Objective Optimization -- Weighted Tchebycheff Method:} \label{Prob:multi}\end{Prob}\vspace*{-5mm}
\begin{eqnarray}
\label{eqn:cross-layer4}&&\hspace*{-25mm}\underset{\mathbf{V}\in \mathbb{H}^{N_{\mathrm{T}}},\mathbf{w}}{\mino}\,\,\max_{p\in\{\,1,\,2,\,3\}}\,\, \Bigg\{\lambda_p \Big(\frac{F_p\big(\mathbf{w},\mathbf{V}\big)-F_p^*}{\abs{F_p^*}}\Big)\Bigg\}\nonumber\\
 \mbox{s.t.} &&\hspace*{3mm}\mbox{C1 -- C5},
\end{eqnarray}
In fact,  by varying the values of $\lambda_p$, Problem 4 yields the
complete Pareto optimal  set \cite{JR:MOOP,book:MOOP2}. Besides, Problem \ref{Prob:multi} is a generalization of Problems 1, 2, and 3. In particular,  Problem \ref{Prob:multi} is equivalent\footnote{Here, ``equivalent" means that the considered problems share the same optimal resource allocation solution(s).  } to Problem $p$ when $\lambda_p=1$ and $\lambda_i=0, \forall i\ne p$.
{For instance, if the secondary energy harvesting receivers do not require wireless power transfer from the secondary transmitter, without loss of generality, we can set $\lambda_1=0$ in Problem 4 to study the tradeoff between the remaining two system design objectives.}
 In addition, this commonly adopted approach also provides a non-dimensional objective
function, i.e., the unit of the objective function is normalized.

\begin{Remark} Finding the Pareto optimal set of the multi-objective optimization problem  provides a set of Pareto optimal resource allocation policies. Then, depending  on the preference of the system operator, a proper resource allocation policy can be selected from the set for implementation. We note that the resource allocation algorithm in  \cite{JR:Kwan_secure_imperfect} cannot be directly applied to the problems considered in this paper since  it was designed for single-objective optimization, namely for the  for minimization of the total transmit power.
\end{Remark}

\begin{Remark} { Another possible problem formulation for the considered system model is to move some of the objective functions in (\ref{eqn:cross-layer}), (\ref{eqn:cross-layer2}), and (\ref{eqn:cross-layer3}) to the set of  constraints and constrain each of them by some constant. Then, by varying the constants, trade-offs between different objectives can be struck.    However, in general,  such a problem formulation does not reveal the Pareto optimal set due to the non-convexity of the problem.}
\end{Remark}

\section{Solution of the Optimization Problems} \label{sect:solution}
The optimization problems in (\ref{eqn:cross-layer1-flip}), (\ref{eqn:cross-layer2}), and (\ref{eqn:cross-layer3}) are non-convex with respect to the optimization variables.  In particular, the non-convexity arises from objective function 1, objective function 3, and constraint C1. In order to obtain  tractable solutions for the problems, we recast Problems 1, 2, 3, and 4 as
convex optimization problems by semidefinite programming (SDP) relaxation  \cite{CN:SDP_relaxation1,JR:SDP_relaxation1}  and study the tightness of the adopted relaxation in this section.
\subsection{Semidefinite Programming Relaxation} \label{sect:solution_dual_decomposition}
To facilitate the SDP relaxation, we define
\begin{eqnarray}\label{eqn:change_of_variables}
\mathbf{W}=\mathbf{w}\mathbf{w}^H,\, \mathbf{W}=\frac{\overline{\mathbf{W}}}{\xi},  \mathbf{V}=\frac{\overline{\mathbf{V}}}{\xi},\, \xi=\frac{1}{\Tr({\mathbf{W}})+\Tr({\mathbf{V}})},
\end{eqnarray}
 and rewrite Problems 1 -- 4  in terms of new optimization variables $\overline{\mathbf{W}}$,  $\overline{\mathbf{V}}$, and $\xi$.

\begin{T-Prob}{Energy Harvesting Efficiency Maximization:}\end{T-Prob}\vspace*{-1mm}
\begin{eqnarray}
\label{eqn:cross-layer-t1}&&\hspace*{-10mm}\underset{\overline{\mathbf{V}},\overline{\mathbf{W}}\in \mathbb{H}^{N_{\mathrm{T}}},\xi
}{\mino}\,\, \max_{\Delta\mathbf{g}_k\in {\Omega}_k}-\sum_{k=1}^{K-1}\eta_k\Tr(\mathbf{G}_k(\overline{\mathbf{W}}+\overline{\mathbf{V}}))\nonumber\\
\notag \mbox{s.t.} &&\hspace*{-5mm}\mbox{\textoverline{C1}:} \notag\frac{\Tr(\mathbf{H}\overline{\mathbf{W}})}{\Tr(\mathbf{H}
\overline{\mathbf{V}})+\sigma_\mathrm{z}^2\xi} \ge \Gamma_{\mathrm{req}}, \\
&&\hspace*{-5mm}\mbox{\textoverline{C2}: }\notag\max_{\Delta\mathbf{g}_k\in {\Omega}_k}\frac{\Tr(\mathbf{G}_k\overline{\mathbf{W}})}
{\Tr(\mathbf{G}_k\overline{\mathbf{V}})+\sigma_{\mathrm{z}_k}^2\xi} \le \Gamma_{\mathrm{tol}_k},\forall k,\\
&&\hspace*{-5mm}\mbox{\textoverline{C3}: }\notag\max_{\Delta\mathbf{l}_j\in {\Psi}_j}\frac{\Tr(\mathbf{L}_j\overline{\mathbf{W}})}
{\Tr(\mathbf{L}_j\overline{\mathbf{V}})+\sigma_{\mathrm{PU}}^2\xi} \le \Gamma_{\mathrm{tol}_j}^{\mathrm{PU}},\forall j,\\
&&\hspace*{-5mm}\mbox{\textoverline{C4}: }\notag \Tr(\overline{\mathbf{W}})  +\Tr(\overline{\mathbf{V}})\le P_{\max}\xi, \\
&&\hspace*{-5mm}\mbox{\textoverline{C5}:}\,\, \mathbf{\overline{W}},\mathbf{\overline{V}}\succeq \mathbf{0},\hspace*{5mm}\mbox{\textoverline{C6}:}\,\, \xi \ge 0, \notag\\
&&\hspace*{-5mm}\mbox{\textoverline{C7}:}\,\, \Tr(\overline{\mathbf{W}})+\Tr(\overline{\mathbf{V}})= 1,\,\,\mbox{\textoverline{C8}:}\,\, \Rank(\mathbf{\overline{W}})=1,
\end{eqnarray}
where  $\overline{\mathbf{W}}\succeq \mathbf{0}$, $\overline{\mathbf{W}}\in \mathbb{H}^{N_{\mathrm{T}}}$, and $\Rank(\overline{\mathbf{W}})=1$ in (\ref{eqn:cross-layer-t1}) are imposed to guarantee that $\overline{\mathbf{W}}=\xi\mathbf{w}\mathbf{w}^H$ after optimizing $\mathbf{\overline W}$.

\begin{T-Prob}{Total Transmit Power Minimization:}\end{T-Prob}\vspace*{-1mm}
\begin{eqnarray}
\label{eqn:cross-layer-t2} \underset{\overline{\mathbf{V}},\mathbf{\overline{W}}\in \mathbb{H}^{N_{\mathrm{T}}}, \xi
}{\mino}&&\,\, \frac{1}{\xi}\nonumber\\
&&\hspace*{-2cm}\mbox{s.t. }\, \mbox{\textoverline{C1} -- \textoverline{C8}}.
\end{eqnarray}

\begin{T-Prob}{Interference Power Leakage-to-Transmit Power Ratio Minimization:}\end{T-Prob}\vspace*{-1mm}
\begin{eqnarray}
\label{eqn:cross-layer-t3} \underset{\overline{\mathbf{V}},\mathbf{\overline{W}}\in \mathbb{H}^{N_{\mathrm{T}}}, \xi
}{\mino}&&\,\, \hspace*{-0.5cm}\max_{\Delta\mathbf{l}_j\in {\Psi}_j}\sum_{j=1}^{J}\Tr(\mathbf{L}_j(\overline{\mathbf{W}}+\overline{\mathbf{V}}))\nonumber\\
 &&\hspace*{-0cm}\mbox{s.t. }\, \mbox{\textoverline{C1} -- \textoverline{C8}}.
\end{eqnarray}

\newpage
\begin{T-Prob}{Multi-Objective Optimization:}\end{T-Prob}\vspace*{-1mm}
\begin{eqnarray}
\label{eqn:cross-layer-t4}&&\hspace*{1cm}\underset{\overline{\mathbf{V}},\mathbf{\overline{W}}\in \mathbb{H}^{N_{\mathrm{T}}}, \xi,\tau
}{\mino}\,\,\,\tau\nonumber\\
 &&\hspace*{0.8cm}\notag \mbox{s.t. }\,
\mbox{\textoverline{C1} -- \textoverline{C8}},\\
&&\hspace*{-1.2cm}\mbox{\textoverline{C9}a: }\frac{\lambda_1}{\abs{F_1^*}} (\overline{F_1}-F_1^*)\le \tau,\,\,\,\mbox{\textoverline{C9}b: }\frac{\lambda_2}{\abs{F_2^*}} (\overline{F_2}-F_2^*)\le \tau,\notag\\
&&\hspace*{-1.2cm}\mbox{\textoverline{C9}c: }\frac{\lambda_3}{\abs{F_3^*}} (\overline{F_3}-F_3^*)\le \tau,
\end{eqnarray}
where $\overline{F_1}=\underset{\Delta\mathbf{g}_k\in {\Omega}_k}\min-\sum_{k=1}^{K-1}\varepsilon_k\Tr(\mathbf{G}_k(\overline{\mathbf{W}}+\overline{\mathbf{V}}))$, $\overline{F_2}= \frac{1}{\xi}$, $\overline{F_3}=  \underset{\Delta\mathbf{l}_j\in {\Psi}_j}\max\sum_{j=1}^{J}\Tr(\mathbf{L}_j(\overline{\mathbf{W}}+\overline{\mathbf{V}}))$,   $\tau$ is an auxiliary optimization variable, and (\ref{eqn:cross-layer-t4}) is the epigraph representation \cite{book:convex} of (\ref{eqn:cross-layer4}).
\begin{proposition}
The above transformed Problems (\ref{eqn:cross-layer-t1})--(\ref{eqn:cross-layer-t4}) are equivalent to the original problems in  (\ref{eqn:cross-layer1-flip})--(\ref{eqn:cross-layer4}), respectively. Specifically, we can recover the solutions of the original problems from the solutions of the transformed problems based on (\ref{eqn:change_of_variables}).
\end{proposition}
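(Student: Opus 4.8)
The plan is to prove the proposition by exhibiting, for each $p\in\{1,2,3,4\}$, an explicit bijection between the feasible set of the original problem and the feasible set of the corresponding transformed problem, under which the two objective functions take equal values; equality of the optimal values and a one-to-one correspondence of the optimizers then follow immediately, and the recovery formulas are exactly those in (\ref{eqn:change_of_variables}). The transformation is of Charnes--Cooper type, tailored to the fractional objectives $\eta_{\mathrm{eff}}$ and $\mathrm{IP}_{\mathrm{ratio}}$, with the extra bookkeeping that the rank-one structure must be carried along so that the transformed problems are genuinely equivalent to the originals (the subsequent SDP relaxation, which drops $\Rank(\overline{\mathbf{W}})=1$, is a separate matter treated afterwards).

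\emph{From original to transformed.} Given a feasible $(\mathbf{w},\mathbf{V})$ of, say, Problem 1, set $\mathbf{W}=\mathbf{w}\mathbf{w}^{H}$. Constraint C1 with $\Gamma_{\mathrm{req}}>0$ and $\sigma_{\mathrm{z}}^2>0$ forces $\Tr(\mathbf{H}\mathbf{W})>0$, hence $\mathbf{w}\neq\mathbf{0}$, so $\mathrm{TP}(\mathbf{w},\mathbf{V})=\Tr(\mathbf{W})+\Tr(\mathbf{V})>0$ and $\xi\triangleq 1/\mathrm{TP}(\mathbf{w},\mathbf{V})\in(0,\infty)$ is well defined; put $\overline{\mathbf{W}}=\xi\mathbf{W}$ and $\overline{\mathbf{V}}=\xi\mathbf{V}$. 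Then $\overline{\mathbf{W}}=(\sqrt{\xi}\,\mathbf{w})(\sqrt{\xi}\,\mathbf{w})^{H}$ is a positive semidefinite Hermitian matrix of rank one and $\Tr(\overline{\mathbf{W}})+\Tr(\overline{\mathbf{V}})=\xi\,\mathrm{TP}=1$, so $\overline{C5},\overline{C6},\overline{C7},\overline{C8}$ hold. Multiplying numerator and denominator of each SINR expression in C1--C3 by $\xi>0$ changes neither the value of the ratio nor the worst case of that value over $\Omega_k$ or $\Psi_j$, which gives $\overline{C1}$--$\overline{C3}$; scaling C4 by $\xi$ gives $\overline{C4}$. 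Since $\mathrm{TP}$ does not depend on the channel uncertainties, $\min_{\Delta\mathbf{g}_k\in\Omega_k}\eta_{\mathrm{eff}}=\big(\min_{\Delta\mathbf{g}_k}\mathrm{HP}\big)/\mathrm{TP}=\min_{\Delta\mathbf{g}_k}\sum_{k}\eta_k\Tr\big(\mathbf{G}_k(\overline{\mathbf{W}}+\overline{\mathbf{V}})\big)$, so the objective value is preserved (up to the sign introduced in (\ref{eqn:cross-layer1-flip})).

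\emph{From transformed to original.} Conversely, let $(\overline{\mathbf{W}},\overline{\mathbf{V}},\xi)$ be feasible for the transformed problem. Combining $\overline{C4}$ and $\overline{C7}$ gives $1=\Tr(\overline{\mathbf{W}})+\Tr(\overline{\mathbf{V}})\le P_{\max}\xi$, hence $\xi\ge 1/P_{\max}>0$ and division by $\xi$ is legitimate; put $\mathbf{V}=\overline{\mathbf{V}}/\xi$, factor $\overline{\mathbf{W}}=\overline{\mathbf{w}}\,\overline{\mathbf{w}}^{H}$ using $\overline{C5}$ and $\overline{C8}$, and put $\mathbf{w}=\overline{\mathbf{w}}/\sqrt{\xi}$ so that $\mathbf{W}\triangleq\mathbf{w}\mathbf{w}^{H}=\overline{\mathbf{W}}/\xi$. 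Running the previous algebra backwards recovers C1--C5, and $\Tr(\mathbf{W})+\Tr(\mathbf{V})=\big(\Tr(\overline{\mathbf{W}})+\Tr(\overline{\mathbf{V}})\big)/\xi=1/\xi$, which is precisely the defining relation for $\xi$ in (\ref{eqn:change_of_variables}); the objectives again coincide. The identical pair of computations applies with the objective replaced by $\mathrm{TP}=1/\xi$ for Problem 2 and by $\mathrm{IP}_{\mathrm{ratio}}=\mathrm{IP}/\mathrm{TP}=\max_{\Delta\mathbf{l}_j\in\Psi_j}\sum_{j}\Tr\big(\mathbf{L}_j(\overline{\mathbf{W}}+\overline{\mathbf{V}})\big)$ for Problem 3. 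Problem 4 then follows because its scalarized objective in (\ref{eqn:cross-layer4}) is a pointwise maximum of affine functions of $F_1,F_2,F_3$; substituting the transformed expressions $\overline{F_1},\overline{F_2},\overline{F_3}$ and introducing the epigraph variable $\tau$ together with $\overline{C9}$a--$\overline{C9}$c is an exact reformulation of that maximum.

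The two steps that require care, rather than mere substitution, are: (i) showing $\xi>0$ \emph{strictly} in the backward direction, so that the map is a genuine bijection and the transformed problem cannot pick up spurious points at $\xi=0$ --- this is exactly the $\overline{C4}$/$\overline{C7}$ argument above, and it is the only place where finiteness of $P_{\max}$ and the normalization $\overline{C7}$ are jointly invoked; and (ii) checking that the inner worst-case operators over $\Omega_k$ and $\Psi_j$ commute with multiplication by $\xi$, which holds because $\xi$ is a single positive scalar independent of $\Delta\mathbf{g}_k$ and $\Delta\mathbf{l}_j$. Everything else is routine substitution, and because $\overline{C8}$ is retained throughout, no convexity is claimed or used here; the proposition asserts only the exact equivalence of the reformulations.
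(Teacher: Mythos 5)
Your proof is correct and takes essentially the same Charnes--Cooper route as the paper's Appendix~A: change of variables per (\ref{eqn:change_of_variables}), verification that each constraint and objective is preserved under the positive scaling, and a check that $\xi$ cannot degenerate to zero. The only (immaterial) difference is how that last point is secured --- you derive $\xi\ge 1/P_{\max}>0$ directly from \textoverline{C4} and \textoverline{C7}, whereas the paper argues that $\xi=0$ would force $\overline{\mathbf{W}}=\overline{\mathbf{V}}=\mathbf{0}$ and hence violate \textoverline{C1}, so that $\xi>0$ may be relaxed to $\xi\ge 0$ without loss of optimality.
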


\begin{proof}
Please refer to Appendix A.
\end{proof}
Since  transformed Problem 4 is a generalization of transformed Problems 1, 2, and 3, we focus on the  methodology for solving   transformed Problem\footnote{In studying the solution structure of transformed Problem 4, we assume that the optimal objective values of transformed Problems 1--3 are given constants, i.e.,  $F^*_p,\forall p\in\{1,2,3\}$, are known. Once the structure of the  optimal resource allocation scheme  of transformed Problem 4 is obtained, it can be exploited to obtain the optimal solution of transformed Problems 1--3. } 4. In practice,  the considered problems may be infeasible when  the channels are in unfavourable  conditions and/or the QoS requirements are too stringent. However,  in the sequel, for studying the trade-off between different system design objectives and the design of different resource allocation schemes,  we assume that the problem is always feasible\footnote{We note that multiple optimal solutions may exist for the  considered problems and the proposed optimal  resource allocation scheme is able to find at least one of the global optimal solutions.}.

First, we address constraints $\mbox{\textoverline{C2}}$,  $\mbox{\textoverline{C3}}$, and $\mbox{\textoverline{C9}}$. We note that although these constraints  are convex with respect to the optimization variables, they are semi-infinite constraints  which are generally intractable. For facilitating the design of a tractable resource allocation algorithm, we introduce two auxiliary optimization variables $E_{k}^{\mathrm{SU}}$   and $I_{j}^{\mathrm{PU}}$ and rewrite transformed Problem 4 in (\ref{eqn:cross-layer-t4}) as
\begin{eqnarray}
\label{eqn:cross-layer-t31}&&\underset{\mathbf{\overline{W}},\overline{\mathbf{V}}\in \mathbb{H}^{N_{\mathrm{T}}}
,I_{j}^{\mathrm{PU}},E_{k}^{\mathrm{SU}}, \xi,\tau,}{\mino}\,\,\,\tau\nonumber\\
 &&\notag \mbox{s.t. }\,
\mbox{\textoverline{C1} -- \textoverline{C8}},\notag\\
&&\hspace*{-0.6cm}\mbox{\textoverline{C9}a: }\lambda_1 \Big(\sum_{k=1}^{K-1}E_{k}^{\mathrm{SU}}-F_1^*\Big)\le \tau\abs{F_1^*},\notag\\
&&\hspace*{-0.6cm}\mbox{\textoverline{C9}b: }\lambda_2(\overline{F_2}-F_2^*)\le \tau\abs{F_2^*}, \notag\\
&&\hspace*{-0.6cm}\mbox{\textoverline{C9}c: }\lambda_3 \Big(\sum_{j=1}^{J}I_{j}^{\mathrm{PU}}-F_3^*\Big)\le \tau\abs{F_3^*}\notag,\\
&&\hspace*{-0.6cm}\mbox{\textoverline{C10}: }E_{k}^{\mathrm{SU}}\ge \max_{\Delta\mathbf{g}_k\in {\Omega}_k}- \eta_k\Tr(\mathbf{G}_k(\overline{\mathbf{W}}+\overline{\mathbf{V}})),\forall k,\notag\\
&&\hspace*{-0.6cm}\mbox{\textoverline{C11}: }I_{j}^{\mathrm{PU}}\ge \max_{\Delta\mathbf{l}_j\in {\Psi}_j} \Tr(\mathbf{L}_j(\overline{\mathbf{W}}+\overline{\mathbf{V}})),\forall j.
\end{eqnarray}
In fact, the introduced auxiliary variables $E_{k}^{\mathrm{SU}}$ and $I_{j}^{\mathrm{PU}}$   decouple the original two nested semi-infinite constraints into two semi-infinite constraints and two  affine constraints, i.e., $\mbox{\textoverline{C10}}$,
$\mbox{\textoverline{C11}}$ and $\mbox{\textoverline{C9}a}$,  $\mbox{\textoverline{C9}c}$, respectively. It can be  verified that (\ref{eqn:cross-layer-t31}) is equivalent to  (\ref{eqn:cross-layer-t3}), i.e., constraints $\mbox{\textoverline{C10}}$ and $\mbox{\textoverline{C11}}$ are satisfied with equality for the optimal solution. Next, we transform constraints $\mbox{\textoverline{C2}}$, $\mbox{\textoverline{C3}}$, $\mbox{\textoverline{C10}}$, and $\mbox{\textoverline{C11}}$ into linear matrix inequalities (LMIs) using the following lemma:
\begin{Lem}[S-Procedure \cite{book:convex}] Let a function $f_m(\mathbf{x}),m\in\{1,2\},\mathbf{x}\in \mathbb{C}^{N\times 1},$ be defined as
\begin{eqnarray}\label{eqn:S-procedure}
f_m(\mathbf{x})=\mathbf{x}^H\mathbf{A}_m\mathbf{x}+2 \mathrm{Re} \{\mathbf{b}_m^H\mathbf{x}\}+c_m,
\end{eqnarray}
where $\mathbf{A}_m\in\mathbb{H}^N$, $\mathbf{b}_m\in\mathbb{C}^{N\times 1}$, and $c_m\in\mathbb{R}$. Then, the implication $f_1(\mathbf{x})\le 0\Rightarrow f_2(\mathbf{x})\le 0$  holds if and only if there exists a $\delta\ge 0$ such that
\begin{eqnarray}\delta
\begin{bmatrix}
       \mathbf{A}_1 & \mathbf{b}_1          \\
       \mathbf{b}_1^H & c_1           \\
           \end{bmatrix} -\begin{bmatrix}
       \mathbf{A}_2 & \mathbf{b}_2          \\
       \mathbf{b}_2^H & c_2           \\
           \end{bmatrix}          \succeq \mathbf{0},
\end{eqnarray}
provided that there exists a point $\mathbf{\hat{x}}$ such that $f_k(\mathbf{\hat{x}})<0$.
\end{Lem}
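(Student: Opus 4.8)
The statement to be proven is the (complex) $S$-procedure for a single quadratic constraint, and the plan is to establish the ``if and only if'' one direction at a time. As a preliminary I would collect the data into the augmented Hermitian matrices $M_m=\left[\begin{smallmatrix}\mathbf{A}_m & \mathbf{b}_m\\ \mathbf{b}_m^H & c_m\end{smallmatrix}\right]\in\mathbb{H}^{N+1}$, $m\in\{1,2\}$, and use the lifted vector $\tilde{\mathbf{x}}=[\mathbf{x}^T,1]^T\in\mathbb{C}^{N+1}$, so that $f_m(\mathbf{x})=\tilde{\mathbf{x}}^H M_m\tilde{\mathbf{x}}$; this homogenization is what makes the two sides of the claimed equivalence directly comparable.

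The ``if'' direction is the easy one. Given $\delta\ge 0$ with $\delta M_1-M_2\succeq\mathbf{0}$, I would simply conjugate this matrix inequality by $\tilde{\mathbf{x}}$ for an arbitrary $\mathbf{x}$ with $f_1(\mathbf{x})\le 0$, getting $\delta f_1(\mathbf{x})-f_2(\mathbf{x})\ge 0$ and therefore $f_2(\mathbf{x})\le\delta f_1(\mathbf{x})\le 0$. This uses neither the feasibility point $\hat{\mathbf{x}}$ nor any convexity, so the proviso plays no role here.

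The real work is the ``only if'' direction, where the plan is to follow the standard geometric proof of the $S$-lemma. Assuming $f_1(\mathbf{x})\le 0\Rightarrow f_2(\mathbf{x})\le 0$, I would introduce the planar set $\mathcal{W}=\{(f_1(\mathbf{x}),f_2(\mathbf{x})):\mathbf{x}\in\mathbb{C}^N\}+(\mathbb{R}_{\ge0}\times\mathbb{R}_{\le0})\subseteq\mathbb{R}^2$, check that the hypothesis says exactly that $\mathcal{W}$ is disjoint from $\{(u,v):u\le 0,\,v>0\}$, argue that $\mathcal{W}$ is convex (the key point, discussed below), and then separate these two disjoint convex sets by a line $\alpha u+\beta v=\gamma$. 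Inspecting on which side each set must lie pins down the signs $\alpha\ge 0$, $\beta\le 0$ and forces $\gamma\ge 0$, which gives $\alpha f_1(\mathbf{x})+\beta f_2(\mathbf{x})\ge 0$ for every $\mathbf{x}$. If $\beta<0$, rescaling yields the desired multiplier $\delta=\alpha/(-\beta)\ge 0$ with $f_2(\mathbf{x})-\delta f_1(\mathbf{x})\le 0$ for all $\mathbf{x}$; a scaling-and-continuity argument then upgrades this from the affine slice ``last coordinate of $\tilde{\mathbf{x}}$ equal to $1$'' to all of $\mathbb{C}^{N+1}$, i.e.\ to $\delta M_1-M_2\succeq\mathbf{0}$. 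The remaining possibility $\beta=0$ would force $\alpha>0$ and $f_1(\mathbf{x})\ge 0$ for every $\mathbf{x}$, contradicting $f_1(\hat{\mathbf{x}})<0$; this is precisely where the proviso is used — its sole purpose is to exclude the ``vertical'' separating line.

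The hard part will be justifying the convexity of $\mathcal{W}$. This rests on the convexity of the joint numerical range of two Hermitian forms — Dines's theorem, after realifying $\mathbb{C}^{N+1}$ as $\mathbb{R}^{2(N+1)}$ — together with the bookkeeping needed to pass from the homogeneous forms $\tilde{\mathbf{x}}^H M_m\tilde{\mathbf{x}}$ back to the inhomogeneous $f_m$ (in particular to control the vectors $\tilde{\mathbf{x}}$ with vanishing last coordinate). This ``hidden convexity'' is the crux of the whole lemma; it is the same phenomenon that makes strong duality hold for the single-constraint, possibly non-convex, trust-region-type quadratic program, and indeed — since the paper attributes the result to \cite{book:convex} — an acceptable alternative plan would simply be to invoke the strong-duality argument for that subproblem given there. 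Once the convexity is available, the disjointness check, the separating-hyperplane sign analysis, and the homogenization limits are all routine.
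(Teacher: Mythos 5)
The first thing to note is that the paper does not prove this statement at all: Lemma~1 is quoted verbatim as a known result (the S-procedure) with a citation to \cite{book:convex}, so there is no in-paper proof to compare against. Your proposal is therefore necessarily a ``different route'' in the sense that you actually attempt the argument. What you sketch is the standard textbook proof: homogenization via $\tilde{\mathbf{x}}=[\mathbf{x}^T,1]^T$, the trivial ``if'' direction by congruence (correctly observing that the Slater-type proviso is not needed there), and the ``only if'' direction by separating two disjoint convex sets in the plane, with the proviso $f_1(\hat{\mathbf{x}})<0$ used exactly where it should be, namely to rule out the degenerate separating functional with $\beta=0$. The sign analysis, the rescaling $\delta=\alpha/(-\beta)$, and the limiting argument that upgrades $\delta f_1(\mathbf{x})-f_2(\mathbf{x})\ge 0$ on the affine slice to $\delta M_1-M_2\succeq\mathbf{0}$ on all of $\mathbb{C}^{N+1}$ are all correct.

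The one place where the plan is genuinely incomplete is the convexity of $\mathcal{W}$, and your appeal to Dines's theorem ``plus bookkeeping'' understates the issue. Dines's theorem (and its complex Hermitian analogue) concerns the joint range of two \emph{homogeneous} quadratic forms; the set $\{(f_1(\mathbf{x}),f_2(\mathbf{x}))\}$ for inhomogeneous $f_m$ is in general not convex, and adding the recession cone $\mathbb{R}_{\ge 0}\times\mathbb{R}_{\le 0}$ does not by itself repair this. The clean way to close the gap is to reverse the order of operations: first prove the \emph{homogeneous} complex S-lemma for the forms $\tilde{\mathbf{x}}^H M_m\tilde{\mathbf{x}}$ on $\mathbb{C}^{N+1}$ (where the joint-numerical-range convexity applies directly), and then derive the inhomogeneous statement by a perturbation or limiting argument handling the vectors with vanishing last coordinate -- or, as you suggest as a fallback, simply invoke the strong-duality treatment of the single-quadratic-constraint problem in the cited reference, which is evidently what the authors intend. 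With that repair the proposal is a correct and complete plan.
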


Now, we apply Lemma 1 to constraint $\mbox{\textoverline{C2}}$. In particular, we substitute
 $\mathbf{g}_k=\mathbf{\hat g}_k +\Delta\mathbf{g}_k$ into constraint $\mbox{\textoverline{C2}}$. Therefore, the implication,
\begin{eqnarray}
&&\hspace*{-0.5mm} \Delta\mathbf{g}_k^H \Delta\mathbf{g}_k\hspace*{-1mm}\le\hspace*{-1mm} \varepsilon_k^2\\
\Rightarrow\,\hspace*{-3mm}&&\hspace*{-1.8mm} \mbox{\textoverline{C2}: }0\hspace*{-1mm}\ge\hspace*{-1mm}  \max_{\Delta\mathbf{g}_k\in {\Omega}_k} \Delta\mathbf{g}_k^H\big(\frac{\overline{\mathbf{W}}}{\Gamma_{\mathrm{tol}_k}}-\overline{\mathbf{V}}\big)\Delta\mathbf{g}_k\notag\\
&& \hspace*{-6.8mm}+
2\mathrm{Re}\Big\{\mathbf{\hat g}_k^H\big(\hspace*{-0.5mm}\frac{\overline{\mathbf{W}}}{\Gamma_{\mathrm{tol}_k}}
\hspace*{-0.5mm}-\hspace*{-0.5mm}\overline{\mathbf{V}}\hspace*{-0.5mm}\big)\Delta\mathbf{ g}_k\Big\}+\mathbf{\hat g}_k^H\big(\hspace*{-0.5mm}\frac{\overline{\mathbf{W}}}{\Gamma_{\mathrm{tol}_k}}\hspace*{-0.5mm}-\hspace*{-0.5mm}
\overline{\mathbf{V}}\hspace*{-0.5mm}\big)\mathbf{\hat g}_k\hspace*{-0.5mm}-\hspace*{-0.5mm}\xi\sigma_{\mathrm{z}_k}^2,\forall k,\notag
\end{eqnarray}
holds if and only if there exists a $\delta_k\ge0$ such that the following  LMI constraint holds:
\begin{eqnarray}\label{eqn:LMI_C2}
\mbox{\textoverline{C2}: } &&\hspace*{-5mm}\mathbf{S}_{\mathrm{\overline{C2}}_k}(\overline{\mathbf{W}},\overline{\mathbf{V}}, \xi,\delta_k)\\ \notag
=&&\hspace*{-6mm}
\begin{bmatrix}
       \delta_k\mathbf{I}_{N_{\mathrm{T}}}+\overline{\mathbf{V}}-\frac{\overline{\mathbf{W}}}{\Gamma_{\mathrm{tol}_k}} & \hspace*{-6mm} (\overline{\mathbf{V}}-\frac{\overline{\mathbf{W}}}{\Gamma_{\mathrm{tol}_k}})\mathbf{\hat g}_k          \\
       \hspace*{-2mm}\mathbf{\hat g}_k^H (\overline{\mathbf{V}}-\frac{\overline{\mathbf{W}}}{\Gamma_{\mathrm{tol}_k}})    & \hspace*{-6mm} -\delta_k\varepsilon_k^2 +\xi\sigma_{\mathrm{z}_k}^2+  \mathbf{\hat g}_k^H (\overline{\mathbf{V}}-\frac{\overline{\mathbf{W}}}{\Gamma_{\mathrm{tol}_k}}) \mathbf{\hat g}_k        \\
           \end{bmatrix}\\
         =&&\hspace*{-6mm}\begin{bmatrix}
       \delta_k\mathbf{I}_{N_{\mathrm{T}}}+\overline{\mathbf{V}} & \overline{\mathbf{V}}\mathbf{\hat g}_k          \\
       \mathbf{\hat g}_k^H \overline{\mathbf{V}}    & -\delta_k\varepsilon_k^2 +\xi\sigma_{\mathrm{z}_k}^2+  \mathbf{\hat g}_k^H \overline{\mathbf{V}} \mathbf{\hat g}_k        \\
           \end{bmatrix} \notag\\
         - &&\hspace*{-6mm} \frac{1}{\Gamma_{\mathrm{tol}_k}} \mathbf{U}_{\mathbf{g}_k}^H\overline{\mathbf{W}}\mathbf{U}_{\mathbf{g}_k}\succeq \mathbf{0}, \forall k,\notag
\end{eqnarray}
for $\delta_k\ge 0, k\in\{1,\ldots,K-1\}$ where $\mathbf{U}_{\mathbf{g}_k}=\Big[\mathbf{I}_{N_{\mathrm{T}}}\quad\mathbf{\hat g}_k\Big]$.

Similarly, we rewrite constraints $\mbox{\textoverline{C3}}$, $\mbox{\textoverline{C10}}$, and $\mbox{\textoverline{C11}}$  in the form of (\ref{eqn:S-procedure}) which leads to
\begin{eqnarray}
 \mbox{\textoverline{C3}: }&&\hspace*{-6mm}0\hspace*{-1mm}\ge\hspace*{-1mm}  \max_{\Delta\mathbf{l}_j\in {\Psi}_j} \Delta\mathbf{l}_j^H\big(\frac{\overline{\mathbf{W}}}{\Gamma_{\mathrm{tol}_j}^{\mathrm{PU}}}
 \hspace*{-0.5mm}-\hspace*{-0.5mm}\overline{\mathbf{V}}\big)\Delta\mathbf{l}_j\\
&&\notag\hspace*{-15mm}+
2\mathrm{Re}\Big\{\hspace*{-0.5mm}\mathbf{\hat l}_j^H\big(\frac{\overline{\mathbf{W}}}{\Gamma_{\mathrm{tol}_j}^{\mathrm{PU}}}\hspace*{-0.5mm}-\hspace*{-0.5mm}\overline{\mathbf{V}}\big)\Delta\mathbf{ l}_j\hspace*{-0.5mm}\Big\}\hspace*{-0.5mm}+\hspace*{-0.5mm} \mathbf{\hat l}_j^H\big(\frac{\overline{\mathbf{W}}}{\Gamma_{\mathrm{tol}_j}^{\mathrm{PU}}}\hspace*{-0.5mm}-\hspace*{-0.5mm}\overline{\mathbf{V}}\big)\mathbf{\hat l}_j\hspace*{-0.5mm}-\hspace*{-0.5mm}\xi\sigma_{\mathrm{PU}}^2,\forall k,
\\
\mbox{\textoverline{C10}: }&&\hspace*{-6mm}0\hspace*{-1mm}\ge\hspace*{-1mm}  \max_{\Delta\mathbf{g}_k\in {\Omega}_k} -\eta_k\Big\{\Delta\mathbf{g}_k^H\big(\overline{\mathbf{W}}\hspace*{-0.5mm}+\hspace*{-0.5mm} \overline{\mathbf{V}}\big)\Delta\mathbf{g}_k\\
&&\notag\hspace*{-15mm}+
2\mathrm{Re}\Big\{\mathbf{\hat g}_k^H\big(\overline{\mathbf{W}}\hspace*{-0.5mm}+\hspace*{-0.5mm} \overline{\mathbf{V}}\big)\Delta\mathbf{ g}_k\Big\} \hspace*{-0.5mm}+\hspace*{-0.5mm} \mathbf{\hat g}_k^H\big(\overline{\mathbf{W}}\hspace*{-0.5mm}+\hspace*{-0.5mm} \overline{\mathbf{V}}\big)\mathbf{\hat g}_k\Big\}\hspace*{-0.5mm}-\hspace*{-0.5mm}E_{k}^{\mathrm{SU}},\forall k,\,\mbox{and}
\\
 \mbox{\textoverline{C11}: }&&\hspace*{-6mm}0\hspace*{-1mm}\ge\hspace*{-1mm}  \max_{\Delta\mathbf{l}_j\in {\Psi}_j} \Delta\mathbf{l}_j^H\big(\overline{\mathbf{W}}\hspace*{-0.5mm}+\hspace*{-0.5mm} \overline{\mathbf{V}}\big)\Delta\mathbf{l}_j\\ \notag
&&\hspace*{-15mm}+ 2\mathrm{Re}\Big\{\mathbf{\hat l}_j^H\big(\overline{\mathbf{W}}\hspace*{-0.5mm}+\hspace*{-0.5mm} \overline{\mathbf{V}}\big)\Delta\mathbf{ l}_j\Big\}\hspace*{-0.5mm}+\hspace*{-0.5mm} \mathbf{\hat l}_j^H\big(\overline{\mathbf{W}}\hspace*{-0.5mm}+\hspace*{-0.5mm} \overline{\mathbf{V}}\big)\mathbf{\hat l}_j\hspace*{-0.5mm}-\hspace*{-0.5mm}I_{j}^{\mathrm{PU}},\forall j,
\end{eqnarray}
respectively.

\newcounter{mytempeqncnt}
\begin{figure*}[!t]\setcounter{mytempeqncnt}{\value{equation}}
\setcounter{equation}{38}
  \begin{eqnarray}
\label{eqn:cross-layer-t31-LMI}&&\hspace*{3cm}\underset{\mathbf{\Theta}}
{\mino}\,\,\,\tau\nonumber\\
 &&\notag\hspace*{0.2cm} \mbox{s.t. }\,\mbox{\textoverline{C1}, \textoverline{C4} -- \textoverline{C7}},\,  \mbox{\textoverline{C8}: }\Rank(\overline{\mathbf{W}})=1,\,\mbox{\textoverline{C9}a},\,\mbox{\textoverline{C9}b},\,\mbox{\textoverline{C9}c},\notag\\
 &&\hspace*{-1.2cm}\notag\mbox{\textoverline{C2}: } \mathbf{S}_{\mathrm{\overline{C2}}_k}(\overline{\mathbf{W}},\overline{\mathbf{V}}, \xi,\delta_k)\succeq \mathbf{0},\forall k,\hspace*{1.1cm}\mbox{\textoverline{C3}: } \mathbf{S}_{\mathrm{\overline{C3}}_j}(\overline{\mathbf{W}},\overline{\mathbf{V}}, \xi,\gamma_j)\succeq \mathbf{0},\forall j,\\
&&\hspace*{-1.4cm}\mbox{\textoverline{C10}: }  \mathbf{S}_{\mathrm{\overline{C10}}_k}(\overline{\mathbf{W}},\overline{\mathbf{V}},E_{k}^{\mathrm{SU}}, \varphi_k)\succeq \mathbf{0},\forall k,\hspace*{3mm} \mbox{\textoverline{C11}: } \mathbf{S}_{\mathrm{\overline{C11}}_j}(\overline{\mathbf{W}},\overline{\mathbf{V}},I_{j}^{\mathrm{PU}},\omega_j)\succeq \mathbf{0},\forall j,\notag\\
&&\hspace*{-1.38cm}\mbox{\textoverline{C12}: } \delta_k\ge 0,\forall k,\,\, \,\,\,\, \,\,\,\, \,\,\mbox{\textoverline{C13}: } \gamma_j\ge 0,\forall j,\,\,\,\, \,\mbox{\textoverline{C14}: } \varphi_k\ge 0,\forall k,\,\,\,\, \,\, \,\, \mbox{\textoverline{C15}: } \omega_j\ge 0,\forall j,
\end{eqnarray}\addtocounter{mytempeqncnt}{1}
\setcounter{equation}{35}
\hrulefill
\end{figure*}
 By using Lemma 1, constraint $\mbox{\textoverline{C3}}$, $\mbox{\textoverline{C10}}$, and $\mbox{\textoverline{C11}}$ can be equivalently written as
\begin{eqnarray}\label{eqn:LMI_C3}
&&\mbox{\textoverline{C3}: } \mathbf{S}_{\mathrm{\overline{C3}}_j}(\overline{\mathbf{W}},\overline{\mathbf{V}}, \xi,\gamma_j)\notag\\
\hspace*{-2.5mm}&=&\hspace*{-2.5mm}
          \begin{bmatrix}
       \gamma_j\mathbf{I}_{N_{\mathrm{T}}}+\overline{\mathbf{V}} & \overline{\mathbf{V}}\mathbf{\hat l}_j          \\
       \mathbf{\hat l}_j^H \overline{\mathbf{V}}    & \hspace*{-0.5mm}-\hspace*{-0.5mm}\gamma_j\upsilon_j^2 +\xi\sigma_{\mathrm{PU}}^2+  \mathbf{\hat l}_j^H \overline{\mathbf{V}} \mathbf{\hat l}_j        \\
           \end{bmatrix}\notag\\
            \hspace*{-0.5mm}&-&\hspace*{-0.5mm}\frac{ \mathbf{U}_{\mathbf{l}_j}^H\overline{\mathbf{W}}\mathbf{U}_{\mathbf{l}_j}}{\Gamma_{\mathrm{tol}_j}^\mathrm{PU}} \hspace*{-0.5mm}\succeq \hspace*{-0.5mm}\mathbf{0}, \forall j,
\label{eqn:LMI_C10}\\
&&\mbox{\textoverline{C10}: }\mathbf{S}_{\mathrm{\overline{C10}}_k}(\overline{\mathbf{W}},\overline{\mathbf{V}},E_{k}^{\mathrm{SU}}, \varphi_k)\notag\\
\hspace*{-2.5mm}&=&\hspace*{-2.5mm}
         \begin{bmatrix}
       \varphi_k\mathbf{I}_{N_{\mathrm{T}}}\hspace*{-0.5mm}+\hspace*{-0.5mm}\overline{\mathbf{V}}& \hspace*{-0.5mm}\overline{\mathbf{V}}\mathbf{\hat g}_k          \\ \notag
       \mathbf{\hat g}_h^H \overline{\mathbf{V}}
        & \hspace*{-0.5mm}-\hspace*{-0.5mm}\varphi_k\varepsilon_k^2 \hspace*{-0.5mm}+\hspace*{-0.5mm}\frac{E_{k}^{\mathrm{SU}}}{\eta_k}\hspace*{-0.5mm} +\hspace*{-0.5mm} \mathbf{\hat g}_k^H \overline{\mathbf{V}} \mathbf{\hat g}_k        \\
           \end{bmatrix}\\
           \hspace*{-0.5mm}&+&\hspace*{-0.5mm} \mathbf{U}_{\mathbf{g}_k}^H\overline{\mathbf{W}}\mathbf{U}_{\mathbf{g}_k}\hspace*{-0.5mm}\succeq \hspace*{-0.5mm}\mathbf{0}, \forall k,
\\
\label{eqn:LMI_C11}&&\mbox{\textoverline{C11}: }\mathbf{S}_{\mathrm{\overline{C11}}_j}(\overline{\mathbf{W}},\overline{\mathbf{V}},I_{j}^{\mathrm{PU}},\omega_j)\notag\\
\hspace*{-2.5mm}&=&\hspace*{-2.5mm}
         \begin{bmatrix}
       \omega_j\mathbf{I}_{N_{\mathrm{T}}}-\overline{\mathbf{V}}& -\overline{\mathbf{V}}\mathbf{\hat l}_j          \\
       -\mathbf{\hat l}_j^H \overline{\mathbf{V}}
        & -\omega_j\upsilon_j^2 +I_{j}^{\mathrm{PU}} - \mathbf{\hat l}_j^H \overline{\mathbf{V}} \mathbf{\hat l}_j        \\
           \end{bmatrix}\notag\\
           \hspace*{-0.5mm}&-& \hspace*{-0.5mm} \mathbf{U}_{\mathbf{l}_j}^H\overline{\mathbf{W}}\mathbf{U}_{\mathbf{l}_j}\hspace*{-0.5mm}\succeq\hspace*{-0.5mm} \mathbf{0}, \forall j,
\end{eqnarray}
respectively, with $\mathbf{U}_{\mathbf{l}_j}=\Big[\mathbf{I}_{N_{\mathrm{T}}}\quad\mathbf{\hat l}_j\Big]$ and new auxiliary  optimization variables  $\gamma_j\ge 0, j\in\{1,\ldots,J\},$ $\varphi_k\ge 0, k\in\{1,\ldots,K-1\},$ and $\omega_j\ge 0, j\in\{1,\ldots,J\}$. We note that now constraints $\mbox{\textoverline{C2}}, \mbox{\textoverline{C3}}, \mbox{\textoverline{C10}}$, and $\mbox{\textoverline{C11}}$  involve only a finite number of convex constraints which facilitates an efficient resource allocation algorithm design. As a result, we obtain the following equivalent optimization problem on the top of this page in \eqref{eqn:cross-layer-t31-LMI},  where $\mathbf{\Theta}\triangleq\{ \mathbf{I}^{\mathrm{PU}},\mathbf{E}^{\mathrm{SU}}, \xi,\tau,\boldsymbol{\gamma},\boldsymbol{\delta},\boldsymbol{\varphi},\boldsymbol{\omega},\overline{\mathbf{V}}\in \mathbb{H}^{N_{\mathrm{T}}},\mathbf{\overline{W}}\in \mathbb{H}^{N_{\mathrm{T}}}\}$ denotes the set of optimization variables after transformation;  $\mathbf{I}^{\mathrm{PU}}$ and $\mathbf{E}^{\mathrm{SU}}$ are auxiliary variable vectors  with elements ${I}_j^{\mathrm{PU}},\forall j\in\{1,\ldots,J\},$ and ${E}_k^{\mathrm{SU}},\forall k\in\{1,\ldots,K-1\}$, respectively; $\boldsymbol{\delta},\boldsymbol{\gamma},\boldsymbol{\varphi}$, and $\boldsymbol{\omega}$ are auxiliary optimization variable vectors with elements $\delta_k,\gamma_j,\varphi_k$, and $\omega_j\ge 0$ connected to the constraints in (\ref{eqn:LMI_C2})--(\ref{eqn:LMI_C11}), respectively.

The remaining non-convexity of problem (\ref{eqn:cross-layer-t31-LMI}) is due to the combinatorial rank constraint in $\mbox{\textoverline{C8}}$ on the beamforming matrix $\overline{\mathbf{W}}$. In fact, by relaxing constraint $\mbox{\textoverline{C8}: }\Rank(\overline{\mathbf{W}})=1$, i.e., removing it from (\ref{eqn:cross-layer-t31-LMI}), the considered problem is a convex SDP and can be solved efficiently by numerical solvers such as SeDuMi \cite{JR:SeDumi} and CVX \cite{cvx}. Besides, if the obtained solution for the relaxed SDP  problem admits a rank-one matrix $\overline{\mathbf{W}}$, i.e.,  $\Rank(\overline{\mathbf{W}})=1$, then it is the optimal solution of the  original problem.  In general,  the adopted SDP relaxation  may not yield a rank-one solution and  the result of the relaxed problem serves as  a performance upper bound for the original
problem. Nevertheless, in the following,  we show that there always exists an optimal solution  for the relaxed problem with $\Rank(\mathbf{\overline{W}}) = 1$. In particular, the optimal solution of the relaxed version of  (\ref{eqn:cross-layer-t31-LMI}) with $\Rank(\mathbf{\overline{W}}) = 1$  can be obtained from the solution of the dual problem of the SDP relaxed version of (\ref{eqn:cross-layer-t31-LMI}). In other words, we can obtain the global optimal solutions of non-convex Problems 1, 2, 3, and 4. Furthermore, we propose two suboptimal resource allocation schemes which do not require the solution of the dual problem of the SDP relaxed problem.

 \subsection{Optimality Condition for SDP Relaxation}
In this subsection, we first reveal the tightness of the proposed SDP relaxation. The existence of a rank-one solution  matrix $\overline{\mathbf{W}}$ for the  relaxed SDP version of transformed Problem 4 is summarized in
the following theorem which is based on  \cite[Proposition 4.1]{JR:rui_zhang}\footnote{We note that \cite[Proposition 1]{JR:rui_zhang} was designed for a  communication system with SWIPT for  the case of perfect CSI and single objective optimization. The application of the results of \cite{JR:rui_zhang} to the  scenarios considered  in this paper is only possible after performing the steps and transformations introduced in Section \Rmnum{2} to Section \Rmnum{4}-A.}.

\begin{Thm}\label{prop1}Suppose the optimal solution for the SDP relaxed version of (\ref{eqn:cross-layer-t31-LMI}) is denoted as $\mathbf{\Lambda}^*\triangleq\{\mathbf{I}^{\mathrm{PU}*},\mathbf{E}^{\mathrm{SU}*}, \xi^*,\tau^*,\boldsymbol{\gamma^*}, \boldsymbol{\delta^*},\boldsymbol{\varphi^*},\boldsymbol{\omega^*},\overline{\mathbf{V}}^*,\mathbf{\overline{W}}^*\}$ and  $\Rank(\overline{\mathbf{W}}^*)>1$. Then, there exists a feasible solution for the SDP relaxed version of (\ref{eqn:cross-layer-t31-LMI}), denoted as  $\mathbf{\widetilde \Lambda}\triangleq\{\mathbf{\widetilde I}^{\mathrm{PU}},\mathbf{\widetilde E}^{\mathrm{SU}}, \widetilde \xi,\widetilde\tau,\boldsymbol{\widetilde \gamma},\boldsymbol{\widetilde\delta},\boldsymbol{\widetilde\varphi},\boldsymbol{\widetilde\omega},$ ${\mathbf{\widetilde V}},\mathbf{{ \widetilde W}}\}$, which not only achieves the same objective value as $\mathbf{\Lambda}^*$, but also admits a rank-one matrix $\mathbf{\widetilde W}$, i.e.,  $\Rank(\mathbf{\widetilde W})=1$. The solution $\mathbf{\widetilde \Lambda}$ can be constructed exploiting  $\mathbf{\Lambda}^*$ and  the solution of the dual problem of the relaxed version of  (\ref{eqn:cross-layer-t31-LMI}).
\end{Thm}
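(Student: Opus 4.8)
The plan is to analyse the SDP relaxed version of \eqref{eqn:cross-layer-t31-LMI} (obtained by dropping the rank constraint \textoverline{C8}) via its Lagrangian dual. Under the standing feasibility assumption a strictly feasible point of the relaxed problem exists (e.g.\ by enlarging the artificial-noise covariance $\overline{\mathbf{V}}$), so Slater's condition holds, strong duality applies, and the KKT conditions characterise optimality. I would write the Lagrangian and keep track of the dual variables acting on $\overline{\mathbf{W}}$: the matrix $\mathbf{Y}\succeq\mathbf{0}$ dual to $\overline{\mathbf{W}}\succeq\mathbf{0}$ in \textoverline{C5}, the scalar $\alpha\ge 0$ dual to the SINR constraint \textoverline{C1}, the positive semidefinite matrices dual to the LMIs \textoverline{C2}, \textoverline{C3}, \textoverline{C10}, \textoverline{C11}, and the scalar multipliers of \textoverline{C4} and \textoverline{C7} (those of \textoverline{C9}a--c couple only to $E_k^{\mathrm{SU}}$, $I_j^{\mathrm{PU}}$ and $\xi$). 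The stationarity condition $\nabla_{\overline{\mathbf{W}}}\mathcal{L}=\mathbf{0}$ can be rearranged into $\mathbf{Y}^* = \mathbf{B}^* - \alpha^*\mathbf{h}\mathbf{h}^H$, where $\mathbf{B}^*\in\mathbb{H}^{N_{\mathrm{T}}}$ collects all remaining dual contributions (the $\mathbf{I}_{N_{\mathrm{T}}}$-terms of \textoverline{C4}/\textoverline{C7} plus $\mathbf{U}_{\mathbf{g}_k}$- and $\mathbf{U}_{\mathbf{l}_j}$-congruences of the LMI dual matrices) and $\mathbf{h}\mathbf{h}^H=\mathbf{H}$ has rank one; note $\mathbf{B}^*=\mathbf{Y}^*+\alpha^*\mathbf{h}\mathbf{h}^H\succeq\mathbf{0}$.

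Two consequences of the KKT conditions then drive the argument. Complementary slackness gives $\mathbf{Y}^*\overline{\mathbf{W}}^*=\mathbf{0}$, so $\mathrm{range}(\overline{\mathbf{W}}^*)\subseteq\nullspace(\mathbf{Y}^*)$; together with $\mathbf{Y}^*=\mathbf{B}^*-\alpha^*\mathbf{h}\mathbf{h}^H$ this yields $\mathbf{B}^*\overline{\mathbf{W}}=\alpha^*\mathbf{h}\,(\mathbf{h}^H\overline{\mathbf{W}})$ for every optimal $\overline{\mathbf{W}}$ — a right-hand side of rank at most one — which is the structural reason a rank-one optimal beamforming matrix can be isolated even if a solver returns $\Rank(\overline{\mathbf{W}}^*)>1$. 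Second, feasibility of \textoverline{C1} forces $\Tr(\mathbf{H}\overline{\mathbf{W}}^*)=\mathbf{h}^H\overline{\mathbf{W}}^*\mathbf{h}>0$ (hence $\xi^*>0$ and, by slackness, $\alpha^*>0$), so $\mathbf{h}$ is not orthogonal to $\mathrm{range}(\overline{\mathbf{W}}^*)$; extracting these rank relations cleanly requires carrying along the matrix-valued slackness equations supplied by the LMIs \textoverline{C2}, \textoverline{C3}, \textoverline{C10}, \textoverline{C11}.

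Guided by this, I would set $\mathbf{\widetilde W}=\frac{\overline{\mathbf{W}}^*\mathbf{h}\mathbf{h}^H\overline{\mathbf{W}}^*}{\mathbf{h}^H\overline{\mathbf{W}}^*\mathbf{h}}$ (well defined by the above; it is the nonzero vector $\overline{\mathbf{W}}^*\mathbf{h}\in\nullspace(\mathbf{Y}^*)$ normalised, hence $\Rank(\mathbf{\widetilde W})=1$, $\mathbf{\widetilde W}\succeq\mathbf{0}$, and by Cauchy--Schwarz $\mathbf{\widetilde W}\preceq\overline{\mathbf{W}}^*$), $\mathbf{\widetilde V}=\overline{\mathbf{V}}^*+\overline{\mathbf{W}}^*-\mathbf{\widetilde W}\succeq\overline{\mathbf{V}}^*\succeq\mathbf{0}$, and keep all the remaining variables of $\mathbf{\Lambda}^*$ unchanged ($\widetilde\xi=\xi^*$, $\widetilde\tau=\tau^*$, $\mathbf{\widetilde E}^{\mathrm{SU}}=\mathbf{E}^{\mathrm{SU}*}$, $\mathbf{\widetilde I}^{\mathrm{PU}}=\mathbf{I}^{\mathrm{PU}*}$, $\boldsymbol{\widetilde\delta}=\boldsymbol{\delta}^*$, and so on). The decisive identity is $\mathbf{\widetilde W}+\mathbf{\widetilde V}=\overline{\mathbf{W}}^*+\overline{\mathbf{V}}^*$: it leaves \textoverline{C4}, \textoverline{C7}, the LMIs \textoverline{C10}, \textoverline{C11}, and the objective-component values $\overline{F_1},\overline{F_3}$ unchanged (they depend on $\overline{\mathbf{W}},\overline{\mathbf{V}}$ only through their sum), $\overline{F_2}=1/\xi$ is untouched so the objective $\tau$ is preserved, and since $\Tr(\mathbf{H}\mathbf{\widetilde W})=\mathbf{h}^H\overline{\mathbf{W}}^*\mathbf{h}=\Tr(\mathbf{H}\overline{\mathbf{W}}^*)$ one also gets $\Tr(\mathbf{H}\mathbf{\widetilde V})=\Tr(\mathbf{H}\overline{\mathbf{V}}^*)$, so \textoverline{C1} is satisfied unchanged (and \textoverline{C5}, \textoverline{C6}, \textoverline{C9}a--c as well).

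The step I expect to be the main obstacle is verifying the robust (S-procedure) constraints \textoverline{C2} and \textoverline{C3}, which — unlike the others — are \emph{not} invariant under redistributing power between $\overline{\mathbf{W}}$ and $\overline{\mathbf{V}}$, since $\mathbf{S}_{\mathrm{\overline{C2}}_k}$ and $\mathbf{S}_{\mathrm{\overline{C3}}_j}$ depend on $\overline{\mathbf{V}}-\overline{\mathbf{W}}/\Gamma_{\mathrm{tol}_k}$ and $\overline{\mathbf{V}}-\overline{\mathbf{W}}/\Gamma_{\mathrm{tol}_j}^{\mathrm{PU}}$ rather than on $\overline{\mathbf{W}}+\overline{\mathbf{V}}$. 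Here I would use $\overline{\mathbf{W}}^*-\mathbf{\widetilde W}\succeq\mathbf{0}$ to get $\mathbf{\widetilde V}-\mathbf{\widetilde W}/\Gamma_{\mathrm{tol}_k}=(\overline{\mathbf{V}}^*-\overline{\mathbf{W}}^*/\Gamma_{\mathrm{tol}_k})+(1+1/\Gamma_{\mathrm{tol}_k})(\overline{\mathbf{W}}^*-\mathbf{\widetilde W})\succeq\overline{\mathbf{V}}^*-\overline{\mathbf{W}}^*/\Gamma_{\mathrm{tol}_k}$, together with the fact that $\mathbf{S}_{\mathrm{\overline{C2}}_k}$ is a fixed affine map plus a fixed congruence of that matrix (hence nondecreasing in the L\"owner order), to conclude $\mathbf{S}_{\mathrm{\overline{C2}}_k}(\mathbf{\widetilde W},\mathbf{\widetilde V},\xi^*,\delta_k^*)\succeq\mathbf{S}_{\mathrm{\overline{C2}}_k}(\overline{\mathbf{W}}^*,\overline{\mathbf{V}}^*,\xi^*,\delta_k^*)\succeq\mathbf{0}$; the analogous computation with $\upsilon_j,\Gamma_{\mathrm{tol}_j}^{\mathrm{PU}}$ handles \textoverline{C3}. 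Collecting all of this, $\mathbf{\widetilde\Lambda}$ is feasible for the relaxed problem, attains the optimal value, and has $\Rank(\mathbf{\widetilde W})=1$ — with the dual point entering through the stationarity and complementary-slackness relations above, which is what pins the rank-one solution inside the optimal face. Finally, taking $\lambda_p=1$ and $\lambda_i=0$ for $i\ne p$ specialises the conclusion to Problems 1, 2 and 3.
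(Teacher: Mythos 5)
Your proposal is correct, and the verification half of it (invariance of \textoverline{C1}, \textoverline{C4}, \textoverline{C7}, \textoverline{C9}--\textoverline{C11} and the objective under a transfer of positive semidefinite mass from $\overline{\mathbf{W}}$ to $\overline{\mathbf{V}}$, plus the L\"owner monotonicity of $\mathbf{S}_{\mathrm{\overline{C2}}_k}$ and $\mathbf{S}_{\mathrm{\overline{C3}}_j}$ with the $(1+1/\Gamma_{\mathrm{tol}})$ factor) is essentially identical to the paper's Appendix B. Where you genuinely diverge is in how the rank-one matrix is extracted. The paper works through the dual: it writes $\mathbf{Y}^*=\mathbf{A}^*-\beta^*\mathbf{H}$, invokes \cite[Proposition 4.1]{JR:rui_zhang} to show that the null space $\mathbf{\Upsilon}$ of $\mathbf{A}^*$ satisfies $\mathbf{H}\mathbf{\Upsilon}=\mathbf{0}$, decomposes $\overline{\mathbf{W}}^*=\sum_t\psi_t\boldsymbol{\phi}_t\boldsymbol{\phi}_t^H+f\mathbf{u}\mathbf{u}^H$, and then must solve an auxiliary convex program to recover $f$ and $\psi_t$ --- all of which requires the Lagrange multiplier matrix $\mathbf{Y}^*$. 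You instead take the closed-form primal projection $\mathbf{\widetilde W}=\overline{\mathbf{W}}^*\mathbf{h}\mathbf{h}^H\overline{\mathbf{W}}^*/(\mathbf{h}^H\overline{\mathbf{W}}^*\mathbf{h})$, whose key properties ($\Rank(\mathbf{\widetilde W})=1$, $\mathbf{\widetilde W}\preceq\overline{\mathbf{W}}^*$ by Cauchy--Schwarz, $\Tr(\mathbf{H}\mathbf{\widetilde W})=\Tr(\mathbf{H}\overline{\mathbf{W}}^*)$, well-definedness from $\Tr(\mathbf{H}\overline{\mathbf{W}}^*)\ge\Gamma_{\mathrm{req}}\sigma_{\mathrm{z}}^2\xi^*>0$) are all verifiable from the primal solution alone; your KKT discussion is only motivational and your closing remark that the dual ``pins'' the construction is not actually used anywhere in your argument. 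This is a real gain: it proves a slightly stronger statement than the theorem as worded (no dual solution is needed), it avoids the appeal to the external Proposition 4.1, it needs no auxiliary optimization for the decomposition coefficients, and it would render the paper's Section IV-C suboptimal schemes --- which exist precisely to handle the case where $\mathbf{Y}^*$ is unavailable from the solver --- unnecessary. What the paper's dual-based route buys in exchange is the explicit structural characterization of $\overline{\mathbf{W}}^*$ in (\ref{eqn:general_structure}), which is of independent interest but is not required to establish the theorem.
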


\begin{proof}
Please refer to Appendix B for the proof of Theorem \ref{prop1} and the method for constructing the optimal solution.
\end{proof}
Since there always exists  an achievable  optimal solution  with a rank-one beamforming matrix  $\mathbf{\widetilde W}$, the global optimum of (\ref{eqn:cross-layer-t31-LMI}) can be obtained  despite the SDP relaxation. By utilizing Theorem 1, we specify the optimal solution of transformed Problems $1$--$3$ in the following corollary.
\begin{figure*}[!t]\setcounter{mytempeqncnt}{\value{equation}}
\setcounter{equation}{40}
\begin{eqnarray}\label{eqn:suboptimal1}\notag
&& \hspace*{45mm}\underset{\mathbf{\Theta}_{\mathrm{sub}}}{\mino}\,\,  \tau\\
\notag&& \mbox{s.t.} \hspace*{25mm}\mbox{\textoverline{C9}a},\,\,\mbox{\textoverline{C9}b},\,\,\mbox{\textoverline{C9}c},\mbox{\textoverline{C12}}-\mbox{\textoverline{C15}},\notag\\
&&\hspace*{-5mm}\mbox{\textoverline{C1}: }\notag\frac{P_b\Tr(\mathbf{H}\overline{\mathbf{ W}}_{\mathrm{sub}})}{ \Tr(\mathbf{H}\overline{\mathbf{V}})+\sigma_{\mathrm{z}}^2}\hspace*{-0.5mm} \ge\hspace*{-0.5mm} \Gamma_{\mathrm{req}}, \,\,\mbox{\textoverline{C2}: } \mathbf{S}_{\mathrm{\overline{C2}}_k}(P_b \overline{\mathbf{W}}_{\mathrm{sub}},\overline{\mathbf{V}}, \xi,\delta_k)\succeq \mathbf{0},\forall k,\\
 &&\hspace*{-5mm}\notag\mbox{\textoverline{C3}: } \mathbf{S}_{\mathrm{\overline{C3}}_j}(P_b \overline{\mathbf{W}}_{\mathrm{sub}},\overline{\mathbf{V}}, \xi,\gamma_j)\succeq \mathbf{0},\forall j,\quad\mbox{\textoverline{C4}: }\notag  P_b\Tr(\overline{\mathbf{ W}}_{\mathrm{sub}})  +\Tr(\overline{\mathbf{V}})\le P_{\max}\xi,\\
 &&\hspace*{-5mm}\mbox{\textoverline{C5}:}\,\, P_b\ge 0,\mathbf{\overline{V}}\succeq \mathbf{0},\quad\mbox{\textoverline{C6}:}\,\, \xi \ge 0,\quad\mbox{\textoverline{C7}:}\,\, P_b\Tr(\overline{\mathbf{ W}}_{\mathrm{sub}})+\Tr(\overline{\mathbf{V}})= 1,\quad \notag\\
&&\hspace*{-7mm}\mbox{\textoverline{C10}: }  \mathbf{S}_{\mathrm{\overline{C10}}_k}(P_b \overline{\mathbf{W}}_{\mathrm{sub}},\overline{\mathbf{V}}, {E}^{\mathrm{SU}}_k,\varphi_k)\succeq \mathbf{0},\forall k,\quad \mbox{\textoverline{C11}: } \mathbf{S}_{\mathrm{\overline{C11}}_j}(P_b \overline{\mathbf{W}}_{\mathrm{sub}},\overline{\mathbf{V}}, {I}^{\mathrm{PU}}_j,\omega_j)\succeq \mathbf{0},\forall j,
\end{eqnarray}\hrulefill\setcounter{equation}{41}
\end{figure*}
\begin{Cor} Transformed Problems $1$--$3$ can be solved optimally by applying SDP relaxation and the solution of each problem can be obtained with the method provided in the proof of Theorem $1$.  In particular,  Problem $p$ can be solved by solving  Problem \ref{Prob:multi} with $\lambda_p=1$, $\lambda_i=0, \forall i\ne p$, $i\in\{1,2,3\}$, and setting $F_p^*,\forall p\in\{1,2,3\},$ to any non-negative and finite constant\footnote{$F_p^*,\forall p\in\{1,2,3\}$, are considered to be given constants in Problem $4$ for studying the trade-offs between objective functions $1$, $2$, and $3$. Setting $F_p^*=c$ where $0<c<\infty$ is a constant in Problem $4$ is used for recovering the solution of Problems $1$, $2$, and $3$. However, this does not imply that the optimal value of Problem $p$ is equal to $c$.}.
\end{Cor}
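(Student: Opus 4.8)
The plan is to show that, for each $p\in\{1,2,3\}$, the stated choice of weights reduces the SDP relaxation of transformed Problem~4 to that of transformed Problem~$p$, so that Theorem~\ref{prop1} and the rank-one construction in its proof apply without modification.

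First I would fix $p$ and substitute $\lambda_p=1$ and $\lambda_i=0$ for $i\ne p$ into the objective $\max_{q\in\{1,2,3\}}\{\lambda_q(F_q(\mathbf{w},\mathbf{V})-F_q^*)/\abs{F_q^*}\}$ of Problem~\ref{Prob:multi}. The two terms with $i\ne p$ vanish identically, leaving $\max\{0,(F_p(\mathbf{w},\mathbf{V})-F_p^*)/\abs{F_p^*}\}$; since $F_p^*$ is a fixed finite constant, this is a monotone function of $F_p(\mathbf{w},\mathbf{V})$ on the regime in which the $F_p$-dependent term is active, so minimizing it over the common feasible set has the same set of optimal resource allocation policies as Problem~$p$. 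This is exactly the notion of equivalence invoked for Problem~\ref{Prob:multi} and Problems~1--3 in Section~\ref{sect:cross-Layer_formulation}.

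Next I would carry this choice through the epigraph form \eqref{eqn:cross-layer-t4}, the auxiliary-variable form \eqref{eqn:cross-layer-t31}, and the LMI form \eqref{eqn:cross-layer-t31-LMI}. The weights $\lambda_i$ enter only as multiplicative factors of the coupling constraints \textoverline{C9}a, \textoverline{C9}b, \textoverline{C9}c; with $\lambda_i=0$ those indexed by $i\ne p$ become $0\le\tau\abs{F_i^*}$, which, together with $\min\tau$ and the surviving \textoverline{C9} for index $p$, is redundant, while $E_k^{\mathrm{SU}}$ and $I_j^{\mathrm{PU}}$ still couple to $\overline{\mathbf{W}},\overline{\mathbf{V}}$ through \textoverline{C10}, \textoverline{C11} exactly as before. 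Hence \eqref{eqn:cross-layer-t31-LMI} with these weights coincides, up to the redundant variable $\tau$ and the redundant inequalities, with the LMI reformulation of transformed Problem~$p$; in particular its feasible set and the appearance of $\overline{\mathbf{W}}$ are unchanged. Theorem~\ref{prop1} then applies verbatim: the SDP relaxation of transformed Problem~$p$ always admits an optimal solution with $\Rank(\overline{\mathbf{W}})=1$, reconstructible from any primal optimum together with the associated dual optimum by the procedure of Appendix~B, and recovering $(\mathbf{w},\mathbf{V})$ from $(\overline{\mathbf{W}},\overline{\mathbf{V}},\xi)$ via \eqref{eqn:change_of_variables} and Proposition~1 yields a global optimum of original Problem~$p$.

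The one place requiring care is the bookkeeping in the first step: one must verify that the affine Tchebycheff normalization, evaluated at a generic constant $F_p^*$ rather than at the true optimal value, neither enlarges nor shifts the set of minimizers of $F_p$ --- equivalently, that the $\max\{0,\cdot\}$ floor introduced by the epigraph/auxiliary-variable reformulation is not binding at the optimum. This is a short monotonicity argument, and it also pins down the admissible range of $F_p^*$; everything else is a direct appeal to Theorem~\ref{prop1}.
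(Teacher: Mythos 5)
Your overall route is the same as the paper's: Corollary 1 is stated there without a separate proof, as an immediate consequence of Theorem 1 together with the observation, already made when Problem 4 was introduced, that concentrating the weights on index $p$ makes Problem 4 equivalent to Problem $p$. Your tracing of the weights through the epigraph form \eqref{eqn:cross-layer-t4}, the auxiliary-variable form \eqref{eqn:cross-layer-t31}, and the LMI form \eqref{eqn:cross-layer-t31-LMI}, followed by an appeal to Theorem 1 and Proposition 1, is exactly that argument made explicit.

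However, the step you defer as ``a short monotonicity argument'' is the only place where anything can go wrong, and as stated it does not go through for an arbitrary non-negative constant $F_p^*$. With $\lambda_i=0$ for $i\ne p$ the constraints \textoverline{C9}$i$ do not disappear: they read $0\le\tau\abs{F_i^*}$, i.e., they impose $\tau\ge 0$ whenever $F_i^*\ne 0$. The reduced objective is therefore $\big[(F_p-F_p^*)/\abs{F_p^*}\big]^+$, whose set of minimizers is the entire sublevel set $\{F_p\le F_p^*\}$ whenever that set is nonempty --- strictly larger than $\arg\min F_p$. For $p=1$ this always happens, since $\overline{F_1}\le 0$ while $F_1^*$ is required to be non-negative, so every feasible point attains $\tau=0$ and the reduction to Problem 1 is vacuous; for $p=2,3$ it happens whenever $F_p^*$ exceeds the optimal value of $F_p$. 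To close the gap one must either take the inactive normalization constants $F_i^*$, $i\ne p$, equal to zero (so that \textoverline{C9}$i$ becomes $0\le 0$ and no floor $\tau\ge 0$ is imposed) or, equivalently, drop the zero-weight constraints; only then is minimizing $\tau$ subject to $(\overline{F_p}-F_p^*)\le\tau\abs{F_p^*}$ a genuine monotone reparametrization of minimizing $\overline{F_p}$ for any finite positive $F_p^*$, and only then does Theorem 1 deliver the rank-one optimal solution of transformed Problem $p$. Your own closing remark that the verification ``pins down the admissible range of $F_p^*$'' concedes the point: the range it pins down excludes cases the corollary nominally allows, so this step must be carried out explicitly rather than deferred.
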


 \setcounter{equation}{39}
\begin{Remark}
 The computational complexity of the proposed optimal algorithm with respect to the numbers of secondary users $K$, the number of primary users $J$, and   transmit antennas $N_{\mathrm{T}}$ at the secondary transmitter can be characterized as
\begin{eqnarray}
&&\bigo\Bigg(\Big(\sqrt{2N_{\mathrm{T}}}\log\big(\frac{1}{\kappa}\big)\Big)
\Big((2K+2J)(2N_{\mathrm{T}})^3\notag\\
&&+(2N_{\mathrm{T}})^2(2K+2J)^2+(2K+2J)^3\Big)\Bigg)
\end{eqnarray}
for a given solution accuracy $\kappa>0$, where  $\bigo(\cdot)$ is the big-O notation. We note that polynomial time computational complexity algorithms are considered to be fast algorithms in the literature \cite[Chapter 34]{book:polynoimal} and are desirable for real time implementation. Besides, the computational complexity can be further reduced by adopting a tailor made interior point method \cite{JR:complexity1,JR:complexity2}. Also, we would like to emphasize that in practice, $\lambda_1,\lambda_2,$ and $\lambda_3$ are given parameters and thus we only need to compute one point of the trade-off region.
\end{Remark}

\subsection{Suboptimal Resource Allocation Schemes}
As discussed in Appendix B, constructing the optimal solution $\mathbf{\widetilde \Lambda}$ with $\Rank(\mathbf{\widetilde W})=1$ requires the solution of the  dual problem of problem (\ref{eqn:cross-layer-t31-LMI}) as the Lagrange multiplier matrix $\mathbf{Y}^*$ is needed  in (\ref{eqn:Y}). Nevertheless, $\mathbf{Y}^*$ may not be provided by some numerical solvers and thus the construction of a rank-one solution matrix $\mathbf{\widetilde W}$ may not be possible.  In the following, we propose two suboptimal resource allocation schemes based on the solution of the primal problem of the relaxed version of (\ref{eqn:cross-layer-t31-LMI}) which do not require knowledge of $\mathbf{Y}^*$ when $\Rank(\overline{\mathbf{W}}^*)>1$.

\subsubsection{Suboptimal Resource Allocation Scheme 1}
  The first proposed suboptimal resource allocation scheme is a hybrid scheme and is based on the  solution of the relaxed version of (\ref{eqn:cross-layer-t31-LMI}). We first solve (\ref{eqn:cross-layer-t31-LMI}) by SDP relaxation. If the solution admits a  rank-one $\overline{\mathbf{W}}$, then  the global optimal solution of  (\ref{eqn:cross-layer-t31-LMI}) is obtained. Otherwise, we  construct a suboptimal beamforming matrix $ \overline{\mathbf{ W}}_{\mathrm{sub}}$. Suppose $\overline{\mathbf{W}}$ is a matrix with rank $N$. Then $\overline{\mathbf{W}}$ can be written as  $\overline{\mathbf{W}}=\sum_{t=1}^N \vartheta_t \mathbf{e}_t\mathbf{e}^H_t $, where $\vartheta_t$ and $\mathbf{e}_t\in\mathbb{C}^{N_{\mathrm{T}}\times 1}$ are the descending eigenvalues, i.e.,  $\vartheta_1\ge\vartheta_2\ge,\ldots,\ge\vartheta_t,\ldots,\ge\vartheta_N$, and eigenvectors  associated with $\overline{\mathbf{W}}$, respectively.  Now, we introduce the suboptimal beamforming vector $\overline{\mathbf{ w}}_{\mathrm{sub}}=\mathbf{e}_1$   such that  $\overline{\mathbf{ W}}_{\mathrm{sub}} = \overline{\mathbf{ w}}_{\mathrm{sub}}\overline{\mathbf{ w}}_{\mathrm{sub}}^H$.  Then, we define a scalar optimization variable $P_b$ which controls the power of the suboptimal beamforming matrix. As a result, a new optimization problem is then given by \eqref{eqn:suboptimal1} on the top of this page, where $\mathbf{\Theta}_{\mathrm{sub}}\triangleq\{P_b,\mathbf{I}^{\mathrm{PU}},\mathbf{E}^{\mathrm{SU}}, \xi,\tau,\boldsymbol{\gamma},\boldsymbol{\delta},\boldsymbol{\varphi},\boldsymbol{\omega},\overline{\mathbf{V}}\in \mathbb{H}^{N_{\mathrm{T}}}\}$ is the new set of optimization variables for suboptimal resource allocation scheme 1. The problem formulation in (\ref{eqn:suboptimal1}) is jointly convex with respect to the optimization variables and can be solved by using efficient numerical solvers. Besides, the solution of (\ref{eqn:suboptimal1}) satisfies the constraints of (\ref{eqn:cross-layer-t31-LMI}), thus  the solution of (\ref{eqn:suboptimal1})  serves as a suboptimal solution for (\ref{eqn:cross-layer-t31-LMI}) since the beamforming matrix $\overline{\mathbf{ W}}_{\mathrm{sub}}$ is fixed which leads to reduced degrees of freedom for resource allocation.

\subsubsection{Suboptimal Resource  Allocation Scheme 2}
The second proposed suboptimal resource allocation scheme is also a hybrid scheme. It adopts a similar approach to solve the problem  as   suboptimal resource  allocation scheme $1$, except for the choice of the suboptimal beamforming matrix $\overline{\mathbf{ W}}_{\mathrm{sub}}$ when  $\Rank(\overline{\mathbf{W}}^*)>1$. Here, the choice of beamforming matrix  $\overline{\mathbf{ W}}_{\mathrm{sub}}$  is based on
the rank-one Gaussian randomization scheme  \cite{JR:Gaussian_randomization}.  Specifically, we calculate the eigenvalue decomposition of $\overline{\mathbf{W}}^*=\mathbf{U}\mathbf{\Sigma}\mathbf{U}^H$, where $\mathbf{U}=\Big[\mathbf{e}_1\ldots\mathbf{e}_N\Big]$ and $\mathbf{\Sigma}=\diag\big({\vartheta_1},\ldots,{\vartheta_N}\big)$  are an $N_\mathrm{T}\times N_\mathrm{T}$ unitary matrix and a diagonal matrix, respectively. Then, we adopt the suboptimal beamforming vector $\overline{\mathbf{ w}}_{\mathrm{sub}}=\mathbf{U}\mathbf{\Sigma}^{1/2}\mathbf{r},  \overline{\mathbf{ W}}_{\mathrm{sub}}=P_b\overline{\mathbf{ w}}_{\mathrm{sub}}\overline{\mathbf{ w}}_{\mathrm{sub}}^H$, where $\mathbf{r}\in {\mathbb C}^{N_{\mathrm{T}}\times 1}$ and $\mathbf{r}\sim {\cal CN}(\mathbf{0}, \mathbf{I}_{N_{\mathrm{T}}})$. Subsequently, we follow the same approach as in (\ref{eqn:suboptimal1}) for optimizing $\mathbf{\Theta}_{\mathrm{sub}}$ and obtain a suboptimal rank-one solution $P_b \overline{\mathbf{ W}}_{\mathrm{sub}}$.  We note that suboptimal resource  allocation scheme $2$ provides flexibility for trading computational complexity and system performance which is not offered by scheme $1$. In fact, by executing  scheme $2$ repeatedly for different  Gaussian distributed random vectors $\mathbf{r}$,  the performance of scheme $2$ can be improved by selecting the best $\overline{\mathbf{ w}}_{\mathrm{sub}}=\mathbf{U}\mathbf{\Sigma}^{1/2}\mathbf{r}$ over different trials.

\begin{Remark}\label{eqn:from_fracitonal_to_non_fractional}
We note  that
the solution of the total received interference power minimization  \Big($\underset{\mathbf{V}\in \mathbb{H}^{N_{\mathrm{T}}},\mathbf{w}
}{\mino}\,\underset{\Delta\mathbf{l}_j\in {\Psi}_j}{\max}$  $\mathrm{IP}(\mathbf{w},\mathbf{V})$\Big) and total harvested power maximization \Big($\underset{\mathbf{V}\in \mathbb{H}^{N_{\mathrm{T}}},\mathbf{w}}{\maxo}\,  \underset{{\Delta\mathbf{g}_k\in {\Omega}_k}}{\min} \mathrm{HP}(\mathbf{w},\mathbf{V})$\Big) problems can be obtained by applying Corollary 1 and solving Problem 4 after setting $\zeta=1$ and removing constraint $\mbox{\textoverline{C7}}$.
\end{Remark}

\section{Results}
\label{sect:result-discussion} We evaluate the
system performance of the proposed resource allocation schemes using simulations. The important simulation parameters are summarized in Table \ref{tab:parameters}.  A reference distance of $2$ meters  for the path loss model is selected.  There are $K$ receivers uniformly distributed between the reference distance and the maximum service distance of $20$ meters in the secondary network. Besides, we assume that the primary transmitter is  $40$ meters away from the secondary transmitter. In particular,  there are $J$ primary receivers  uniformly distributed between $20$ meters and $40$  from the secondary transmitter, cf. Figure
  \ref{fig:simulation_model}.     Because of  path loss and channel fading, different  secondary receivers experience different interference powers from the primary transmitter\footnote{ From Table \ref{tab:parameters}, we observe that the secondary transmitter, which needs to provide both information and energy, has a higher maximum transmit power budget compared to the primary transmitter which only provides information signals to the receivers in its networks. In fact, by exploiting the extra degrees of freedom offered by the multiple transmit antennas, the secondary transmitter can transmit a high power to the secondary receivers and cause a minimal interference to the primary network.  On the contrary, the primary transmitter is equipped with a single  antenna only and has to transmit with a relatively small power to avoid harmful interference.   }.    To facilitate the presentation, in the sequel, we define the normalized maximum  channel estimation errors of primary receiver $j$ and idle secondary receiver $k$  as  $\sigma_{\mathrm{PU}_j}^2=\frac{\upsilon^2_j}{\norm{\mathbf{l}_j}^2}$ and  $\sigma_{\mathrm{SU}_k}^2=\frac{\varepsilon^2_k}{\norm{\mathbf{g}_k}^2}$, respectively, with $\sigma_{\mathrm{PU}_a}^2=\sigma_{\mathrm{PU}_b}^2,\forall a, b\in\{1,\ldots,J\}$, for all primary receivers and $\sigma_{\mathrm{SU}_c}^2=\sigma_{\mathrm{SU}_d}^2,\forall c, d\in\{1,\ldots,K-1\}$, for all  secondary receivers, respectively.  Unless specified otherwise, we assume normalized maximum  channel estimation errors of  idle secondary receiver $k$ and primary receiver $j$ of  $\sigma_{\mathrm{SU}_k}^2=0.01,\sigma_{\mathrm{PU}_j}^2=0.05,\forall k,j$. Besides, we study the trade-off between the different objective functions via the solution of Problem 4 for two cases. In particular, in Case I, we study  the  trade-off between the objective functions for total harvested power maximization,   total interference power leakage minimization, and  total transmit power minimization, cf. Remark \ref{remark1} and Remark \ref{eqn:from_fracitonal_to_non_fractional}; in Case II, we study the trade-off between the objective functions for energy harvesting efficiency maximization, average IPTR minimization, and average total transmit power minimization.    The average system performance shown in the following sections is obtained by averaging over different realizations of  both path loss and multipath fading.

\begin{figure}[t]
\centering
\includegraphics[width=3.5in]{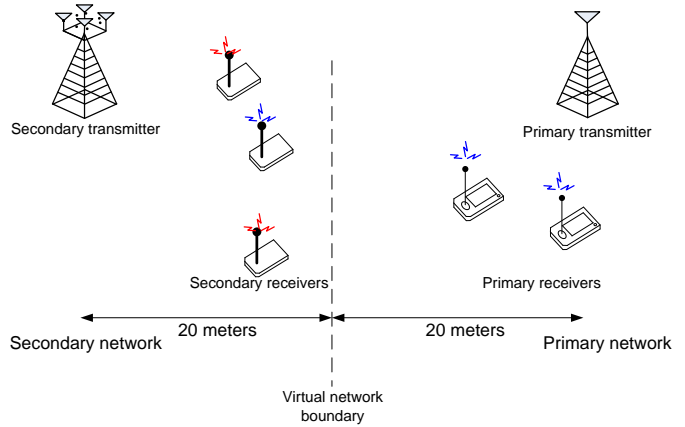}
\caption{CR SWIPT network simulation topology.}
\label{fig:simulation_model}
\end{figure}

\begin{table}[t]\caption{System  parameters.}\label{tab:parameters} \centering{
\begin{tabular}{|L|l|}\hline
\hspace*{-1mm}Carrier center frequency & 915 MHz  \\
\hline
\hspace*{-1mm}Path loss model&  TGn path loss model E \cite{report:tgn}  \\
\hline
\hspace*{-1mm}Multipath fading distribution & \mbox{Rician fading} with Rician factor $3$ dB  \\

\hline
\hspace*{-1mm}Total noise variance, $\sigma_{\mathrm{s}}^2$ &  \mbox{$-23$ dBm}   \\
\hline
\hspace*{-1mm}Transmit antenna gain &  $10$ dBi   \\
\hline
\hspace*{-1mm}Number of transmit antennas at the secondary transmitter, $N_{\mathrm{T}}$ &  $8$   \\
\hline
\hspace*{-1mm}Max. transmit power  allowance at the secondary transmitter, $P_l^{T_{\max}}$ & $30$ dBm \\
\hline
\hspace*{-1mm}Min. required SINR  of the desired secondary receiver, $\Gamma_{\mathrm{req}}$ & $20$ dB  \\
\hline
\hspace*{-1mm}Max. tolerable SINR at the potential eavesdroppers, $\Gamma_{\mathrm{tol}_k}=\Gamma_{\mathrm{tol}_j}^{\mathrm{PU}}$ & $0$ dB  \\
\hline
\hspace*{-1mm}Minimum required secrecy rate, $\log_2(1+\Gamma_{\mathrm{req}})-\log_2(1+\Gamma_{\mathrm{tol}_k})$  &  $5.6582$ bit/s/Hz   \\
           \hline
\hspace*{-1mm}Transmit power of primary transmitter   &  $5$ dBm   \\
           \hline
\end{tabular}\vspace*{-4mm}}
\end{table}

\subsection{Trade-off Regions for Case I and Case II}
Figures \ref{fig:pareto_users1} and \ref{fig:pareto_users2} depict the trade-off regions for the system objectives  for  Case I and Case II achieved by the proposed optimal resource allocation scheme, respectively. There are
 one active secondary receiver, $K-1=3$ idle secondary  receivers, and $J=2$ primary receivers.  The trade-off regions in  Figures \ref{fig:pareto_users1} and \ref{fig:pareto_users2} are obtained by solving Problem $4$
 via varying the values of $0\le \lambda_p\le 1,\forall p\in\{1,2,3\}$,   uniformly  for a step size of $0.04$ such that $\sum_p \lambda_p=1$. We use asterisk markers to denote the trade-off region achieved by the considered resource allocation scheme  and  colored circles to represent the Pareto frontier \cite{JR:MOOP}.
\begin{figure}[t]
\subfigure[System objective trade-off region for Case I.]{
\includegraphics[scale=0.45]{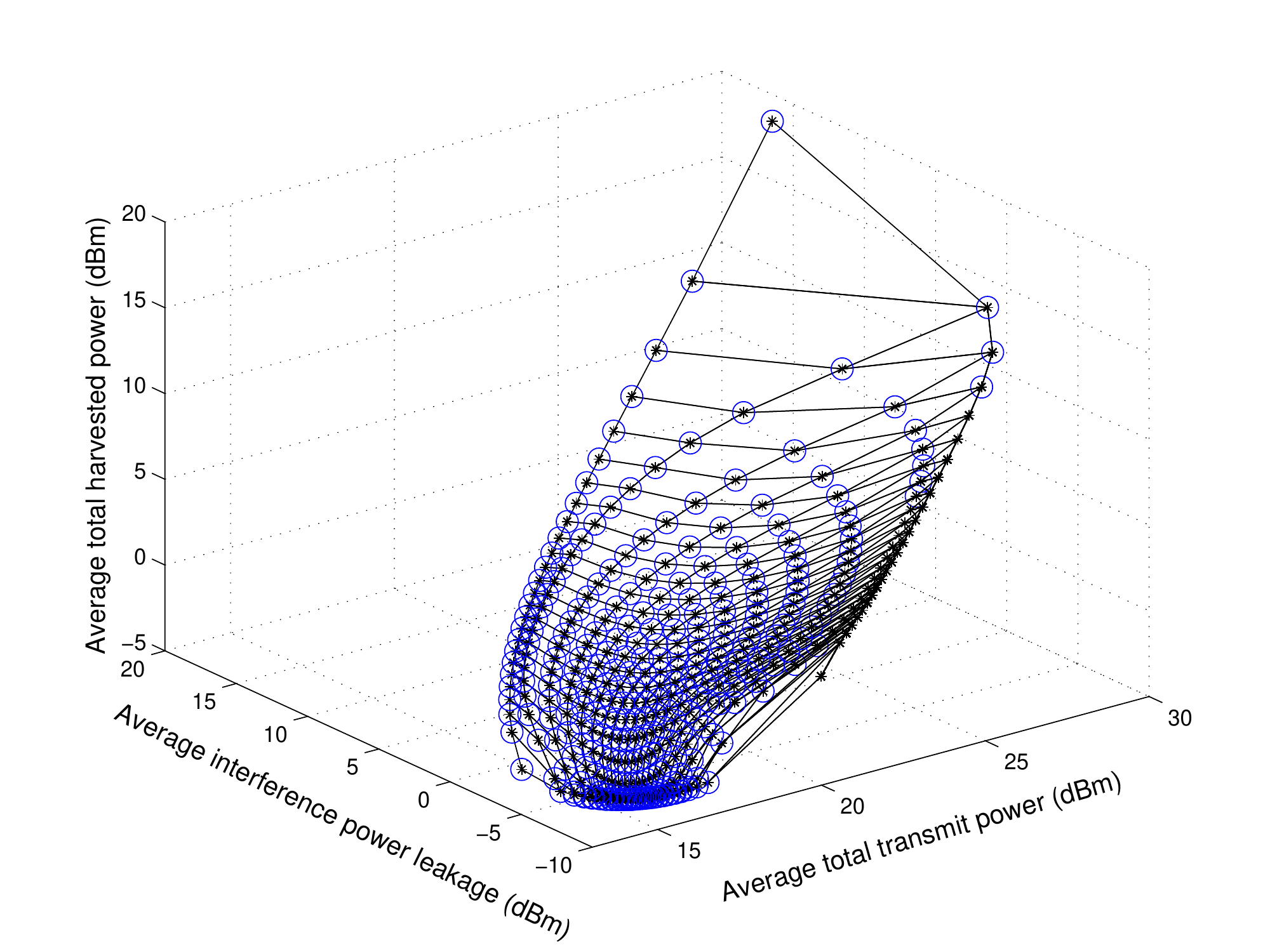}
\label{fig:pareto_users1}}  \subfigure[System objective trade-off region for Case II.]{
\includegraphics[scale=0.45]{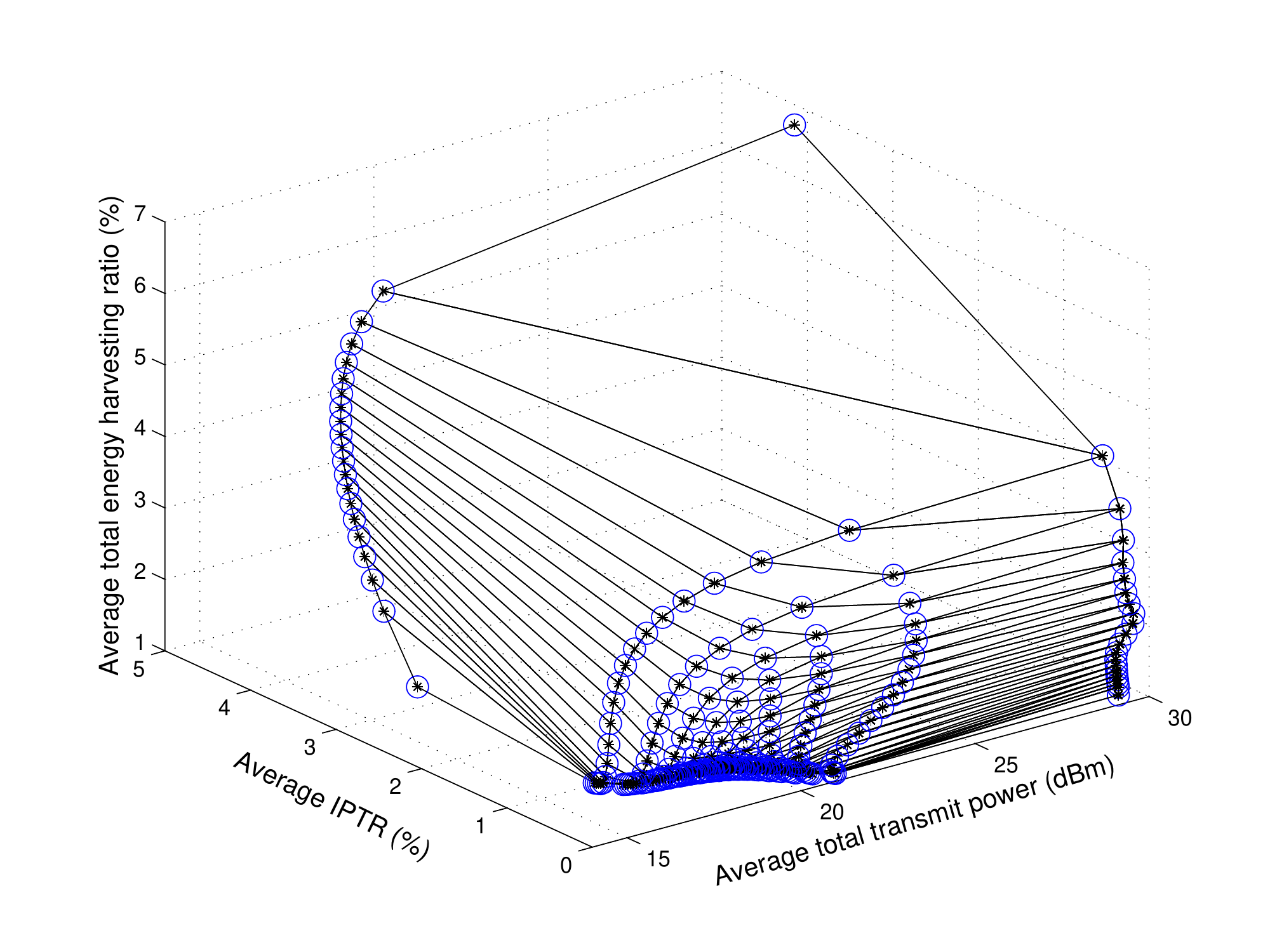}
\label{fig:pareto_users2} }\caption[System objective trade-off region for Case II]
{Three-dimensional system objective trade-off regions achieved by the proposed optimal resource allocation scheme. Asterisk markers denote the trade-off region achieved by the resource allocation scheme  and  colored circles represent the Pareto frontier. }\label{fig:pareto}
\end{figure}
 For the trade-off region for Case I in Figure \ref{fig:pareto_users1}, it can be observed that although the system design objectives of  total transmit power minimization and  total interference power leakage minimization do not share the same optimal solution (a single point which is the minimum of both objective functions), these two objectives  are  partially aligned with each other. Specifically, a large portion of  the trade-off region is concentrated at the bottom of the figure. In other words, a resource allocation policy which minimizes the total transmit power can also reduce the total interference power leakage effectively and vice versa.  On the contrary, the objective of total harvested power maximization conflicts with the other two objective functions. In particular, in order to maximize the total harvested power, the secondary transmitter has to transmit with full power in every time instant despite the imperfection of the  CSI. The associated  resource allocation policy  with full power transmission corresponds to the top corner point in Figure  \ref{fig:pareto_users1}. Besides, if the secondary transmitter employs a large  transmit power,  a  high average total interference power leakage at the primary receivers will result. Furthermore, the total harvested power in the system is in the order of milliwatt which  is sufficient to charge the sensor type idle secondary receivers,  despite the existence of the primary receivers, cf. footnote \ref{label:fn}.

For the trade-off region for Case II in Figure \ref{fig:pareto_users2}, it can be seen that   a significant portion of the trade-off region is concentrated near the bottom and the remaining parts spread over the entire space of the figure. The fact that the trade-off region is condensed near the bottom indicates that  resource allocation policies which minimize the total transmit power can also  reduce the IPTR   to a certain extent and vice versa. However, there also exist resource allocation policies that incur a high transmit power while achieving  a low  IPTR, i.e., the points located near an average total transmit power of $30$ dBm and average IPTR $=0.1\%$.   This can be explained by the fact that the objective functions for energy harvesting efficiency maximization and IPTR minimization are invariant to a  simultaneous positive scaling of both $\mathbf{W}$ and $\mathbf{V},$ e.g. $\frac{\mathrm{HP}(c\mathbf{W},c\mathbf{V})}{\mathrm{TP}(c\mathbf{W},c\mathbf{V})}=\frac{\mathrm{HP}(\mathbf{W},\mathbf{V})}{\mathrm{TP}(\mathbf{W},\mathbf{V})} $ and $\frac{\mathrm{IP}(c\mathbf{W},c\mathbf{V})}{\mathrm{TP}(c\mathbf{W},c\mathbf{V})}=\frac{\mathrm{IP}(\mathbf{W},\mathbf{V})}{\mathrm{TP}(\mathbf{W},\mathbf{V})}$ for $c>0$. As a result, if total transmit power minimization is not a system design objective in Problem $4$, i.e., $\lambda_2=0$, optimal solutions of Problem $4$ in the trade-off region may exist such that the secondary transmitter transmits with a high power while still satisfying all constraints.  On the other hand, to achieve the maximum energy harvesting efficiency in the secondary network, i.e., the top corner point in Figure  \ref{fig:pareto_users2},  the secondary transmitter has to transmit with maximum power which leads to  a high average IPTR.

\begin{figure}[t]
\subfigure[System objective trade-off region for Case I.]{
\includegraphics[scale=0.45]{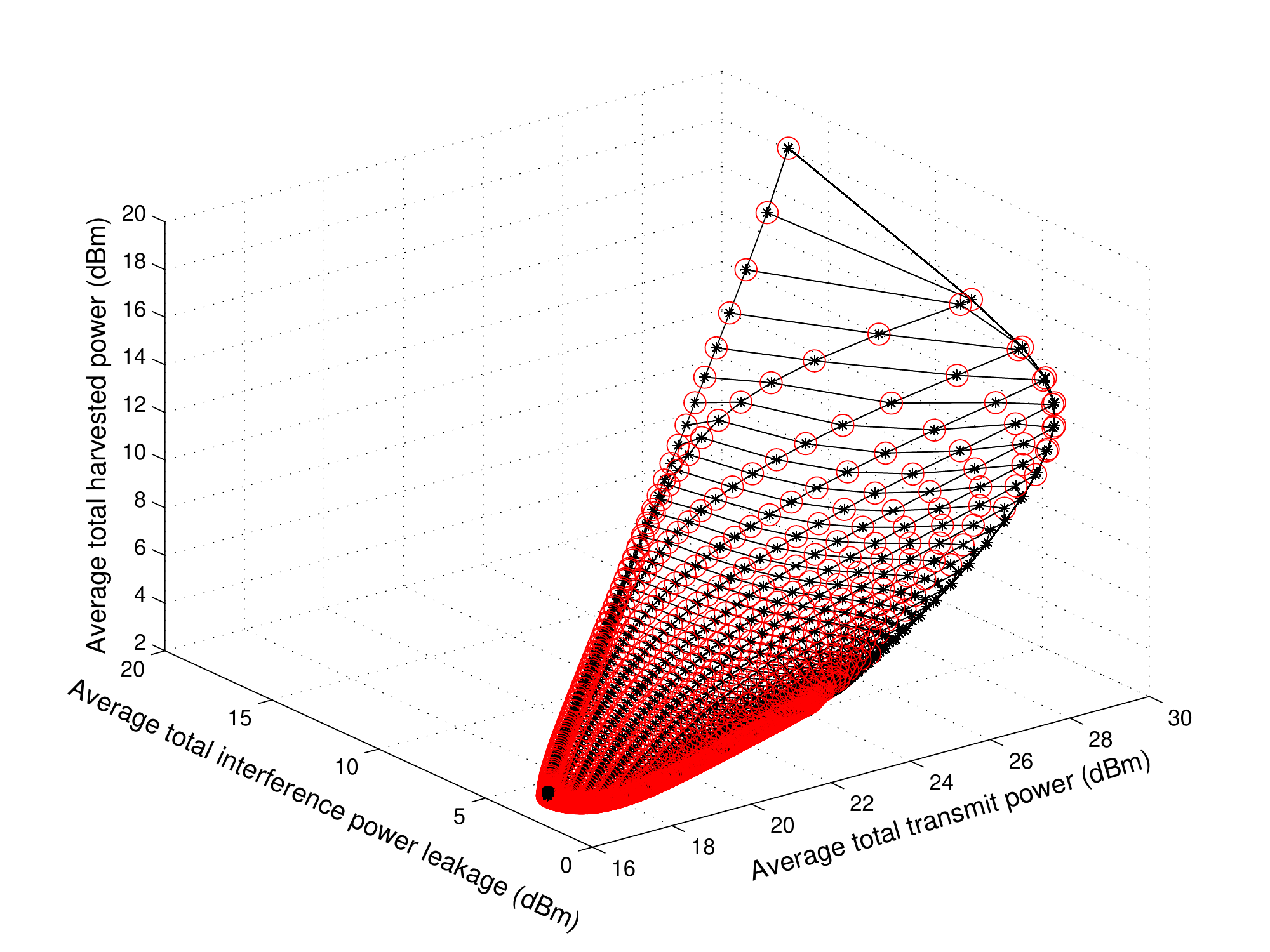}
\label{fig:MRT_pareto_users1}}   \subfigure[System objectives trade-off region for Case II.]{
\includegraphics[scale=0.45]{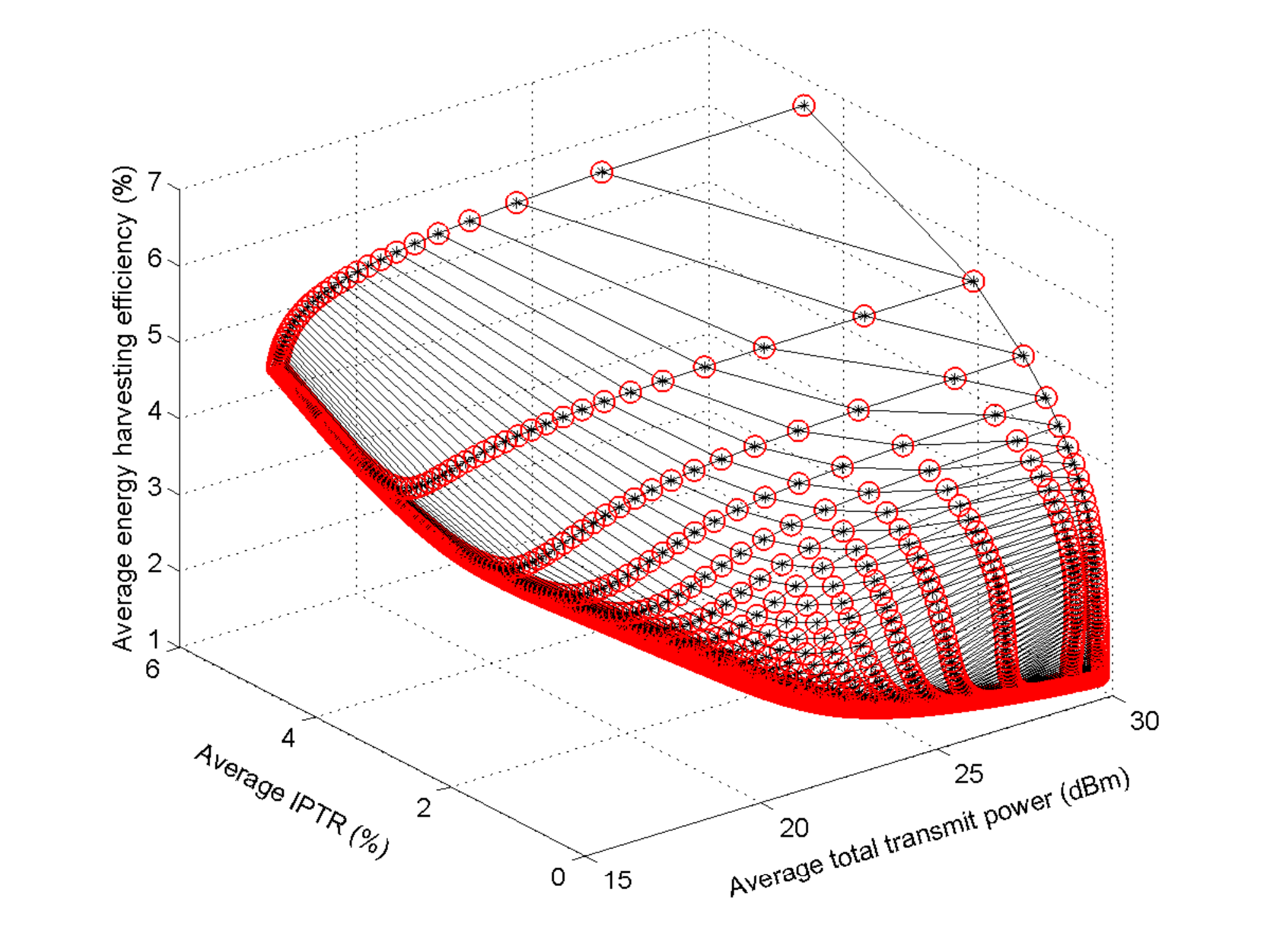}
\label{fig:MRT_pareto_users2} }\caption[System objective trade-off region for Case II]
{Three-dimensional system objective trade-off regions achieved by the baseline scheme. Asterisk markers denote the trade-off region achieved by the baseline resource allocation scheme  and  colored circles represent the Pareto frontier with respect to the baseline resource allocation scheme. }\label{fig:p_MRT}
\end{figure}

For comparison, in Figure \ref{fig:p_MRT}, we plot the trade-off regions achieved by a baseline
resource  allocation scheme for Case I and Case II in Figure
3. For the baseline scheme,  we adopt  maximum ratio transmission (MRT) with respect to the desired  secondary receiver for information beamforming matrix ${\mathbf{W}}$. In other words, the beamforming direction of matrix ${\mathbf{W}}$ is fixed and it has  a rank-one structure.  Then, we optimize the artificial noise covariance matrix $\mathbf{{V}}$ and the power of  ${\mathbf{W}}$ in Problem $4$  via varying the values of $0\le \lambda_p\le 1,\forall p\in\{1,2,3\}$. We note that  the proposed baseline scheme requires the same amount of CSI as the proposed optimal scheme. However, the baseline scheme does not fully exploit the available CSI for resource allocation optimization and, as a result, the required computational complexity is reduced roughly by half compared to the proposed optimal scheme.

 As can be observed from Figures \ref{fig:pareto_users1}, \ref{fig:pareto_users2}, \ref{fig:MRT_pareto_users1}, and \ref{fig:MRT_pareto_users2}, the baseline scheme  is  effective  in maximizing the
energy harvesting efficiency and the total harvested power in the high
transmit power regime and is able to approach the optimal trade-off region achieved by the proposed optimal SDP based resource allocation scheme. This can be explained by the fact that both the optimal scheme and the baseline scheme optimize the covariance matrix of the artificial noise  which contributes most of the  power transferred to the idle secondary receivers. In particular,  by exploiting the spatial degrees of freedom offered by the multiple antennas, multiple narrow energy beams can be created via the proposed optimization framework for transfer of the artificial noise. The narrow energy beams help in focusing energy   on the idle secondary receivers which  increases the energy transfer efficiency.  Nevertheless, when the total transmit power budget of the secondary transmitter is small, the baseline scheme may not be able to satisfy  the QoS constraints which leads to a smaller trade-off region compared to the proposed optimal scheme.

\begin{figure}[t]
\subfigure[Average total harvested power.  ]{
\includegraphics[scale=0.45]{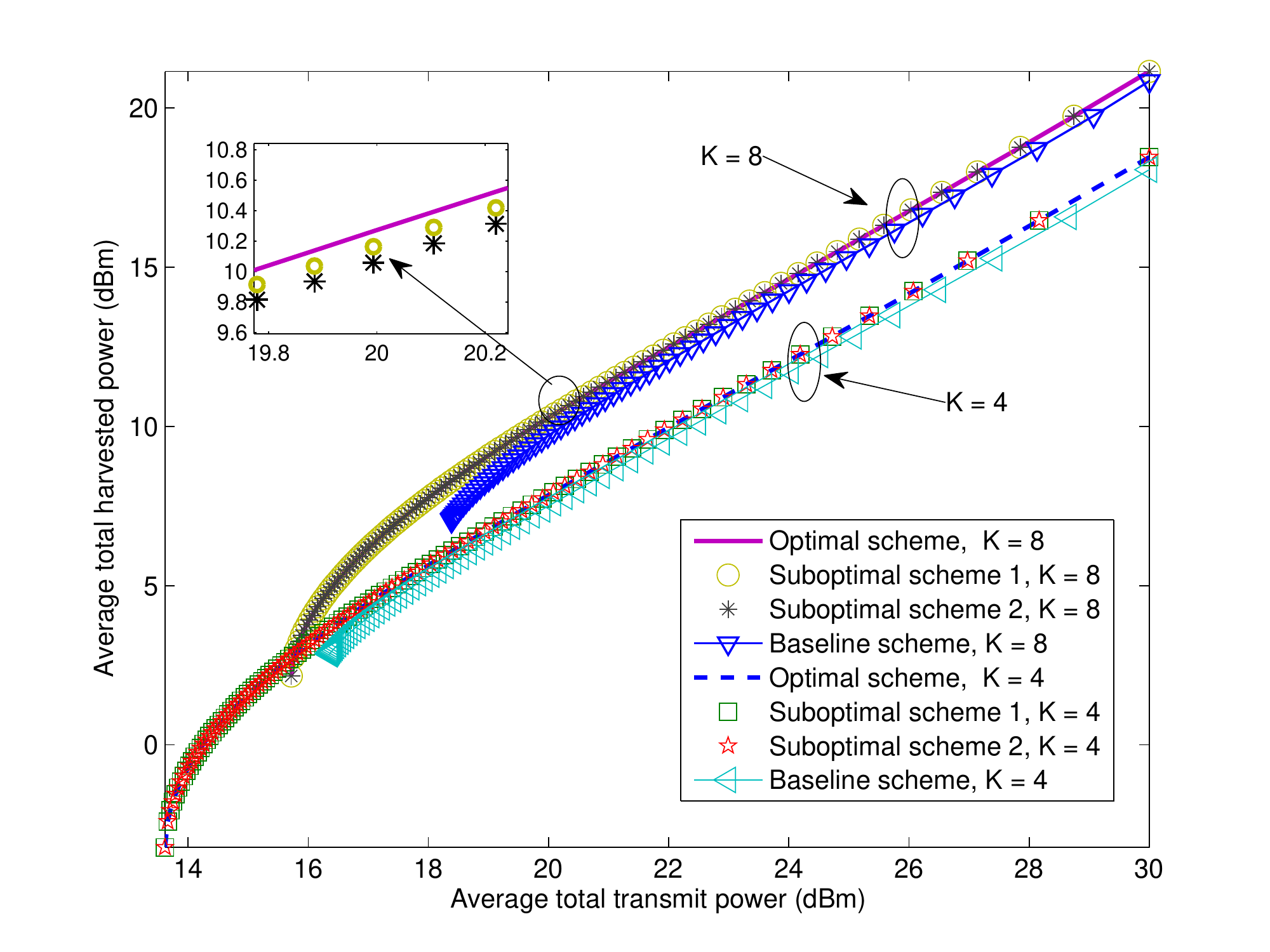}
\label{fig:hp_pt}}  \subfigure[Average energy harvesting efficiency.]{
\includegraphics[scale=0.45]{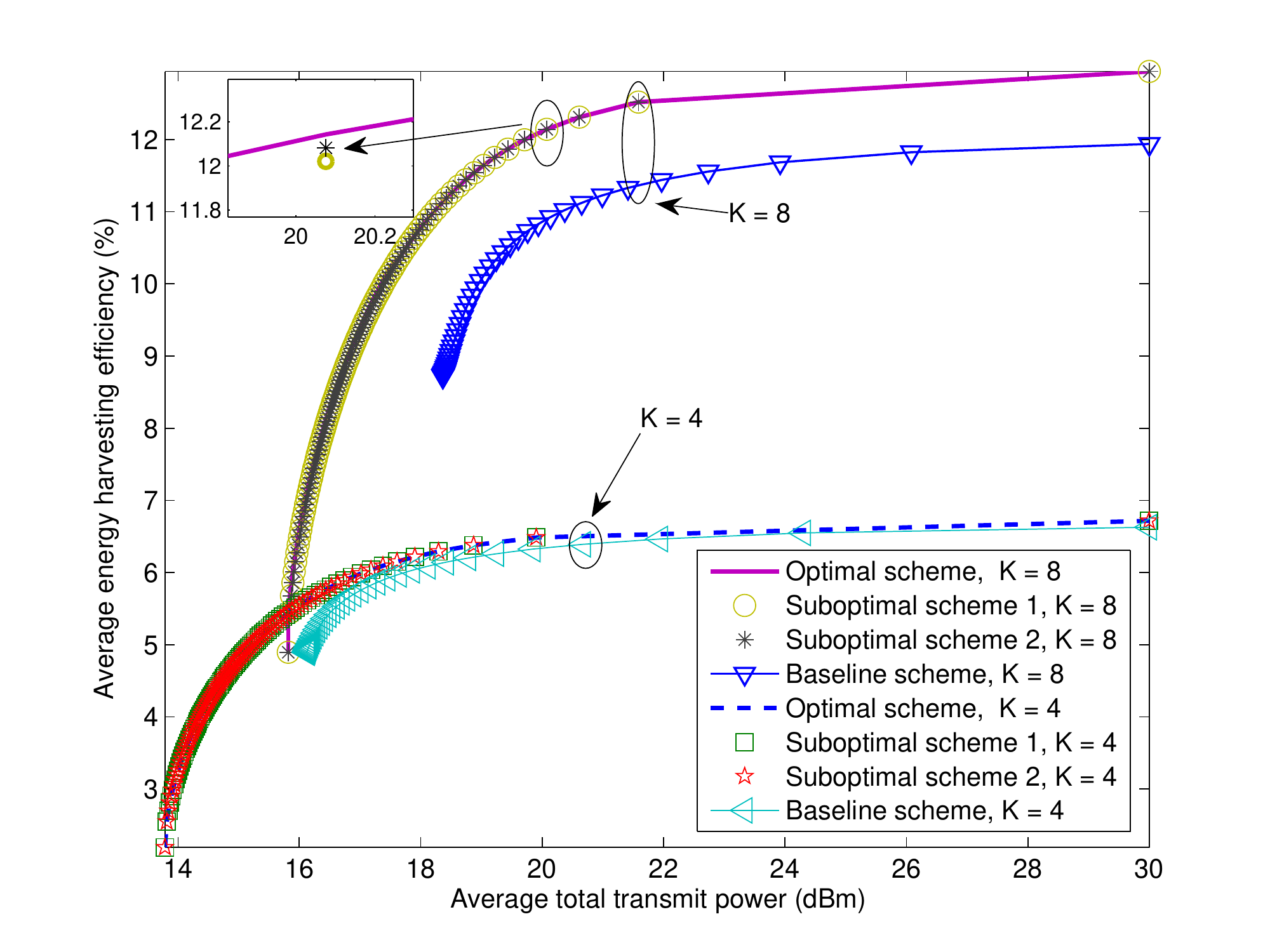}
\label{fig:ee_pt} }\caption[Average total harvested power.]
{Average total harvested power and  average energy harvesting efficiency versus the average total transmit power of the secondary system  for different resource allocation schemes and different numbers of secondary receivers, $K$.  }
\end{figure}

\subsection{Average Total Harvested Power and Average Energy Harvesting Efficiency}
Figures  \ref{fig:hp_pt} and \ref{fig:ee_pt}  depict the average total harvested power and the average energy harvesting efficiency of the secondary system   versus the average total transmit power for different numbers of secondary receivers, $K$, respectively. The curves in  Figures  \ref{fig:hp_pt} and \ref{fig:ee_pt} are obtained for Case I and Case II, respectively. Specifically, for each  case, we solve Problem 4 for  $\lambda_3=0$ and  $0\le \lambda_p\le 1,\forall p\in\{1,2\}$, where the values of $\lambda_p,p\in\{1,2\},$ are  uniformly varied  for a step size of $0.01$  such that $\sum_p \lambda_p=1$. It can be observed from Figures \ref{fig:hp_pt} and \ref{fig:ee_pt}  that the average total harvested power and the average energy harvesting efficiency  are monotonically increasing functions
 with respect to  the total transmit power.  In other words,  total harvested power/energy harvesting efficiency maximization and  total transmit power minimization are conflicting system design objectives.  Also, the proposed optimal scheme outperforms the baseline scheme. In particular, the  proposed optimal scheme fully utilizes the available CSI  and  provides a larger trade-off region, e.g.  $2.7$ dB less transmit power in the considered scenario  compared to the baseline scheme in Figure \ref{fig:hp_pt}.  Besides, the two proposed suboptimal schemes perform close to the trade-off region achieved by the optimal SDP resource allocation scheme.  Furthermore, all trade-off curves are shifted  in the upper-right
direction  if the number of secondary receivers is increased. This is due to the fact that for a larger number of secondary users, there are more idle secondary receivers in the system
harvesting the power radiated by the transmitter which improves the energy harvesting efficiency and the total harvested power. Also, having additional idle secondary receivers means that there are more potential eavesdroppers in the system. Thus, more artificial noise generation is required  for neutralizing information leakage. We note that in all the considered scenarios, the proposed resource allocation schemes are able to guarantee the minimum secrecy data rate requirement of $C_\mathrm{sec}\ge 5.6582$ bit/s/Hz despite the imperfectness of the CSI.

\begin{figure}[t]
\subfigure[Average total harvested power.]{
\includegraphics[scale=0.45]{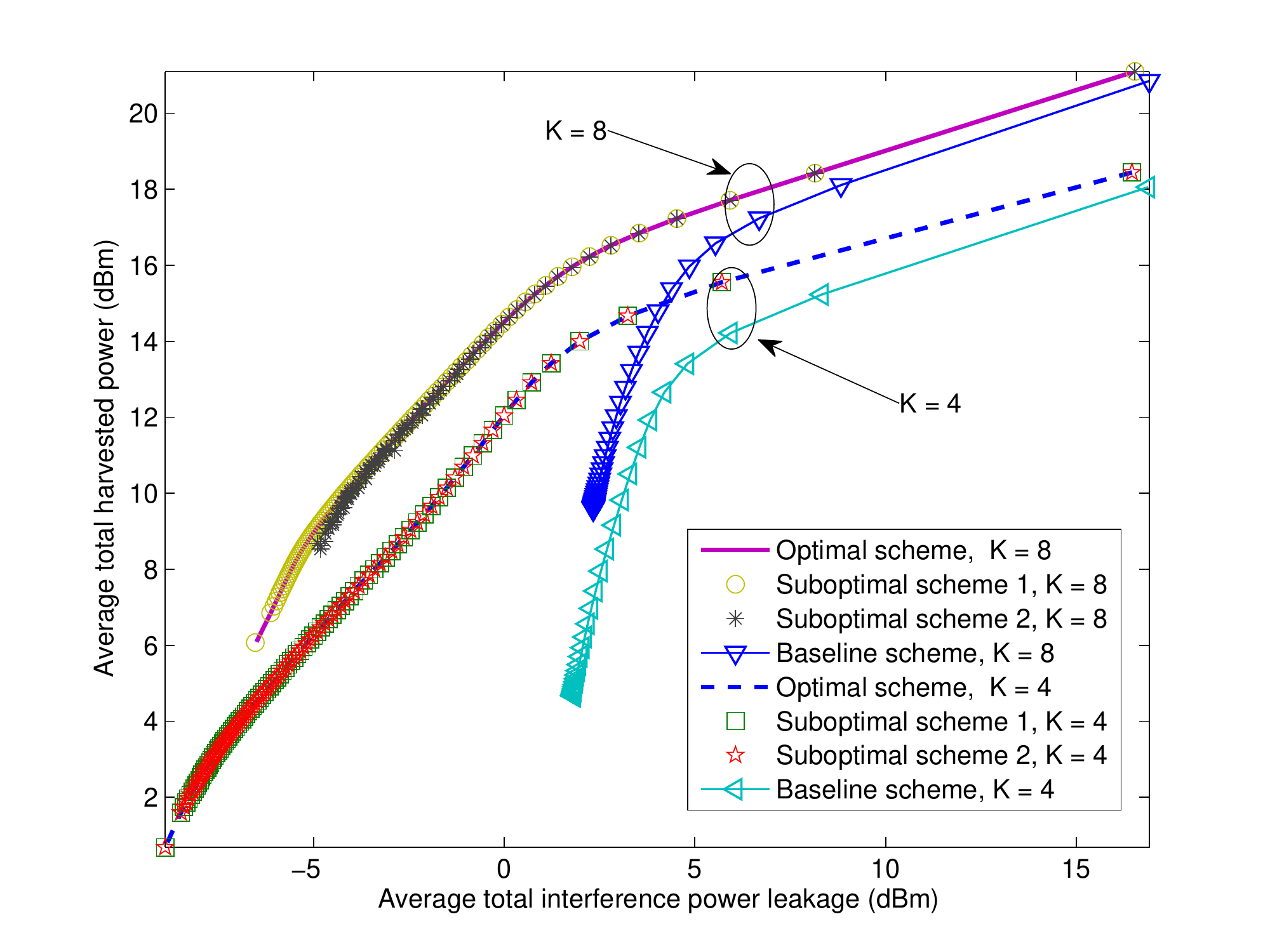}
\label{fig:hp_inteferece}} \subfigure[Average energy harvesting efficiency.]{
\includegraphics[scale=0.45]{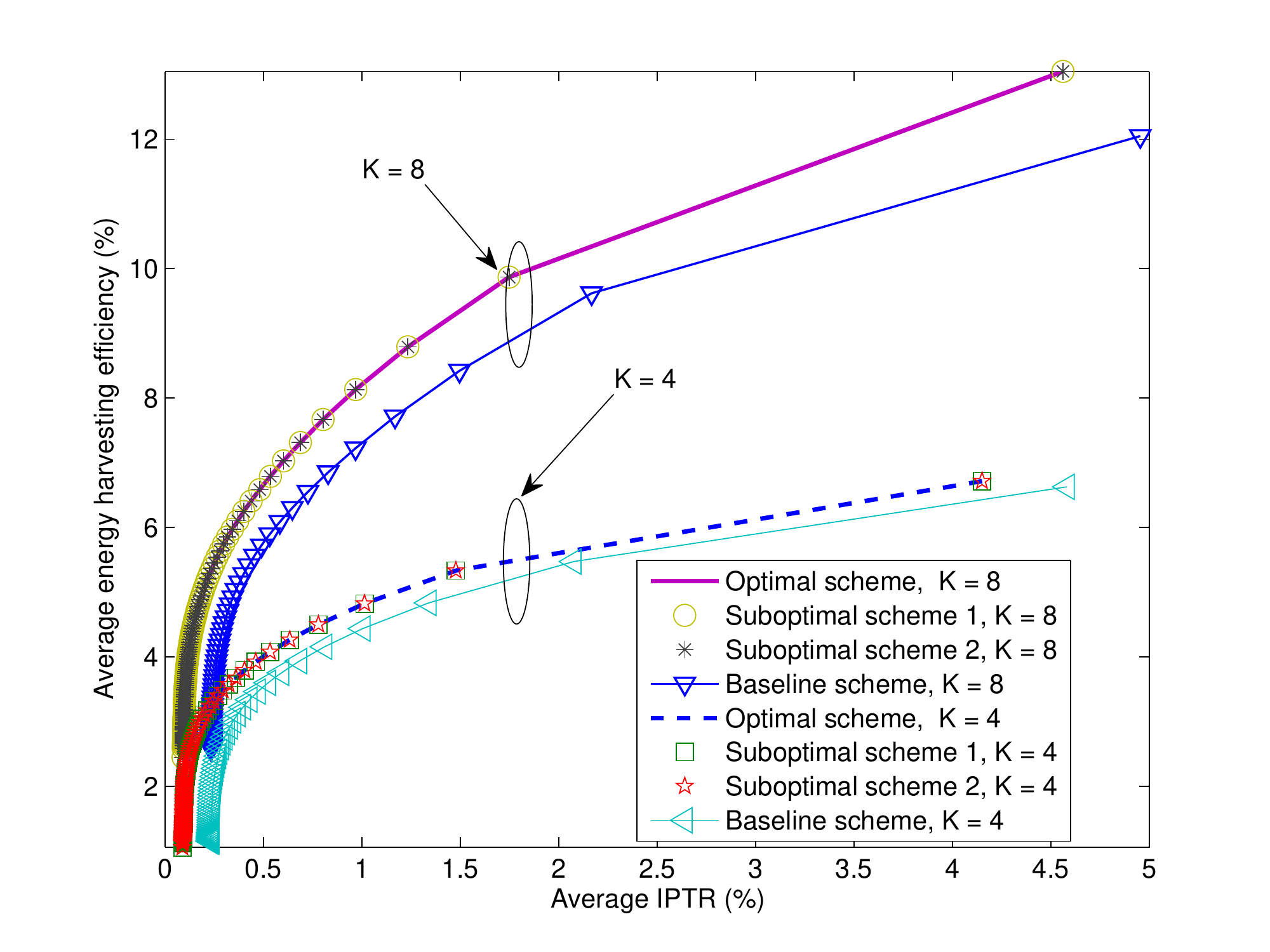}
\label{fig:ee_inteferece} }\caption[System objectives trade-off region for Case II]
{ Average total harvested power and  average energy harvesting efficiency of the secondary system versus the average total interference power leakage and the average  IPTR, respectively,  for different resource allocation schemes and different numbers of secondary receivers, $K$. }\label{fig:ee-inteference-big}
\end{figure}

Figures  \ref{fig:hp_inteferece} and  \ref{fig:ee_inteferece} show the average total harvested power and the average energy harvesting efficiency  of the secondary system versus the average total interference power leakage and the average IPTR, respectively, for different numbers of desired secondary receivers, $K$. The curves in  Figures \ref{fig:hp_inteferece} and  \ref{fig:ee_inteferece} are obtained
by solving Problem 4 for $\lambda_2=0$ and varying the values of $0\le \lambda_p\le 1,\forall p\in\{1,3\}$,  uniformly for a step size of $0.01$ such that $\sum_p \lambda_p=1$ for Case I and Case II, respectively. The average total harvested power and the average energy harvesting efficiency  increase with increasing  average total interference power leakage and   increasing average IPTR, respectively.  This observation indicates that  total harvested power maximization and energy harvesting efficiency maximization are conflicting with total interference power leakage minimization and IPTR minimization, respectively. Besides, the two proposed suboptimal schemes perform close to the trade-off curve achieved by the optimal resource allocation scheme.  Furthermore, all the trade-off curves are shifted in the upper-right direction  as the number of secondary receivers is increased. In fact,  there are more potential eavesdroppers in the system when the number of idle secondary receivers increases. Thus, more artificial noise has to be radiated by the secondary transmitter for guaranteeing communication security which leads to a higher  IPTR and a higher interference power leakage. On the other hand,   the baseline scheme achieves a  smaller trade-off region in both Figures \ref{fig:hp_inteferece} and \ref{fig:ee_inteferece} compared to the  proposed optimal and suboptimal schemes. This performance gap reveals the importance  of the optimization of beamforming matrix $\overline{\mathbf{W}}$  for minimizing the total interference power leakage and the IPTR.

\begin{figure}[t]
\subfigure[Average total interference power leakage.]{
\includegraphics[scale=0.45]{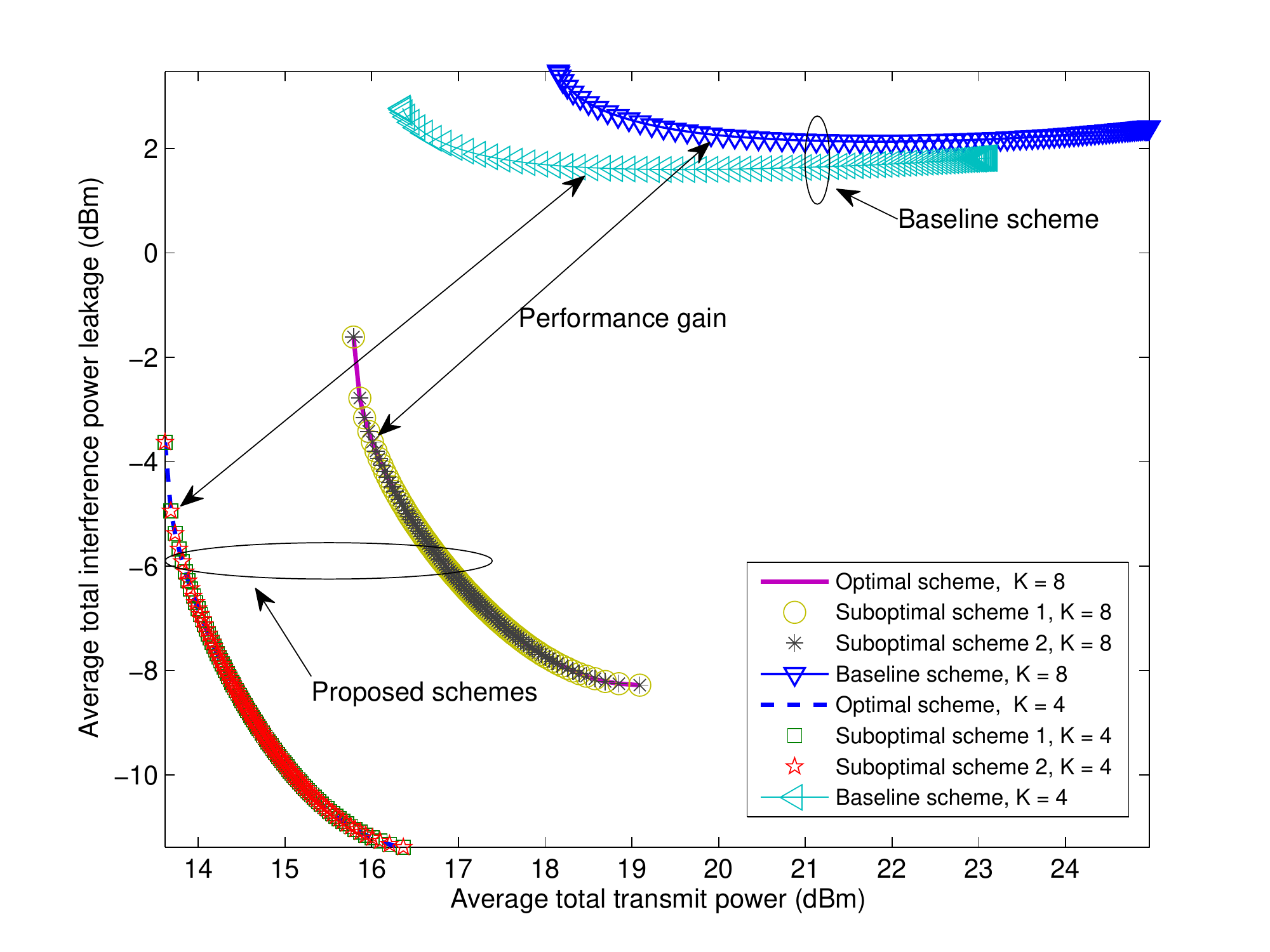}
\label{fig:IP_pt}} \subfigure[Average IPTR.]{
\includegraphics[scale=0.45]{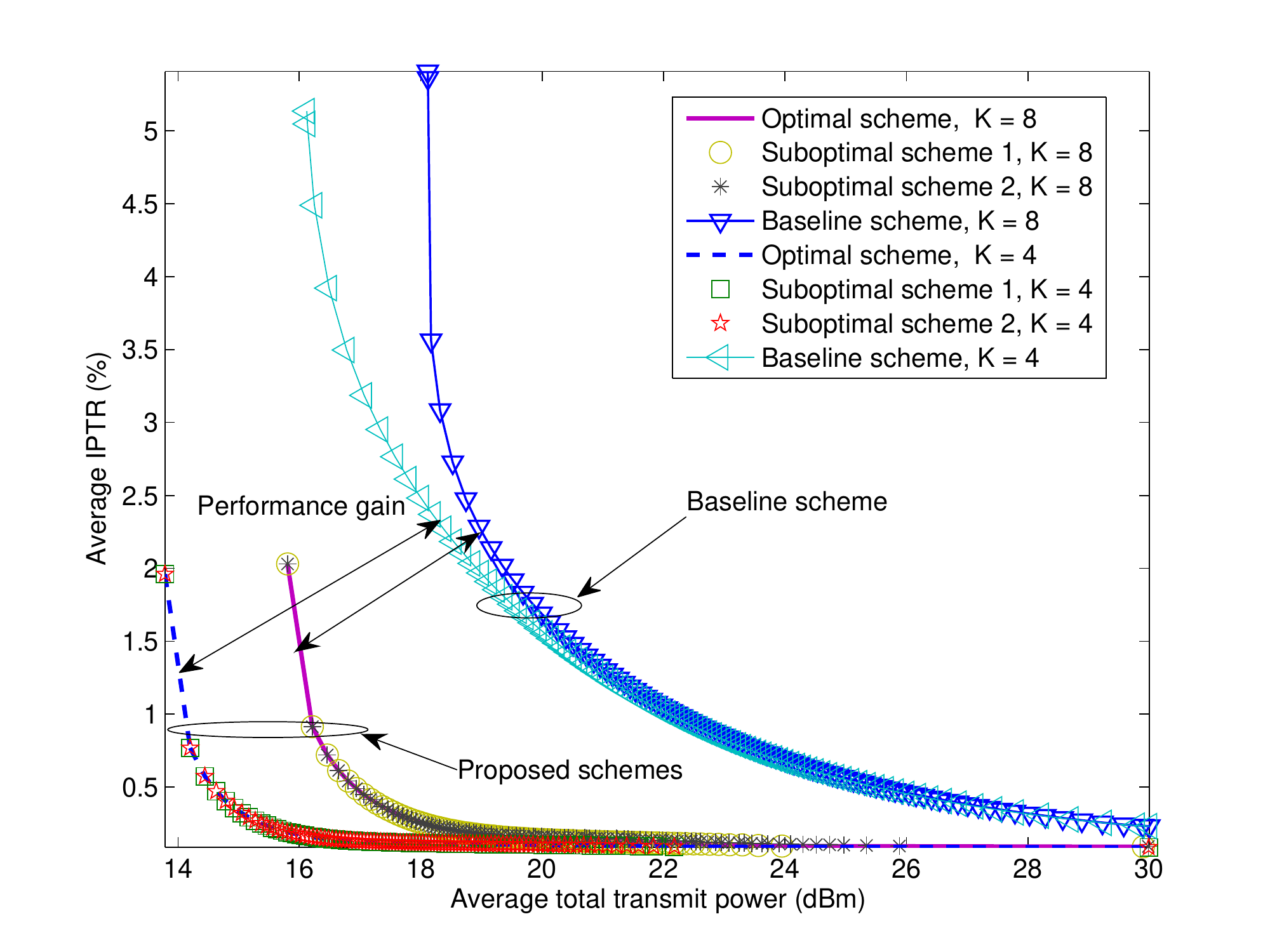}
\label{fig:IPR_pt} }\caption[Average IPTR.]
{Average total interference power leakage and  average IPTR of the secondary system versus the average total transmit power for different resource allocation schemes and different numbers of secondary receivers, $K$.  The double-sided arrows indicate the performance gain achieved by the proposed schemes compared to the baseline scheme. }\label{fig:inteference-pt}
\end{figure}

\subsection{Average Total Interference Power Leakage and Average IPTR}
Figures  \ref{fig:IP_pt} and \ref{fig:IPR_pt}   depict the average total interference power leakage and  the average IPTR  of the secondary system versus the average total transmit power for different numbers of secondary receivers, $K$, respectively. The curves in  Figures \ref{fig:IP_pt} and \ref{fig:IPR_pt}  are obtained by solving Problem 4 for Case I and Case II, respectively,
by setting $\lambda_1=0$ and varying the values of $0\le \lambda_p\le 1,\forall p\in\{2,3\}$,   uniformly  for a step size of $0.01$  such that $\sum_p \lambda_p=1$.  Interestingly, we observe from Figures  \ref{fig:IP_pt} and \ref{fig:IPR_pt} that a higher transmit power may not correspond
to a stronger interference leakage to the primary system or a higher IPTR, if the degrees of freedom offered by the multiple antennas are properly exploited.  Furthermore,  the baseline scheme achieves a significantly worse trade-off  compared to the  proposed optimal and suboptimal schemes, e.g.  $10$ dB more interference leakage in   Figure \ref{fig:IP_pt}. Also, a resource allocation policy that minimizes the total transmit power can only minimize the total interference power leakage simultaneously to a certain extent or vice versa in general. For minimizing  the total interference power leakage, the secondary transmitter sacrifices some degrees of freedom to reduce the  received strengths of both information signal and artificial noise at the primary receivers. Thus, fewer degrees of freedom are available for providing reliable and secure communication to the secondary receivers such that a  higher transmit power is required. In fact, in the proposed optimal scheme,  both  the  beamforming matrix $\mathbf{\overline{W}}$ and the  artificial noise covariance matrix $\mathbf{\overline{V}}$ are jointly optimized for performing resource allocation  based on the CSI of all receivers.  In contrast,  in the baseline scheme, the direction of the beamforming matrix  is fixed which leads to fewer degrees of freedom  for resource allocation. Thus,  the baseline scheme performs worse than the proposed schemes.  On the other hand, for a given required average interference leakage power or average IPTR, increasing the number of secondary receivers induces a higher transmit power in both cases. Indeed,  constraint C2 on communication secrecy becomes more stringent for an increasing number of secondary receivers. In other words, it leads to a smaller feasible solution set for resource allocation optimization. As a result, the  efficiency of the resource allocation schemes in jointly optimizing  the multiple objective functions decreases  for  a larger number of secondary receivers $K$.

\section{Conclusions}\label{sect:conclusion}
In this paper,  we studied the resource allocation algorithm design  for  CR secondary networks with simultaneous wireless power transfer and secure  communication based on a multi-objective optimization framework. We focused on three system design objectives: transmit power minimization, energy harvesting efficiency maximization, and IPTR minimization.  Besides, the proposed multi-objective problem formulation includes total harvested power
maximization and interference power leakage minimization as special cases.  In addition, our problem formulation takes into account the imperfectness of the CSI of the idle secondary receivers and the primary receivers  at the secondary transmitter. By utilizing  the primal and dual optimal solutions of the SDP relaxed problem,  the global optimal solution of the original problem can be constructed. Furthermore, two suboptimal resource allocation schemes were proposed for the case when the solution of the dual problem is unavailable. Simulation results illustrated the performance gains of the proposed  schemes compared to a baseline scheme, and unveiled
the trade-off between the considered system design objectives: (1) A resource allocation policy  minimizing the total transmit power also leads to a low total interference power leakage in general;  (2)  energy harvesting efficiency maximization and transmit power minimization  are conflicting system design objectives; (3)  maximum energy harvesting efficiency is achieved at the expense of high interference power leakage and high transmit power.

\section*{Appendix}
\subsection{Proof of Proposition 1}\setcounter{equation}{41}
The proof is based on the Charnes-Cooper transformation \cite{JR:ken_artifical_noise,JR:linear_fractional}. By applying the change of variables in (\ref{eqn:change_of_variables}) to (\ref{eqn:cross-layer1-flip}), Problem $1$ in (\ref{eqn:cross-layer1-flip}) can be equivalently transformed to
\begin{eqnarray}\label{eqn:proof1}
&&\hspace*{-10mm} \underset{\overline{\mathbf{W}},\overline{\mathbf{V}}\in \mathbb{H}^{N_{\mathrm{T}}},\xi}{\mino}\,\, \,\, \frac{-\sum_{k=1}^{K-1}\eta_k\Tr(\mathbf{G}_k(\overline{\mathbf{W}}+\overline{\mathbf{V}}))}{\Tr(\overline{\mathbf{W}})+\Tr(\overline{\mathbf{V}})}\\
\mbox{s.t.} &&\hspace*{0mm}\mbox{\textoverline{C1}} -\mbox{\textoverline{C5}},\,\, \mbox{\textoverline{C6}:}\,\, \xi > 0,\mbox{\textoverline{C7}:}\,\, \Tr(\overline{\mathbf{W}})+\Tr(\overline{\mathbf{V}})= 1,\mbox{\textoverline{C8}}.\notag
\end{eqnarray}
Now, we show that (\ref{eqn:proof1}) is equivalent to
\begin{eqnarray}\label{eqn:proof2}
&&\hspace*{-12mm} \underset{\overline{\mathbf{W}},\overline{\mathbf{V}}\in \mathbb{H}^{N_{\mathrm{T}}},\xi}{\mino}\,\, \,\, -\sum_{k=1}^{K-1}\eta_k\Tr(\mathbf{G}_k(\overline{\mathbf{W}}+\overline{\mathbf{V}}))\\
 \mbox{s.t.} &&\hspace*{0mm}\mbox{\textoverline{C1}} -\mbox{\textoverline{C5}},\,\,\mbox{\textoverline{C6}:}\,\, \xi \ge 0,\,\, \mbox{\textoverline{C7}:}\,\, \Tr(\overline{\mathbf{W}})+\Tr(\overline{\mathbf{V}})= 1,\mbox{ \textoverline{C8}}.\nonumber
\end{eqnarray}
We denote the optimal solution of (\ref{eqn:proof2}) as $(\overline{\mathbf{W}}^*,\overline{\mathbf{V}}^*,\xi^*)$. If $\xi^*=0$, then $\overline{\mathbf{W}}=\overline{\mathbf{V}}=\mathbf{0}$ according to $\mbox{\textoverline{C3}}$. Yet, this solution cannot satisfy $\mbox{\textoverline{C1}}$ for $\Gamma_{\mathrm{req}}>0$. As a result, without loss of generality and optimality, constraint $\xi>0$ can be replaced by $\xi\ge 0$. The equivalence between transformed Problems 2, 3, and 4 and their original problem formulations can be proved by following a similar approach as above.

\subsection{Proof of Theorem 1}
 The proof is divided into two parts. In the first part, we investigate the structure of the optimal solution $\mathbf{\overline W}^*$ of the relaxed version of problem (\ref{eqn:cross-layer-t31-LMI}). Then, in the second part, we propose a method to construct a solution $\mathbf{\widetilde \Lambda}\triangleq\{\mathbf{\widetilde I}^{\mathrm{PU}},\mathbf{\widetilde E}^{\mathrm{SU}}, \widetilde \xi,\widetilde\tau,\boldsymbol{\widetilde \gamma},\boldsymbol{\widetilde\delta},\boldsymbol{\widetilde\varphi},\boldsymbol{\widetilde\omega},{\mathbf{\widetilde V}},\mathbf{{ \widetilde W}}\}$ that achieves the same objective value as $\mathbf{\Lambda}^*\triangleq\{\mathbf{I}^{\mathrm{PU}*},\mathbf{E}^{\mathrm{SU}*}, \xi^*,\tau^*,$ $\boldsymbol{\gamma^*}, \boldsymbol{\delta^*},\boldsymbol{\varphi^*},\boldsymbol{\omega^*},\overline{\mathbf{V}}^*,\mathbf{\overline{W}}^*\}$ but admits a rank-one $\mathbf{\widetilde W}$.

It can be shown that the relaxed version of problem (\ref{eqn:cross-layer-t31-LMI}) is jointly convex with respect to the optimization variables and satisfies Slater's constraint qualification. As a result, the Karush-Kuhn-Tucker (KKT) conditions are necessary and sufficient conditions \cite{book:convex} for the optimal solution of the relaxed version of problem (\ref{eqn:cross-layer-t31-LMI}).   The Lagrangian function  of the relaxed version of problem (\ref{eqn:cross-layer-t31-LMI}) is given by
\begin{eqnarray}{\cal
L}&=&\hspace*{-3mm}  \Tr\Big(\big(\mathbf{I}_{N_{\mathrm{T}}}(\alpha+\mu)-\mathbf{Y}-\beta\mathbf{H}\big)\overline{\mathbf{W}}\Big)\notag\\
&-&\hspace*{-3mm}   \sum_{k=1}^{K-1}\Tr\Big( \mathbf{S}_{\mathrm{\overline{C2}}_k}\big(\overline{\mathbf{W}},
\overline{\mathbf{V}},\delta_k\big)\mathbf{D}_{\mathrm{\overline{C2}}_k}\Big) \notag\\
&-&\hspace*{-3mm}\sum_{j=1}^{J}\Tr\Big( \mathbf{S}_{\mathrm{\overline{C3}}_j}\big(\overline{\mathbf{W}},\overline{\mathbf{V}}, \xi,\gamma_j\big) \mathbf{D}_{\mathrm{\overline{C3}}_j}\Big)\notag\\
&-& \hspace*{-3mm}\sum_{j=1}^{J}\Tr\Big( \mathbf{S}_{\mathrm{\overline{C11}}_j}\big(\overline{\mathbf{W}},\overline{\mathbf{V}}, \xi,\omega_j\big)
 \mathbf{D}_{\mathrm{\overline{C11}}_j}\Big)\notag\\
\hspace*{-5mm}&-&\hspace*{-3mm}\sum_{k=1}^{K-1}\Tr\Big( \mathbf{S}_{\mathrm{\overline{C10}}_k}\big(\overline{\mathbf{W}},\overline{\mathbf{V}}, \xi,\varphi_k\big) \mathbf{D}_{\mathrm{\overline{C10}}_k}\Big)+\Omega,
\label{eqn:Lagrangian}
\end{eqnarray}
where $\Omega$ denotes the collection of the terms that only involve variables that are not relevant for the proof.  $\beta,\alpha
\ge 0$, and $\mu$ are the Lagrange multipliers associated with constraints $\mbox{\textoverline{C1}, \textoverline{C4}}$, and $\mbox{\textoverline{C7}}$, respectively. Matrix $\mathbf{Y}\succeq \mathbf{0}$ is the Lagrange multiplier matrix for the semidefinite constraint on matrix $\overline{\mathbf{W}}$ in $\mbox{\textoverline{C4}}$. $\mathbf{D}_{\mathrm{\overline{C2}}_k}\succeq \mathbf{0},\forall k\in\{1,\,\ldots,\,K-1\},$ and $\mathbf{D}_{\mathrm{\overline{C3}}_j}\succeq \mathbf{0},\forall j\in\{1,\,\ldots,\,J\}$, are
the Lagrange multiplier matrices for the maximum tolerable SINRs of the idle secondary receivers and the primary receivers  in $\mbox{\textoverline{C2}}$ and $\mbox{\textoverline{C3}}$, respectively.  $\mathbf{D}_{\mathrm{\overline{C10}}_k}\succeq \mathbf{0},\forall k\in\{1,\,\ldots,\,K-1\},$ and $\mathbf{D}_{\mathrm{\overline{C11}}_j}\succeq \mathbf{0},\forall j\in\{1,\,\ldots,\,J\}$, are
the Lagrange multiplier matrices associated with constraints $\mbox{\textoverline{C10}}$ and $\mbox{\textoverline{C11}}$, respectively.
In the following, we focus on the KKT conditions related to the optimal $\mathbf{\overline{W}}^*$:
\begin{eqnarray}\label{eqn:KKT}\mathbf{Y}^*,\mathbf{D}^*_{\mathrm{\overline{C2}}_k},\mathbf{D}^*_{\mathrm{\overline{C3}}_j},\mathbf{D}^*_{\mathrm{\overline{C10}}_k},
\mathbf{D}^*_{\mathrm{\overline{C11}}_j}
\hspace*{-2mm}&\succeq&\hspace*{-2mm} \mathbf{0},\alpha^*,\beta^*\ge 0,\mu^*,\\
 \mathbf{Y^*\overline{\mathbf{W}}^*}\hspace*{-2mm}&=&\hspace*{-2mm}\mathbf{0}, \label{eqn:KKT-complementarity}\\
\nabla_{\overline{\mathbf{W}}^*}{\cal
L}\hspace*{-2mm}&=&\hspace*{-2mm}\mathbf{0}, \label{eqn:KKT-gradient}
\end{eqnarray}
where $\mathbf{Y}^*,\mathbf{D}^*_{\mathrm{\overline{C2}}_k},\mathbf{D}^*_{\mathrm{\overline{C3}}_j},\mathbf{D}^*_{\mathrm{\overline{C10}}_k},
\mathbf{D}^*_{\mathrm{\overline{C11}}_j},\mu^*,\beta^*$, and $\alpha^*$ are the optimal Lagrange multipliers for the dual problem of (\ref{eqn:cross-layer-t31-LMI}). From the complementary slackness condition in (\ref{eqn:KKT-complementarity}),  we observe that the columns of $\overline{\mathbf{W}}^*$ are required to lie in the null space of $\mathbf{Y}^*$ for $\overline{\mathbf{W}}^*\ne \mathbf{0}$.  Thus, we study  the  composition of $\mathbf{Y}^*$ to obtain the structure of  $\overline{\mathbf{W}}^*$. The KKT condition in (\ref{eqn:KKT-gradient}) can be expressed as
\begin{eqnarray}\label{eqn:KKT-gradient-equivalent}
&&\mathbf{Y}^*+\beta^*\mathbf{H}\notag\\
&=&\mathbf{I}_{N_{\mathrm{T}}}(\mu^*+\alpha^*)+\sum_{k=1}^{K-1} \mathbf{U}_{\mathbf{g}_k}\Big(\frac{\mathbf{D}^*_{\mathrm{\overline{C2}}_k}}{\Gamma_{\mathrm{tol}_k}}-
\mathbf{D}_{\mathrm{\overline{C10}}_k}^*
\Big)\mathbf{U}_{\mathbf{g}_k}^H \notag\\
&+&\sum_{j=1}^J\mathbf{U}_{\mathbf{l}_j}\Big(\frac{\mathbf{D}^*_{\mathrm{\overline{C3}}_j}}
{\Gamma_{\mathrm{tol}_j}^{\mathrm{PU}}}+\mathbf{D}_{\mathrm{\overline{C11}}_j}^*
\Big)\mathbf{U}_{\mathbf{l}_j}^H.
\end{eqnarray}
For notational simplicity, we define
\begin{eqnarray}\label{eqn:A_matrix}
\mathbf{A}^*&=&\mathbf{I}_{N_{\mathrm{T}}}(\mu^*+\alpha^*)+\sum_{k=1}^{K-1} \mathbf{U}_{\mathbf{g}_k}\Big(\frac{\mathbf{D}^*_{\mathrm{\overline{C2}}_k}}{\Gamma_{\mathrm{tol}_k}}-\mathbf{D}_{\mathrm{\overline{C10}}_k}^*
\Big)\mathbf{U}_{\mathbf{g}_k}^H \notag\\ &+&\sum_{j=1}^J\mathbf{U}_{\mathbf{l}_j}\Big(\frac{\mathbf{D}^*_{\mathrm{\overline{C3}}_j}}{\Gamma_{\mathrm{tol}_j}^{\mathrm{PU}}}+\mathbf{D}_{\mathrm{\overline{C11}}_j}^*
\Big)\mathbf{U}_{\mathbf{l}_j}^H.
\end{eqnarray}
Besides,  there exists at least one optimal solution with $\beta^*>0$, i.e., constraint $\mbox{\textoverline{C1}}$ is satisfied with equality. Suppose that for the optimal solution,  constraint $\mbox{\textoverline{C1}}$ is satisfied with strict inequality, i.e., $\frac{\Tr(\mathbf{H}\overline{\mathbf{W}}^*)}{\Tr(\mathbf{H}
\overline{\mathbf{V}}^*)+\sigma_\mathrm{z}^2\xi^*} > \Gamma_{\mathrm{req}}$. Then, we can replace $\xi^*$ with  $\overline\xi^*=\xi^* c$ for $c>1$ such that $\mbox{\textoverline{C1}}$ is satisfied with equality. We note that the new solution $\overline\xi^*$ not only satisfies all the constraints, but also provides a larger feasible solution set for minimizing $\tau$, cf. constraints $\mbox{\textoverline{C4}}$ and $\mbox{\textoverline{C9}b}$. As a result, there always exist at least one optimal solution such that constraint $\mbox{\textoverline{C1}}$ is satisfied with equality. In order to obtain the optimal solution in practice, we can replace the inequality ``$\ge$"with equality ``$=$"  in $\mbox{\textoverline{C2}}$ without loss of optimality.
 From (\ref{eqn:KKT-gradient-equivalent}) and (\ref{eqn:A_matrix}), we can express the Lagrange multiplier matrix $\mathbf{Y}^*$ as
\begin{eqnarray}\label{eqn:Y}
\mathbf{Y}^*=\mathbf{A}^*-\beta^*\mathbf{H},
\end{eqnarray}
where $\beta^*\mathbf{H}$ is a rank-one matrix since $\beta^*>0$. Without loss of generality, we define $r=\Rank(\mathbf{A}^*)$ and the orthonormal basis of the null space of $\mathbf{A}^*$ as $\mathbf{\mathbf{\Upsilon}}\in\mathbb{C}^{N_{\mathrm{T}}\times (N_{\mathrm{T}}-r)}$ such that $\mathbf{A}^*\mathbf{\Upsilon}=\mathbf{0}$ and $\Rank(\mathbf{\Upsilon})=N_{\mathrm{T}}-r$. Let ${\boldsymbol \phi}_t\in \mathbb{C}^{N_{\mathrm{T}}\times 1}$, $1\le t\le N_{\mathrm{T}}-r$, denote the $t$-th column  of $\mathbf{\Upsilon}$. By exploiting \cite[Proposition 4.1]{JR:rui_zhang},  it can be shown that $\mathbf{H}\mathbf{\Upsilon}=\mathbf{0}$ and we can express the optimal solution of  $\overline{\mathbf{W}}^*$  as
\begin{eqnarray}\label{eqn:general_structure}
\overline{\mathbf{W}}^*=\sum_{t=1}^{N_{\mathrm{T}}-r} \psi_t {\boldsymbol \phi}_t {\boldsymbol \phi}_t^H  + \underbrace{f\mathbf{u}\mathbf{u}^H}_{\mbox{Rank-one}},
\end{eqnarray}
where $\psi_t\ge0, \forall t\in\{1,\ldots,N_{\mathrm{T}}-r\},$ and $f>0$ are positive scalars and $\mathbf{u}\in \mathbb{C}^{N_{\mathrm{T}}\times 1}$, $\norm{\mathbf{u}}=1$, satisfies $\mathbf{u}^H\mathbf{\Upsilon}=\mathbf{0}$.

In the second part of the proof, for $\Rank(\overline{\mathbf{W}}^*)>1$,  we construct another solution $\mathbf{\widetilde \Lambda}\triangleq\{\mathbf{\widetilde I}^{\mathrm{PU}},\mathbf{\widetilde E}^{\mathrm{SU}}, \widetilde \xi,\widetilde\tau,$ $\boldsymbol{\widetilde \gamma,\widetilde\delta,\widetilde\varphi,\widetilde\omega},{\mathbf{\widetilde V}},\mathbf{{ \widetilde W}}\}$ based on  (\ref{eqn:general_structure}).
Let
\begin{eqnarray}\label{eqn:rank-one-structure}\mathbf{\widetilde W}\hspace*{-2mm}&=&\hspace*{-2mm}f\mathbf{u}\mathbf{u}^H=\overline{\mathbf{W}}^*-\sum_{t=1}^{N_{\mathrm{T}}-r} \psi_t {\boldsymbol \phi}_t {\boldsymbol \phi}_t^H,\\
 \mathbf{\widetilde V}&=&\overline{\mathbf{ V}}^*+\sum_{t=1}^{N_{\mathrm{T}}-r} \psi_t {\boldsymbol \phi}_t {\boldsymbol \phi}_t^H,\\
\mathbf{\widetilde I}^{\mathrm{PU}}\hspace*{-2mm}&=&\hspace*{-2mm}\mathbf{ I}^{\mathrm{PU}*},\,\,\mathbf{\widetilde E}^{\mathrm{SU}}=\mathbf{ E}^{\mathrm{SU}*},\,\, \widetilde\xi=\xi^*,\,\,\\
\widetilde\tau\hspace*{-2mm}&=&\hspace*{-2mm}\tau^*,\,\,\boldsymbol{\widetilde\gamma}=\boldsymbol{\gamma}^*,\,\,
\boldsymbol{\widetilde\delta}=\boldsymbol{\delta}^*,\,\,
\boldsymbol{\widetilde\varphi}=\boldsymbol{\widetilde\varphi}^*,\,\,\boldsymbol{\widetilde\omega}=\boldsymbol{\omega}^*.
\end{eqnarray}
Then, we substitute the constructed solution $\mathbf{\widetilde \Lambda}$ into the objective function and the constraints in (\ref{eqn:cross-layer-t31-LMI}) which yields  the equations in \eqref{eqn:equivalent_objective} on the top of next page.

\begin{figure*}[ht]\setcounter{mytempeqncnt}{\value{equation}}
\setcounter{equation}{55}
\begin{eqnarray}\label{eqn:equivalent_objective}
 &&\hspace*{-5mm}\mbox{Objective value:}\quad \widetilde\tau=\tau^* \\
 \label{eqn:C1_new_solution}
 \mbox{\textoverline{C1}:}&&\hspace*{-5mm} \frac{\Tr(\mathbf{\mathbf{H}\widetilde W})}{\Tr(\mathbf{H}\mathbf{\widetilde V})+\widetilde \xi\sigma_{\mathrm{z}}^2}= \frac{\Tr\big(\mathbf{H}(\overline{\mathbf{W}}^*-\sum_{t=1}^{N_{\mathrm{T}}-r} \psi_t {\boldsymbol \phi}_t {\boldsymbol \phi}_t^H)\big)}{\Tr(\mathbf{H}(\overline{\mathbf{ V}}^*+\sum_{t=1}^{N_{\mathrm{T}}-r} \psi_t {\boldsymbol \phi}_t {\boldsymbol \phi}_t^H))+ \xi^*\sigma_{\mathrm{z}}^2}=  \frac{\Tr(\overline{\mathbf{W}}^*\mathbf{H})}{\Tr(\mathbf{H}\overline{\mathbf{ V}}^*)+\xi^*\sigma_{\mathrm{z}}^2}\ge  \Gamma_{\mathrm{req}},\notag\\
  \mbox{\textoverline{C2}:}\notag&&\hspace*{-5mm} \mathbf{S}_{\mathrm{\overline{C2}}_k}(\mathbf{\widetilde W},\mathbf{\widetilde V},\widetilde\xi,\widetilde\delta_k)\succeq\mathbf{S}_{\mathrm{\overline{C2}}_k}(\overline{\mathbf{W}}^*,\overline{\mathbf{V}}^*, \xi^*,\delta_k^*)\notag\\
+&& \notag \hspace*{-5mm}\mathbf{U}_{\mathbf{g}_k}^H\Big[\sum_{t=1}^{N_{\mathrm{T}}-r} \psi_t {\boldsymbol \phi}_t {\boldsymbol \phi}_t^H \Big] \mathbf{U}_{\mathbf{g}_k}\Big(1+\frac{1}{\Gamma_{\mathrm{tol}_k}}\Big) \succeq \mathbf{0},\forall k\in\{1,\ldots,K-1\},\\
 \mbox{\textoverline{C3}:}\notag&&\hspace*{-5mm} \mathbf{S}_{\mathrm{\overline{C3}}_j}(\mathbf{\widetilde W},\mathbf{\widetilde V},\widetilde\xi,\widetilde\gamma_j)\succeq\mathbf{S}_{\mathrm{\overline{C3}}_j}(\overline{\mathbf{W}}^*,\overline{\mathbf{V}}^*, \xi^*,\gamma_j^*)\\
+&&  \hspace*{-5mm}\mathbf{U}_{\mathbf{l}_j}^H\Big[\sum_{t=1}^{N_{\mathrm{T}}-r} \psi_t{\boldsymbol \phi}_t {\boldsymbol \phi}_t^H \Big] \mathbf{U}_{\mathbf{l}_j}\Big(1+\frac{1}{\Gamma_{\mathrm{tol}_j}^{\mathrm{PU}}}\Big) \succeq \mathbf{0},\forall j\in\{1,\ldots,J\},\notag\\
\mbox{\textoverline{C4}:}&&\hspace*{-5mm}\Tr({\mathbf{\widetilde W}})  +\Tr({\mathbf{\widetilde V}})=\Tr(\overline{\mathbf{W}}^*)  +\Tr(\overline{\mathbf{V}}^*)\le P_{\max}\widetilde\xi,\notag \\
\mbox{\textoverline{C5}:}&&\hspace*{-5mm} \mathbf{\widetilde{W}},\mathbf{\widetilde{V}}\succeq \mathbf{0},\quad\mbox{\textoverline{C6}:}\,\, \widetilde\xi \ge 0, \notag\quad
\mbox{\textoverline{C7}:}\Tr({\mathbf{\widetilde W}})  +\Tr({\mathbf{\widetilde V}})=\Tr(\overline{\mathbf{W}}^*)  +\Tr(\overline{\mathbf{V}}^*)= 1,\notag \\
\mbox{\textoverline{C9}a:}&&\hspace*{-5mm}\frac{\lambda_1}{\abs{F_1^*}} \Big(\sum_{k=1}^{K-1}\widetilde E_{k}^{\mathrm{SU}}\hspace*{-0.5mm}-\hspace*{-0.5mm}F_1^*\Big)\le \widetilde\tau,\,\,\mbox{\textoverline{C9}b: }\frac{\lambda_2} {\abs{F_2^*}} \Big(\frac{1}{\widetilde\xi}\hspace*{-0.5mm}-\hspace*{-0.5mm}F_2^*\Big)\le \widetilde\tau,\,\, \mbox{\textoverline{C9}c: }\frac{\lambda_3}{\abs{F_3^*}}  \Big(\sum_{j=1}^{J}\widetilde I_{j}^{\mathrm{PU}}\hspace*{-0.5mm}-\hspace*{-0.5mm}F_3^*\Big)\le \widetilde\tau,\notag\\
\mbox{\textoverline{C10}:}&&\hspace*{-5mm}\mathbf{S}_{\mathrm{\overline{C10}}_k}({\mathbf{\widetilde W}},{\mathbf{\widetilde V}}, \widetilde E_{k}^{\mathrm{SU}},\widetilde \varphi_k)=\mathbf{S}_{\mathrm{\overline{C10}}_k}(\overline{\mathbf{W}}^*,\overline{\mathbf{V}}^*,  E_{k}^{\mathrm{SU}*}, \varphi_k^*)\succeq \mathbf{0},\forall k,\notag\\
\mbox{\textoverline{C11}:}&&\hspace*{-5mm}\mathbf{S}_{\mathrm{\overline{C11}}_j}({\mathbf{\widetilde W}},{\mathbf{\widetilde V}}, \widetilde I_{j}^{\mathrm{PU}},\widetilde\omega_j)=\mathbf{S}_{\mathrm{\overline{C11}}_j}(\overline{\mathbf{W}}^*,\overline{\mathbf{V}}^*, I_{j}^{\mathrm{PU}*},\omega_j^*)\succeq \mathbf{0},\forall j,\notag\\
&&\hspace*{-1.3cm}\mbox{\textoverline{C12}: } \widetilde\delta_k\ge 0,\forall k,\,\, \,\,\mbox{\textoverline{C13}: } \widetilde\gamma_j\ge 0,\forall j,\,\, \,\,\mbox{\textoverline{C14}: } \widetilde\varphi_k\ge 0,\forall k,\,\, \,\, \mbox{\textoverline{C15}: } \widetilde\omega_j\ge 0,\forall j.\label{eqn:C10_new_solution}\notag
\end{eqnarray}\hrulefill
\end{figure*}
It can be seen from (\ref{eqn:equivalent_objective}) that the constructed solution set $\mathbf{\widetilde \Lambda}$ achieves the same optimal value as the optimal solution  $\mathbf{\Lambda}^*$ while satisfying all the constraints. Thus, $\mathbf{\widetilde \Lambda}$ is also an optimal solution of (\ref{eqn:cross-layer-t31-LMI}). Besides,  the constructed beamforming matrix $\mathbf{\widetilde W}$ is a rank-one matrix, i.e.,     $\Rank(\mathbf{\widetilde W})=1$. On the other hand, we can obtain the values of $f$ and $\psi_t$ in (\ref{eqn:rank-one-structure}) by substituting the variables in (\ref{eqn:rank-one-structure}) into  the relaxed version of (\ref{eqn:cross-layer-t31-LMI}) and solving the resulting convex optimization problem for $f$ and $\psi_t$.

\end{document}